\documentclass[draftcls,onecolumn]{IEEEtran}

\usepackage{cite}
\usepackage{amsmath,amssymb,amsfonts}
\usepackage{dsfont}
\usepackage{algorithmic}
\usepackage{graphicx}
\usepackage{textcomp}
\usepackage{xcolor}
\usepackage{enumitem}
\usepackage{booktabs,pifont} 

\usepackage{mathtools}

\usepackage{amsthm}
\usepackage{arydshln}
\usepackage{multirow}
\usepackage{calc}
\usepackage{blkarray}
\usepackage{url}
\theoremstyle{definition}
\newtheorem{theorem}{Theorem}

\newtheorem{lemma}{Lemma}
\newtheorem{claim}{Claim}

\newtheorem{example}{Example}
\newtheorem{remark}{Remark}
\newtheorem{definition}{Definition}
\usepackage{graphicx,cite}
\usepackage{dblfloatfix}
\usepackage[justification=centering]{caption}
\usepackage{blindtext, graphicx, amsfonts,
	amssymb,multirow,epstopdf}
\usepackage{algorithm}
\usepackage{algorithmic}

\usepackage{hyperref}
\usepackage{comment}


\newcommand{\bfG}{\mathbf{G}}
\newcommand{\bfR}{\mathbf{R}}
\newcommand{\bfP}{\mathbf{P}}

\newcommand{\bfI}{\mathbf{I}}
\newcommand{\bfm}{\mathbf{m}}
\newcommand{\bfs}{\mathbf{s}}
\newcommand{\bft}{\mathbf{t}}
\newcommand{\bfc}{\mathbf{c}}
\newcommand{\bfQ}{\mathbf{Q}}

\newcommand{\bfD}{\mathbf{D}}
\newcommand{\bfA}{\mathbf{A}}
\newcommand{\bfB}{\mathbf{B}}
\newcommand{\bfC}{\mathbf{C}}
\newcommand{\bfx}{\mathbf{x}}
\newcommand{\bfy}{\mathbf{y}}
\newcommand{\bfz}{\mathbf{z}}
\newcommand{\bfW}{\mathbf{W}}
\newcommand{\bfM}{\mathbf{M}}

\newcommand{\bfZ}{\mathbf{Z}}

\newcommand{\bfV}{\mathbf{V}}

\newcommand{\bfX}{\mathbf{X}}

\newcommand{\bfg}{\mathbf{g}}
\newcommand{\calF}{\mathcal{F}}

\newcommand{\calM}{\mathcal{M}}
\newcommand{\calI}{\mathcal{I}}



\title{Numerically stable coded matrix computations via circulant and rotation matrix embeddings}
\author{
	\IEEEauthorblockN{Aditya Ramamoorthy and Li Tang} \\
	\IEEEauthorblockA{Department of Electrical and Computer Engineering \\
	Iowa State University, Ames, IA 50011 \\
	\texttt{\{adityar,litang\}@iastate.edu}
    }
\thanks{
This work was supported in part by the National Science
Foundation (NSF) under Grant CCF-1718470 and Grant CCF-1910840. The material in this work will appear in part at the 2021 IEEE International Symposium on Information Theory.}
}

\date{}
\begin{document}

\maketitle

\begin{abstract}
	Polynomial based methods have recently been used in several works for mitigating the effect of stragglers (slow or failed nodes) in distributed matrix computations. For a system with $n$ worker nodes where $s$ can be stragglers, these approaches allow for an optimal recovery threshold, whereby the intended result can be decoded as long as any $(n-s)$ worker nodes complete their tasks. However, they suffer from serious numerical issues owing to the condition number of the corresponding real Vandermonde-structured recovery matrices; this condition number grows exponentially in $n$. We present a novel approach that leverages the properties of circulant permutation matrices and rotation matrices for coded matrix computation. In addition to having an optimal recovery threshold, we demonstrate an upper bound on the worst-case condition number of our recovery matrices which grows as $\approx O(n^{s+5.5})$; in the practical scenario where $s$ is a constant, this grows polynomially in $n$. Our schemes leverage the well-behaved conditioning of complex Vandermonde matrices with parameters on the complex unit circle, while still working with computation over the reals. Exhaustive experimental results demonstrate that our proposed method has condition numbers that are orders of magnitude lower than prior work.
\end{abstract}

\section{Introduction}
Present day computing needs necessitate the usage of large computation clusters that regularly process huge amounts of data on a regular basis. In several of the relevant application domains such as machine learning, datasets are often so large that they cannot even be stored in the disk of a single server. Thus, both storage and computational speed limitations require the computation to be spread over several worker nodes. Such large scale clusters also present attendant operational challenges. These clusters (which can be heterogeneous in nature) suffer from the problem of ``stragglers'', which are defined as slow nodes (node failures are an extreme form of a straggler). The overall speed of a computational job on these clusters is typically dominated by stragglers in the absence of a sophisticated assignment of tasks to the worker nodes. In particular, simply creating multiple copies of a task to protect against worker node failure can be rather wasteful of computational resources.

In recent years, approaches based on coding theory (referred to as ``coded computation") have been effectively used for straggler mitigation. Coded computation offers significant benefits for specific classes of problems such as matrix computations. The essential idea is to create redundant tasks so that the desired result can be recovered as long as a certain number of worker nodes complete their tasks. For instance, suppose that a designated master node wants to compute $\bfA^T \bfx$ where the matrix $\bfA$ is very large. It can decompose $\bfA$ into block-columns so that $\bfA = [\bfA_0 ~ \bfA_1]$ and assign three worker nodes the tasks of determining $\bfA^T_0 \bfx$, $\bfA^T_1 \bfx$ and $(\bfA^T_0 + \bfA^T_1) \bfx$ respectively. It is easy to see that even if one worker node fails, there is enough information for the master node to compute the final result \cite{lee2018speeding}. Thus, the core idea is to introduce redundancy within the distributed computation by coding across submatrices of the input matrices $\bfA$ and $\bfB$. The worker nodes are assigned computational tasks, such that the master node can decode $\bfA^T \bfB$ as long as a certain minimum number of the worker nodes complete their tasks.

There have been several works, that have exploited the correspondence of coded computation with erasure codes (see \cite{ramamoorthyDTMag20} for a tutorial introduction and relevant references). The matrix computation is embedded into the structure of an underlying erasure code and stragglers are treated as erasures. A scheme is said to have a threshold $\tau$ if the master node can decode the intended result (matrix-vector or matrix-matrix multiplication) as long any $\tau$ nodes complete their tasks. The work of \cite{yu2018straggler,dutta2019optimal} has investigated the tradeoff between the threshold and the tasks assigned to the worker nodes. We discuss related work in more detail in the upcoming Section \ref{sec:background}.

In this work we examine coded computation from the perspective of numerical stability. Erasure coding typically works with operations over finite fields. Solving a linear system of equation over a finite field only requires the corresponding system to be full-rank. However, when operating over the real field, a numerically robust solution can only be obtained if the condition number (ratio of maximum to minimum singular value) \cite{Higham} of the system of the equations is small. It turns out that several of the well-known coded computation schemes that work by polynomial evaluation/interpolation have serious numerical stability issues owing to the high condition number of corresponding real Vandermonde system of equations.  In this work, we present a scheme that leverages the proporties of structured matrices such as circulant permutation matrices and rotation matrices for coded computation. These matrices have eigenvalues that lie on the complex unit circle. Our scheme allows us to exploit the significantly better behaved conditioning of complex Vandermonde matrices while still working with computation over the reals. We also present exhaustive comparisons with existing work.

This paper is organized as follow. Section \ref{sec:prob_form} presents the problem formulation and Section \ref{sec:background} overviews related work and summarizes our contributions. Section \ref{sec:struc_matrices} and \ref{sec:gDMM} discuss our proposed schemes, while Section \ref{sec:comps} presents numerical experiments and comparisons with existing approaches. Section \ref{sec:conclusions} concludes the paper with a discussion of future work. Several of our proofs appear in the Appendix.
\section{Problem Formulation}
\label{sec:prob_form}
Consider a scenario where the master node has a large $t \times r$ matrix $\bfA \in \mathbb{R}^{t \times r}$ and either a $t \times 1$ vector $\bfx \in \mathbb{R}^{t \times 1}$ or a $t \times w$ matrix $\bfB \in \mathbb{R}^{t \times w}$. The master node wishes to compute $\bfA^T \bfx$ or $\bfA^T \bfB$ in a distributed manner over $n$ worker nodes in the matrix-vector and matrix-matrix setting respectively. Towards this end, the master node partitions $\bfA$ (respectively $\bfB$) into $\Delta_A$ (respectively $\Delta_B$) block-columns. Each worker node is assigned $\delta_A \leq \Delta_A$ and $\delta_B \leq \Delta_B$ linearly encoded block-columns of $\bfA_0, \dots, \bfA_{\Delta_A - 1}$ and $\bfB_0, \dots, \bfB_{\Delta_B - 1}$, so that $\delta_A/\Delta_A \leq \gamma_A$ and $\delta_B/\Delta_B \leq \gamma_B$, where $\gamma_A$ and $\gamma_B$ represent the storage fraction constraints for $\bfA$ and $\bfB$ respectively.

In the matrix-vector case, the $i$-th worker is assigned encoded submatrices of $\bfA$ and the vector $\bfx$ and computes their inner product. In the matrix-matrix case it computes pairwise products of submatrices assigned to it (either all or some subset thereof). We say that a given scheme has {\it computation threshold} $\tau$ if the master node can decode the intended result as long as {\it any} $\tau$ out of $n$ worker nodes complete their tasks. In this case we say that the scheme is resilient to $s = n -\tau$ stragglers. We say that this threshold is {\it optimal} if the value of $\tau$ is the smallest possible for the given storage capacity constraints.

The overall goal is to (i) design schemes that are resilient to $s$ stragglers ($s$ is a design parameter), while ensuring that the (ii) desired result can be decoded in a efficient manner, and (iii) the decoded result is numerically robust even in the presence of round-off errors and other sources of noise.

An analysis of numerical stability is closely related to the condition number of matrices.
Let $||\bfM||$ denote the maximum singular value of a matrix $\bfM$ of dimension $l \times l$.
\begin{definition} {\it Condition number.} The condition number of a $l \times l$ matrix $\bfM$ is defined as $\kappa(\bfM) = ||\bfM|| ||\bfM^{-1}||$. It is infinite if the minimum singular value of $\bfM$ is zero.
\end{definition}
Consider the system of equations $\bfM \bfy = \bfz$, where $\bfz$ is known and $\bfy$ is to be determined. If $\kappa(\bfM) \approx 10^{b}$, then the decoded result loses approximately $b$ digits of precision \cite{Higham}. In particular, matrices that are ill-conditioned lead to significant numerical problems when solving linear equations.

\section{Background, Related Work and Summary of Contributions}
\label{sec:background}

A significant amount of prior work \cite{yu2017polynomial,yu2018straggler,duttaCG16,dutta2019optimal} has demonstrated interesting and elegant approaches based on embedding the distributed matrix computation into the structure of polynomials. Specifically, the encoding at the master node can be viewed as evaluating certain polynomials at distinct real values. Each worker node gets a particular evaluation. When at least $\tau$ workers finish their tasks, the master node can decode the intended result by performing polynomial interpolation. The work of \cite{yu2017polynomial} demonstrates that when $\bfA$ and $\bfB$ are split column-wise and $\delta_A = \delta_B = 1$, the optimal threshold for matrix multiplication is $\Delta_A \Delta_B$ and that polynomial based approaches (henceforth referred to as polynomial codes) achieve this threshold. Prior work has also considered other ways in which the matrices $\bfA$ and $\bfB$ can be partitioned. For instance, they can be partitioned both along rows and columns. The work of \cite{yu2018straggler,dutta2019optimal} has obtained threshold results in those cases as well. The so called Entangled Polynomial and Mat-Dot codes \cite{yu2018straggler,dutta2019optimal}, also use polynomial encodings. The key point is that in all these approaches, polynomial interpolation is required when decoding the required result. We note here that to our best knowledge, the idea of embedding matrix multiplication using polynomial maps goes
back much further to Yagle \cite{yagle1995fast} (the motivation there was fast
matrix multiplication).

Polynomial interpolation corresponds to solving a real {\it Vandermonde} system of equations at the master node. In the work of \cite{yu2017polynomial}, this would require solving a $\Delta_A \Delta_B \times \Delta_A \Delta_B$ Vandermonde system. Unfortunately, it can be shown that the condition number of these matrices grows exponentially in $\Delta_A \Delta_B$ \cite{Pan16}. This is a significant drawback and even for systems with around $\Delta_A \Delta_B \approx 30$, the condition number is so high that the decoded results are essentially useless (see Section \ref{sec:comps}).

In Section VII of \cite{yu2018straggler}, it is remarked that when operating over infinite fields such as the reals, one can embed the computation into finite fields to avoid numerical errors. They advocate encoding and decoding over a large enough finite field of prime order $p$. However, this method would require ``quantizing'' real matrices $\bfA$ and $\bfB$ so that the entries are integers. We demonstrate that the performance of this method can be catastrophically bad. In particular, for this method to work, the maximum possible absolute value of each entry of the quantized matrices, $\alpha$ should be such that $\alpha^{2} t < p$, since each entry in the result corresponds to the inner product of columns of $\bfA$ and columns of $\bfB$. This “dynamic range constraint (DRC)” means that the error in the computation depends strongly on the actual matrix entries and the value of $t$ is quite limited. If the DRC is violated, the error in the underlying computation can be catastrophic. Even if the DRC is not violated, the dependence of the error on the entries can make it very bad. We discuss this issue in detail in Section \ref{sec:comps}.

The issue of numerical stability in the coded computation context has been considered in a few recent works \cite{tang2018bound, RamamoorthyTV19,DasTR18,DasR19,DasRV19,FahimC19,subramaniam2019random, das2020coded}. The work of \cite{RamamoorthyTV19,DasR19} presented strategies for distributed matrix-vector multiplication and demonstrated some schemes that empirically have better numerical performance than polynomial based schemes for some values of $n$ and $s$. However, both these approaches work only for the matrix-vector problem. Reference \cite{DasRV19} presents a random convolutional coding approach that applies for both the matrix-vector and the matrix-matrix multiplications problems. Their work demonstrates a computable upper bound on the worst-case condition number of the decoding matrices by drawing on connections with the asymptotic analysis of large Toeplitz matrices. The recent preprint \cite{subramaniam2019random} presents constructions that are based on random linear coding ideas where the encoding coefficients are chosen at random from a continuous distribution. These exhibit better condition number properties.

Reference \cite{FahimC19} which considers an alternative approach for polynomial based schemes by working within the basis of orthogonal polynomials is most closely related to our work. It demonstrates an upper bound on the worst-case condition number of the decoding matrices which grows as $O(n^{2s})$ where $s$ is the number of stragglers that the scheme is resilient to. They also demonstrate experimentally that their performance is better than the polynomial code approach. In contrast we demonstrate an upper bound that is $\approx O(n^{s+5.5})$. Furthermore, in Section \ref{sec:comps} we show that in numerical experiments our worst-case condition numbers are much better than \cite{FahimC19} (even when $s \leq 6$). 

\newcommand{\cmark}{\ding{51}}%
\newcommand{\xmark}{\ding{55}}

\begin{table*}[t]
\caption{{\small Comparison with existing schemes in the literature. The last column indicates the known analytical results about the worst-case condition number of the corresponding recovery matrices. The abbreviations M-V and M-M in the last four rows refer to matrix-vector and matrix-matrix multiplication, respectively. For the M-V cases only the storage fraction $\gamma_A$ is relevant. For the circulant embedding $\tilde{q}$ needs to be prime. The constant $c_1 = 5.5$.}} 
\label{table:comps_summary}

\begin{center}
\begin{small}
\begin{sc}
\begin{tabular}{|p{2.5cm}|p{2cm}|p{3cm}|p{3.5cm}|p{3.5cm}|}
\hline
Code & Storage Fraction $(\gamma_A,\gamma_B)$ & Matrix Split & Threshold ($\tau$) & Condition number\\ \hline
Polynomial \cite{yu2017polynomial} & $1/k_A, 1/k_B$  & Column-wise & $k_A k_B$ & $\geq \Omega(e^\tau)$ \\ \hline
Ent. Polynomial \cite{yu2018straggler} & $1/pk_A, 1/pk_B$  & Row and Column-wise & $p k_A k_B + p - 1$ &  $\geq \Omega(e^\tau)$ \\ \hline
Ortho-Poly\cite{FahimC19} & $1/k_A, 1/k_B$ & Column-wise & $k_A k_B$ & $\leq O(n^{2(n-\tau)})$ \\ \hline
Ortho-Poly \cite{FahimC19} & $1/pk_A, 1/pk_B$ & Row and Column-wise &  $4k_Ak_Bp-2(k_Ak_B+pk_A+pk_B)+k_A+k_B+2p-1$ & $\leq O(n^{2(n-\tau)})$ \\ \hline
Convol. \cite{DasRV19} & $1/k_A, 1/k_B$ & Column-wise & $k_A k_B$ & Computable upper bound \\ \hline
RKRP \cite{subramaniam2019random} & $1/k_A, 1/k_B$ & Column-wise & $k_A k_B$ & Analytical upper bound unknown  \\ \hline
Rot. Embed. (M-V) & $1/k_A$ & Column-wise & $k_A$ & $O(n^{n-\tau + c_1} )$\\ \hline
Circ Embed. (M-V) & $\tilde{q}/k_A(\tilde{q}-1)$ & Column-wise & $k_A$ & $O(n^{n-\tau + c_1})$\\ \hline
Rot. Embed. (M-M)& $1/k_A, 1/k_B$ & Column-wise & $k_A k_B$ & $\leq O(n^{n-\tau + c_1})$ \\ \hline
Rot. Embed. (M-M) & $1/pk_A, 1/pk_B$ & Row and Column-wise & $2pk_A k_B - 1$ & $\leq O(n^{n-\tau + c_1})$ \\ \hline
\bottomrule
\end{tabular}
\end{sc}
\end{small}
\end{center}
\end{table*}%

\subsection{Summary of contributions}
The work of \cite{Pan16} shows that unless all (or almost all) the parameters of the Vandermonde matrix lie on the unit circle, its condition number is badly behaved. However, most of these parameters are complex-valued (except $\pm 1$), whereas our matrices $\bfA$ and $\bfB$ are real-valued. Using complex evaluation points in the polynomial code scheme, will increase the cost of computations approximately four times for matrix-matrix multiplication and around two times for matrix-vector multiplication. This is an unacceptable hit in computation time.

The main idea of our work is to consider alternate embeddings of distributed matrix computations that are based on rotation and circulant permutation matrices. We demonstrate that these are significantly better behaved from a numerical stability perspective. Furthermore, the worker nodes only work with real computation, thus our method does not incur the complex arithmetic overhead.

\begin{itemize}
	\item Our main finding in this paper is that we can work with matrix embeddings that allow the worker nodes to perform real-valued computation. Our scheme (i) continues to have the {\it optimal} threshold of polynomial based approaches when the storage fractions are $\frac{1}{k_A}$ and $\frac{1}{k_B}$ and (ii) enjoys the low condition number of complex Vandermonde matrices with all parameters on the unit circle. In particular, we demonstrate that rotation matrices and circulant permutation matrices of appropriate sizes can be used within the framework of polynomial codes. At the top level, instead of evaluating polynomials at real values, our approach evaluates the polynomials at matrices.
	\item Using these embeddings we show that the worst-case condition number over all $\binom{n}{n-s}$ possible recovery matrices is upper bounded by $\approx O(n^{s+5.5})$. Furthermore, our experimental results indicate that the actual values are significantly smaller, i.e., the analytical upper bounds are pessimistic.
    \item An exhaustive numerical comparison with other approaches in the literature shows that the numerical stability of our scheme is currently the best known.
\end{itemize}
Table \ref{table:comps_summary} contains a comparison of our work with other schemes in the literature. The columns indicate the corresponding storage fractions, matrix splitting methods, threshold and bounds on the condition number.

\section{Numerically Stable Distributed Matrix Computation Schemes}
\label{sec:struc_matrices}
Our schemes in this work will be defined by the encoding matrices used by the master node, which are such that the master node only needs to perform scalar multiplications and additions. The computationally intensive tasks, i.e.,  matrix operations are performed by the worker nodes. We begin by defining certain classes of matrices, discuss their relevant properties and present an example that outlines the basic idea of our work.

In what follows, we let $\rm{i} = \sqrt{-1}$ and let $[m]$ denote the set $\{0, \dots, m-1\}$. For a matrix $\bfM$, $\bfM(i,j)$ denotes its $(i,j)$-th entry, whereas $\bfM_{i,j}$ denotes the $(i,j)$-th block sub-matrix of $\bfM$. We use MATLAB inspired notation at certain places. For instance, $\text{diag}(a_1, a_2, \dots, a_m)$ denotes a $m \times m$ diagonal matrix with $a_i$'s on the diagonal and $\bfM(:,j)$ denotes the $j$-th column of matrix $\bfM$. The notation $\bfM_1 \otimes \bfM_2$ denotes the Kronecker product of $\bfM_1$ and $\bfM_2$ and the superscript $*$ for a matrix denotes the complex conjugation operator.

\begin{definition}{\it Rotation matrix.} The $2 \times 2$ matrix $\bfR_\theta$ below is called a rotation matrix.
	\begin{align}
		\bfR_\theta &= \begin{bmatrix}
			\cos \theta &-\sin \theta\\
			\sin \theta &\cos \theta
		\end{bmatrix}
		= \bfQ \Lambda \bfQ^*, \text{~where} \label{eq:rotmat_eig}\\
		\bfQ &= \frac{1}{\sqrt{2}}\begin{bmatrix}
			\rm{i} &-\rm{i}\\
			1 & 1
		\end{bmatrix}, \text{~and~}
		\Lambda = \begin{bmatrix}
			e^{\rm{i} \theta} & 0\\
			0 & e^{-\rm{i} \theta}
		\end{bmatrix}. \label{eq:Lambda_spec}
	\end{align}
\end{definition}

\begin{definition} {\it Circulant Permutation Matrix.}
\label{defn:circ_perm}
Let $\mathbf{e}$ be a row vector of length $m$ with $\mathbf{e} = [0 ~ 1 ~ 0 ~ \dots ~ 0]$. Let $\bfP$ be a $m \times m$ matrix with $\mathbf{e}$ as its first row. The remaining rows are obtained by cyclicly shifting the first row with the shift index equal to the row index. Then $\bfP^i, i \in [m]$ are said to be circulant permutation matrices. Let $\bfW$ denote the $m$-point Discrete Fourier Transform (DFT) matrix, i.e., $\bfW(i,j)=\frac{1}{\sqrt{m}}\omega_m^{ij}$ for $i \in [m], j \in [m]$ where $\omega_m = e^{\rm{i} \frac{2\pi}{m}}$ denotes the $m$-th root of unity. Then, it can be shown \cite{GrayTCM06} that $\bfP = \bfW \text{diag}(1, \omega_m, \omega_m^{2}, \dots, \omega_m^{(m-1)}) \bfW^{*}$.
\end{definition}

\begin{example}
\label{egs:circ_perm}
	For $m = 4$, the four possible circulation permutation matrices are
	\begin{align*}
	\bfP &= \begin{bmatrix}
	0&1&0&0\\
	0&0&1&0\\
	0&0&0&1\\
	1&0&0&0
	\end{bmatrix},
	\bfP^0=\bfI_4=\begin{bmatrix}
	1&0&0&0\\
	0&1&0&0\\
	0&0&1&0\\
	0&0&0&1
	\end{bmatrix},\\
	\bfP^2 &= \begin{bmatrix}
	0&0&1&0\\
	0&0&0&1\\
	1&0&0&0\\
	0&1&0&0
	\end{bmatrix},
	\bfP^3=\begin{bmatrix}
	0&0&0&1\\
	1&0&0&0\\
	0&1&0&0\\
	0&0&1&0
	\end{bmatrix}.
	\end{align*}
\end{example}

\begin{remark}
Rotation matrices and circulant permutation matrices have the useful property that they are ``real'' matrices with complex eigenvalues that lie on the unit circle. We use this property extensively in the sequel.
\end{remark}

\begin{definition}{\it Vandermonde Matrix.}
	A $m \times m$ Vandermonde matrix $\bfV$ with parameters $s_0, s_1, \dots, s_{m-1} \in \mathbb{C}$ is such that $\bfV(i,j) = s_{j}^{i}, i \in [m], j \in [m]$.
	If the $s_i$'s are distinct, then $\bfV$ is nonsingular \cite{hornJ91}. In this work, we will also assume that the $s_i$'s are non-zero.
\end{definition}

\noindent {\bf Condition Number of Vandermonde Matrices:} Let $\bfV$ be a $m \times m$ Vandermonde matrix with parameters $s_0, s_1, \dots, s_{m-1}$. The following facts about $\kappa(\bfV)$ follow from prior work \cite{Pan16}.

\begin{itemize}[wide, labelwidth=!, labelindent=0pt]
	\item {\it Real Vandermonde matrices.} If $s_i \in \mathbb{R}, i \in [m]$, i.e., if $\bfV$ is a real Vandermonde matrix, then it is known that its condition number is exponential in $m$.
	\item {\it Complex Vandermonde matrices with parameters ``not'' on the unit circle.} Suppose that the $s_i$'s are complex and let $s_{+} = \max_{i=0}^{m-1} |s_i|$. If $s_{+} > 1$ then $\kappa(\bfV)$ is exponential in $m$. Furthermore, if $1/|s_i| \geq \nu > 1$ for at least $\beta \leq m$ of the $m$ parameters, then $\kappa(\bfV)$ is exponential in $\beta$.
\end{itemize}
Based on the above facts, the only scenario where the condition number is somewhat well-behaved is if most or all of the parameters of $\bfV$ are complex and lie on the unit-circle.
In the Appendix \ref{sec:proof_vand_cond_no}, we show the following result which is one of our key technical contributions.
\begin{theorem}
	\label{thm:cond_no_vand}
	Consider a $m \times m$ Vandermonde matrix $\bfV$ where $m < q$ (where $q$ is odd) with distinct parameters $\{s_0, s_1, \dots, s_{m-1}\} \subset \{1, \omega_q, \omega_q^2, \dots, \omega_q^{q-1}\}$. Let $c_1 = 5.5$. Then,
	\begin{align*}
		\kappa(\bfV) \leq O(q^{q - m + c_1}).
	\end{align*}
\end{theorem}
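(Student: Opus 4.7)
The plan is to bound $\sigma_{\max}(\bfV)$ from above and $\sigma_{\min}(\bfV)$ from below, then compose them. Set $S := \{s_0, \ldots, s_{m-1}\}$, $s := q - m$, and let $T$ denote the $s$ remaining $q$-th roots of unity (those not in $S$). For the spectral norm, I would observe that $\bfV$ consists of the top $m$ rows of the $q \times m$ Vandermonde matrix $\tilde{\bfV}$ defined by $\tilde{\bfV}(i, j) = s_j^i$ for $i \in [q]$. The orthogonality identity $\sum_{i=0}^{q-1}\omega_q^{ik} = q\cdot \mathbb{1}(k \equiv 0 \bmod q)$ gives $\tilde{\bfV}^* \tilde{\bfV} = q \bfI_m$, so $\|\bfV\|_2 \leq \|\tilde{\bfV}\|_2 = \sqrt{q}$.

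For $\sigma_{\min}(\bfV)$, I would use the Lagrange interpolation formula: with $L_j(z) := \prod_{i \neq j}(z - s_i)/(s_j - s_i)$, the entry $(\bfV^{-1})_{j,k}$ equals the coefficient of $z^k$ in $L_j$. Parseval's identity applied to the $q$-point DFT of the zero-padded coefficient vector then yields
\begin{align*}
\sum_{k=0}^{m-1}|(\bfV^{-1})_{j,k}|^2 = \frac{1}{q}\sum_{\ell=0}^{q-1}|L_j(\omega_q^\ell)|^2 = \frac{1}{q}\Big(1 + \sum_{t \in T}|L_j(t)|^2\Big),
\end{align*}
using $L_j(s_i) = \delta_{ij}$ to collapse the sum over $\omega_q^\ell \in S$. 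The central technical step is bounding $|L_j(t)|$ for $t \in T$. Writing $z^q - 1 = P(z)\, Q(z)$ with $P(z) = \prod_{s_i \in S}(z - s_i)$ and $Q(z) = \prod_{t' \in T}(z - t')$, and using L'Hopital's rule at the common zero $t$ together with implicit differentiation of $P Q = z^q - 1$ at $s_j$, one obtains after simplification
\begin{align*}
|L_j(t)| = \frac{\prod_{t' \in T,\, t' \neq t}|s_j - t'|}{\prod_{t' \in T,\, t' \neq t}|t - t'|}.
\end{align*}
The numerator has $s - 1$ factors of modulus at most $2$, and the denominator has $s - 1$ factors of modulus at least $2\sin(\pi/q) \geq 4/q$; hence $|L_j(t)| \leq (q/2)^{s-1}$.

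Plugging back, $\|\bfV^{-1}\|_F^2 \leq m \bigl(1 + s(q/2)^{2(s-1)}\bigr)/q$. Combining this with $\|\bfV\|_2 \leq \sqrt{q}$ and $\|\bfV^{-1}\|_2 \leq \|\bfV^{-1}\|_F$ yields a bound of the form $\kappa(\bfV) \leq \sqrt{msq}\cdot q^{s-3/2}/2^{s-1}$, which is $O(q^{s + c_1})$ after absorbing the $\sqrt{msq}$ factor (at most $q^{3/2}$) and subsidiary polynomial slack into the constant. The main obstacle I anticipate is keeping the trigonometric bound on $|L_j(t)|$ polynomial rather than exponential in $q$: a naive bound using only $|t - t'| \leq 2$ on denominator factors would produce an exponential-in-$q$ condition number, so the argument hinges on the sharp lower bound $2\sin(\pi/q) \geq 4/q$ on the spacing of distinct $q$-th roots. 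The explicit value $c_1 = 5.5$ in the theorem then follows by tracking every polynomial factor (Frobenius-vs-spectral-norm passage, the $\sqrt{q}$ estimate for $\|\bfV\|_2$, the $\sqrt{ms}$ term, and the $2^{s-1}$ savings) carefully through this chain of inequalities.
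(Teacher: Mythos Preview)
Your argument is correct and takes a genuinely different route from the paper's proof. The paper invokes Pan's factorization of $\bfV^{-1}$ through a Cauchy matrix $\bfC_{\bfs,f}$ with auxiliary nodes $t_{j'}=e^{{\rm i}\pi/m}\omega_m^{j'}$, then bounds the entries of $\bfC_{\bfs,f}^{-1}$ via products of distances from these auxiliary points to the missing $q$-th roots of unity; chaining through $\|\bfV\|\le m$, a factor $m^{1.5}$ from the Frobenius-to-spectral passage, a factor $q$ from $\min_i|s_i^m-f^m|$, and a $q^{d+2}$ bound on the Cauchy entries produces the exponent $d+5.5$. Your approach sidesteps the Cauchy machinery entirely: the Lagrange-plus-Parseval identity expresses each row norm of $\bfV^{-1}$ directly in terms of the values $|L_j(t)|$ at the \emph{missing} roots $t\in T$, and the algebraic simplification $|L_j(t)|=\prod_{t'\in T\setminus\{t\}}|s_j-t'|\,/\,\prod_{t'\in T\setminus\{t\}}|t-t'|$ (which you derived correctly) reduces everything to a ratio with only $s-1=q-m-1$ factors on each side. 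This is both more elementary and sharper: your chain gives $\kappa(\bfV)\le \sqrt{ms}\,(q/2)^{s-1}\le q^{s}/2^{s-1}$, so in fact you prove the theorem with $c_1=0$ rather than $c_1=5.5$. Your closing sentence about ``recovering $c_1=5.5$ by tracking polynomial factors'' undersells what you have done---you should instead note that your bound is strictly stronger than the stated one. The paper's approach buys a connection to structured-matrix theory that may generalize to other node configurations, but for the specific case of roots-of-unity subsets your Parseval argument is cleaner and tighter.
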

\begin{remark}
For the remainder of the paper, we continue to use this theorem with $c_1 = 5.5$. If $q-m$ is a constant, then $\kappa(\bfV)$ grows only polynomially in $q$. In the subsequent discussion, we will leverage Theorem \ref{thm:cond_no_vand} extensively.
\end{remark}

\begin{example}{ \it Polynomial Codes.}
\label{eg:poly_codes}
Consider the matrix-vector case where $\Delta_A =3$ and $\delta_A = 1$. In the polynomial approach, the master node forms
$\bfA(z) = \bfA_0 + \bfA_1 z + \bfA_2 z^2$ and evaluates it at distinct real values $z_1, \dots, z_n$. The $i$-th evaluation is sent to the $i$-th worker node which computes $\bfA^T(z_i) \bfx$. From polynomial interpolation, it follows that as long as the master node receives results from any three workers, it can decode $\bfA^T \bfx$. However, when $\Delta_A$ is large, the interpolation is numerically unstable \cite{Pan16}.
\end{example}

The basic idea of our approach to tackle the numerical stability issue is as follows. We further split each $\bfA_i$ into two equal sized block-columns. Thus, we now have six block-columns, indexed as $\bfA_0, \dots \bfA_5$. Consider the $6 \times 2$ matrix defined below; its columns are specified by $\bfg_0$ and $\bfg_1$.
\begin{align*}
[\bfg_{0} ~ \bfg_{1}] = &
\begin{bmatrix}
\bfI\\
\bfR_\theta^i\\
\bfR_\theta^{2i}
\end{bmatrix}
\end{align*}
The master node forms ``two'' encoded matrices for the $i$-th worker: $\sum_{j=0}^5 \bfA_j \bfg_{0}(j)$ and $\sum_{j=0}^5 \bfA_j \bfg_{1}(j)$ (where $\bfg_i(l)$ denotes the $l$-th component of the vector $\bfg_i$). Thus, the storage capacity constraint fraction $\gamma_A$ is still $\frac{1}{3}$.

Worker node $i$ computes the inner product of these two encoded matrices with $\bfx$ and sends the result to the master node. It turns out that in this case when any three workers $i_0, i_1, $ and $i_2$ complete their tasks, the decodability and numerical stability of recovering $\bfA^T \bfx$ depends on the condition number of the following matrix.
\begin{align*}
\begin{bmatrix}
\bfI & \bfI & \bfI\\
\bfR_\theta^{i_0} & \bfR_\theta^{i_1} & \bfR_\theta^{i_2}\\
\bfR_\theta^{2i_0} & \bfR_\theta^{2i_1} & \bfR_\theta^{2i_2}
\end{bmatrix}.
\end{align*}
Using the eigen-decomposition of $\bfR_\theta$ ({\it cf.} (\ref{eq:rotmat_eig})) the above block matrix can expressed as
	\begin{align*}
		\begin{bmatrix}
			\bfQ&0 & 0 \\
			0 &\bfQ& 0\\
			0 & 0 &\bfQ
		\end{bmatrix}
		\underbrace{\begin{bmatrix}
			\bfI & \bfI & \bfI\\
			\Lambda^{i_0} &  \Lambda^{i_1} &\Lambda^{i_2}\\
			\Lambda^{2i_0} & \Lambda^{2i_1} & \Lambda^{2i_{2}}
		\end{bmatrix}}_{\mathbf{\Sigma}}
        \begin{bmatrix}
			\bfQ^*&0 &0 \\
			0 &\bfQ^*& 0\\
			0 & 0 &\bfQ^*
		\end{bmatrix}.
\end{align*}
As the pre- and post-multiplying matrices are unitary, the condition number of the above matrix only depends on the properties of the middle matrix, denoted by $\mathbf{\Sigma}$. In what follows, we show that upon appropriate column and row permutations, $\mathbf{\Sigma}$ can be shown equivalent to a block diagonal matrix where each of the blocks is a Vandermonde matrix with parameters on the unit circle. Thus, the matrix is invertible if the corresponding parameters are distinct. Furthermore, even though we use real computation, the numerical stability of our scheme depends on Vandermonde matrices with parameters on the unit circle. Theorem \ref{thm:cond_no_vand} shows that the condition number of such matrices is much better behaved.

In the sequel we show that this argument can be significantly generalized and adapted for the case of circulant permutation embeddings. The matrix-matrix case requires the development of more ideas that we also present. In this section we consider $(i)$ the matrix-vector case where the storage fraction $\gamma_A = 1/k_A$ and $(ii)$ the matrix-matrix case where the storage fractions are $\gamma_A = 1/k_A, \gamma_B = 1/k_B$ respectively.
\subsection{Matrix Splitting Scheme}

We partition the matrices $\bfA$ and $\bfB$ into $\Delta_A = k_A \ell$ and $\Delta_B = k_B \ell$ block-columns respectively. However, we use two indices to refer to their respective constituent block-columns as this simplifies our later presentation. To avoid confusion, we use the subscript $\langle i,j \rangle$ to refer to the corresponding $(i,j)$-th block-columns. In particular $\bfA_{\langle i,j\rangle}, i \in [k_A], j \in [\ell]$ and $\bfB_{\langle i,j\rangle},i \in [k_B], j \in [\ell]$ refer to the $(i,j)$-th block-column of $\bfA$ and $\bfB$ respectively, such that
\begin{align}
\bfA &= [\bfA_{\langle 0,0 \rangle}~\dots~\bfA_{\langle 0,\ell-1 \rangle} ~|~\dots~|~\bfA_{\langle k_A-1,0 \rangle}~\dots~\bfA_{\langle k_A-1,\ell-1 \rangle}], \text{~and} \nonumber\\
\bfB &= [\bfB_{\langle 0,0 \rangle}~\dots~\bfB_{\langle 0,\ell-1 \rangle} ~|~\dots~|~\bfB_{\langle k_B-1,0 \rangle}~\dots~\bfB_{\langle k_B-1,\ell-1 \rangle}]. \label{eq:subdivide_A}
\end{align}
\subsection{Distributed Matrix-Vector Multiplication}
\label{sec:mat_vec_mul}
In the matrix-vector case, the encoding matrix for $\bfA$ will be specified by a $k_A \ell \times n \ell$ ``generator'' matrix $\bfG$ such that
\begin{align}
	\hat{\bfA}_{\langle i, j \rangle} = \sum_{\alpha \in [k_A], \beta \in [\ell]} \bfG(\alpha \ell + \beta,i \ell + j) \bfA_{\langle \alpha,\beta \rangle} \label{eq:encoding_rule_A}
\end{align}
for $i \in [n], j \in [\ell]$. The worker node $i$ stores $\hat{\bfA}_{\langle i, j \rangle}$ for $j \in [\ell]$ and $\bfx$, i.e., it stores $\gamma_A = \ell/\Delta_A = 1/k_A$ fraction of matrix $\bfA$. Furthermore, it computes  $\hat{\bfA}^T_{\langle i, j \rangle} \bfx$ for $j \in [\ell]$ and transmits them to the master node. 

Thus, the master node receives $\hat{\bfA}^T_{\langle i, j \rangle} \bfx$ of length $r/\Delta_A$ for $j \in [\ell]$ from a certain number of worker nodes and wants to decode $\bfA^T \bfx$ of length $r$. Based on our encoding scheme, this can be done by solving a $\Delta_A \times \Delta_A$ linear system of equations $r/\Delta_A$ times. The structure of this linear system is inherited from the encoding matrix $\bfG$. The precise details of the encoding schemes can be found in Algorithm \ref{Alg:MV_Embedding_Scheme} (an example appears above).

\begin{algorithm}[t]
	\caption{Encoding scheme for distributed matrix-vector multiplication}
	\label{Alg:MV_Embedding_Scheme}
   \textbf{Input:} Matrix $\bfA$ and vector $\bfx$. Storage fraction $\gamma_A = 1/k_A$, positive integer $\ell$ and encoding matrix $\bfG$ of dimension $k_A \ell \times n \ell$.\\
   \textbf{Output:} Worker task assignment.
   \begin{algorithmic}
       \STATE Partition $\bfA$ into $\Delta_A$ block-columns as in (\ref{eq:subdivide_A}).
        \FOR{$i=0$ {\bfseries to} $n-1$}
        \STATE Worker $i$ is assigned $\hat{\bfA}_{\langle i, j \rangle} = \sum_{\alpha \in [k_A], \beta \in [\ell]} \bfG(\alpha \ell + \beta,i \ell + j) \bfA_{\langle \alpha,\beta \rangle}$, for all $j \in [\ell]$ and the vector $\bfx$.
        \ENDFOR
        \STATE Worker $i$ computes $\hat{\bfA}_{\langle i, j \rangle}^T \bfx$ for $j \in [\ell]$.
   \end{algorithmic}
\end{algorithm}

\subsubsection{Rotation Matrix Embedding}
Let $q$ be an odd number such that $q \geq n$, $\theta = 2\pi/q$ and $\ell=2$ ({\it cf.} block column decomposition in (\ref{eq:subdivide_A})). We choose the generator matrix such that its $(i,j)$-th block-submatrix for $i \in [k_A], j \in [n]$ is given by
\begin{align}
  \bfG^{rot}_{i,j} & = \bfR_\theta^{ji}. \label{eq:rot_mat_generator}
\end{align}

\begin{theorem}
\label{thm:rotmat_mv}
	The threshold for the rotation matrix based scheme specified above is $k_A$. Furthermore, the worst-case condition number of the recovery matrices is upper bounded by $O(q^{q-k_A + c_1})$.
\end{theorem}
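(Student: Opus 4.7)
The plan is to generalize the calculation in the example preceding the theorem from $k_A=3$ to arbitrary $k_A$. Suppose workers with indices $i_0,\dots,i_{k_A-1}$ complete their tasks. Using the encoding rule \eqref{eq:encoding_rule_A} with $\bfG^{rot}_{i,j}=\bfR_\theta^{ji}$ and $\ell=2$, the received vectors and the unknowns $\bfy_{\alpha,\beta}:=\bfA_{\langle\alpha,\beta\rangle}^T\bfx$ satisfy a linear system whose coefficient matrix $\bfM$ is a $k_A\times k_A$ block matrix with $2\times 2$ blocks $\bfR_\theta^{\alpha i_c}$ for $\alpha,c\in[k_A]$ (exactly the matrix displayed in the example, generalized). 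Both decodability and conditioning reduce to studying $\kappa(\bfM)$.

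Next I would diagonalize each block. Using $\bfR_\theta=\bfQ\Lambda\bfQ^*$ from \eqref{eq:rotmat_eig}--\eqref{eq:Lambda_spec}, each block factors as $\bfQ\Lambda^{\alpha i_c}\bfQ^*$, and pulling these unitary factors out gives
\begin{equation*}
\bfM=(\bfI_{k_A}\otimes\bfQ)\,\mathbf{\Sigma}\,(\bfI_{k_A}\otimes\bfQ^*),
\end{equation*}
where $\mathbf{\Sigma}$ is the block matrix with $(\alpha,c)$-block $\Lambda^{\alpha i_c}$. The flanking factors are unitary, so $\kappa(\bfM)=\kappa(\mathbf{\Sigma})$. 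Since every block of $\mathbf{\Sigma}$ is diagonal, a perfect-shuffle permutation applied identically to rows and columns (itself unitary) reassembles $\mathbf{\Sigma}$ as a block-diagonal matrix $\mathrm{diag}(\bfV_+,\bfV_-)$, where $\bfV_\pm$ are $k_A\times k_A$ Vandermonde matrices with parameters $\{\omega_q^{\pm i_c}: c\in[k_A]\}$ (using $\theta=2\pi/q$ so $e^{\rm{i}\theta}=\omega_q$). Because $q\geq n$ and the $i_c$'s are distinct elements of $[n]$, each parameter set consists of distinct $q$-th roots of unity; in particular $\bfV_\pm$ are invertible. This shows that any $k_A$ responding workers allow decoding, and since $k_A$ matches the standard information-theoretic lower bound for storage fraction $1/k_A$, the threshold claim follows.

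Finally, with $q$ odd and $k_A<q$, Theorem~\ref{thm:cond_no_vand} applied to each Vandermonde block yields $\kappa(\bfV_\pm)\leq O(q^{q-k_A+c_1})$. Since the condition number of a block-diagonal matrix equals the maximum of the block condition numbers, $\kappa(\bfM)=\kappa(\mathbf{\Sigma})\leq O(q^{q-k_A+c_1})$, and the bound is uniform in the responding index set, hence it bounds the worst-case recovery condition number. The main obstacle is essentially bookkeeping: tracking how the Kronecker factorization and the shuffle permutation interact with the block structure of $\mathbf{\Sigma}$ to expose two independent Vandermonde systems with parameters on the unit circle. All of the genuinely hard analysis is already packaged in Theorem~\ref{thm:cond_no_vand}.
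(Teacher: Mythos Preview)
Your proposal is correct and follows essentially the same route as the paper: extract the $2k_A\times 2k_A$ block recovery matrix, simultaneously diagonalize all $\bfR_\theta$-blocks via $\bfI_{k_A}\otimes\bfQ$, then permute the resulting diagonal-block matrix into $\mathrm{diag}(\bfV_+,\bfV_-)$ and invoke Theorem~\ref{thm:cond_no_vand}. One small caveat: the assertion that the condition number of a block-diagonal matrix equals the maximum of the block condition numbers is false in general (it is $\max_i\sigma_{\max}(D_i)/\min_j\sigma_{\min}(D_j)$), but here it is harmless because $\bfV_-=\overline{\bfV_+}$ so the two blocks share the same singular values.
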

\begin{proof}
Suppose that workers indexed by $i_0, \dots, i_{k_A-1}$ complete their tasks. We extract the corresponding block-columns of $\bfG^{rot}$ to obtain
	\begin{align*}
		\tilde{\bfG}^{rot} =
		\begin{bmatrix}
			\bfI & \bfI & \cdots & \bfI\\
			\bfR_\theta^{i_0} &  \bfR_\theta^{i_1} &\cdots &\bfR_\theta^{i_{k_A-1}}\\
			\vdots & \vdots &\ddots &\vdots\\
			\bfR_\theta^{i_0(k_A-1)} & \bfR_\theta^{i_1(k_A-1)} & \cdots & \bfR_\theta^{i_{k_A-1}(k_A-1)}
		\end{bmatrix}.
	\end{align*}
We note here that the decoder attempts to recover each entry of $\bfA_{\langle i,j \rangle}^T \bfx$ from the results sent by the worker nodes. Thus, we can equivalently analyze the decoding by considering the system of equations as
	\begin{align*}
		\bfm \tilde{\bfG}^{rot} = \bfc,
	\end{align*}
where $\bfm, \bfc \in \mathbb{R}^{1 \times k_A \ell}$  are row-vectors such that
	\begin{align*}
		\bfm &=[\bfm_{0},\cdots, \bfm_{k_A-1}]\\
		 &= [\bfm_{\langle 0,0 \rangle}, \cdots, \bfm_{\langle 0,\ell-1 \rangle}, \cdots, \cdots, \bfm_{\langle k_A-1, 0 \rangle}, \cdots, \bfm_{\langle k_A-1, \ell-1 \rangle}],\\ \text{~and} \\
		\bfc &=[\bfc_{i_0}, \cdots, \bfc_{i_{k_A-1}}] \\&= [\bfc_{\langle i_0,0 \rangle},\cdots, \bfc_{\langle i_0,\ell-1 \rangle} , \cdots, \cdots, \bfc_{\langle i_{k_A-1},0\rangle}, \cdots, \bfc_{\langle i_{k_A-1}, \ell-1\rangle}].
	\end{align*}
In the expression above, terms of the form $\bfm_{\langle i, j\rangle}$ and $\bfc_{\langle i, j\rangle}$ are scalars.
We need to analyze $\kappa(\tilde{\bfG}^{rot})$. Towards this end, using the eigenvalue decomposition of $\bfR_\theta$, we have
	\begin{align}
		\tilde{\bfG}^{rot} &=
		\begin{bmatrix}
			\bfQ&&\\
			&\ddots&\\
			&&\bfQ
		\end{bmatrix}
		\mathbf{\tilde{\Lambda}} \begin{bmatrix}
			\bfQ^*&&\\
			&\ddots&\\
			&&\bfQ^*
		\end{bmatrix}, \text{~where} \label{eq:eig_decomp_rot}\\
		\mathbf{\tilde{\Lambda}} &= \begin{bmatrix}
			\bfI & \bfI & \cdots & \bfI\\
			\Lambda^{i_0} &  \Lambda^{i_1} &\cdots &\Lambda^{i_{k_A-1}}\\
			\vdots & \vdots &\ddots &\vdots\\
			\Lambda^{i_0(k_A-1)} & \Lambda^{i_1(k_A-1)} & \cdots & \Lambda^{i_{k_A-1}(k_A-1)}
		\end{bmatrix} \nonumber
	\end{align}
	and $\Lambda$ is specified in (\ref{eq:Lambda_spec}).
	Note that the pre- and post-multiplying matrices in the RHS of (\ref{eq:eig_decomp_rot}) above are both unitary. Therefore $\kappa(\tilde{\bfG}^{rot})$ is the same as  $\kappa(\mathbf{\tilde{\Lambda}})$ \cite{hornJ91}.
	
	Using Claim \ref{claim:diag_block_matrix} in the Appendix \ref{sec:aux_claims}, we can permute $\mathbf{\tilde{\Lambda}}$ to put it in block-diagonal form so that
	\begin{align*}
		\mathbf{\tilde{\Lambda}}_d = \begin{bmatrix}
			\mathbf{\tilde{\Lambda}}_d[0] & \mathbf{0}\\
			\mathbf{0} & \mathbf{\tilde{\Lambda}}_d[1]
		\end{bmatrix},
	\end{align*}
	where $\mathbf{\tilde{\Lambda}}_d[0]$ and $\mathbf{\tilde{\Lambda}}_d[1]$ are Vandermonde matrices with parameter sets $\{e^{{\rm{i}}\theta i_0}, \dots, e^{{\rm{i}}\theta i_{k_A-1}}\}$ and $\{e^{-{\rm{i}}\theta i_0}, \dots, e^{-{\rm{i}}\theta i_{k_A-1}}\}$ respectively. Note that these parameters are distinct points on the unit circle. Thus, $\mathbf{\tilde{\Lambda}}_d[0]$ and $\mathbf{\tilde{\Lambda}}_d[1]$ are both invertible which implies that $\tilde{\Lambda}$ is invertible. This allows us to conclude that the threshold of the scheme is $k_A$. The upper bound on the condition number follows from Theorem \ref{thm:cond_no_vand}.
\end{proof}

\begin{algorithm}[t]
	\caption{Decoding Algorithm for Circulant Permutation Scheme}
	\label{Alg:MatVecMul}
	\textbf{Input:} $\bfG^{circ}_\calI$ where $|\calI| = k_A$ (block-columns of $G$ corresponding to block-columns in $\calI$). Row vector $\bfc$ corresponding to observed values in one system of equations. Permutation $\pi$ specified in the proof of Theorem \ref{thm:circ_perm_scheme}. \\
    \textbf{Output:} $\bfm$ which is the solution to $\bfm \bfG^{circ}_\calI = \bfc$.
	\begin{algorithmic}
		\STATE {\bfseries 1. procedure:} Block Fourier Transform and permute $\bfc$.
		\FOR{$j=0$ {\bfseries to} $k_A-1$}
		\STATE Apply FFT to $\bfc_{i_j}=[\bfc_{\langle i_j,0 \rangle}, \cdots, \bfc_{\langle i_j,\tilde{q}-1 \rangle}]$ to obtain $\bfc_{i_j}^\calF = [\bfc_{\langle i_j,0 \rangle}^\calF, \cdots, \bfc_{\langle i_j,\tilde{q}-1 \rangle}^\calF]$.
		\ENDFOR
		\STATE Permute $\bfc^\calF = [\bfc_{i_0}^\calF, \cdots, \bfc_{i_{k_A-1}}^\calF]$ by $\pi$ to obtain $\bfc^{\calF, \pi} = [\bfc_{0}^{\calF,\pi}, \cdots, \bfc_{{\tilde{q}-1}}^{\calF, \pi}]$ where $\bfc_{j}^{\calF,\pi}=[\bfc_{\langle i_0,j \rangle}^\calF, \bfc_{\langle i_1,j \rangle}^\calF, \cdots, \bfc_{\langle i_{k_A-1},j \rangle}^\calF]$, for $j=0, \dots, \tilde{q}-1$.
		\STATE {\bfseries end procedure}
		\STATE {\bfseries 2. procedure:} Decode $\bfm^{\calF, \pi}$ from $\bfc^{\calF,\pi}$.
		\STATE	For $i \in \{1, \dots, \tilde{q}-1\}$, decode $\bfm_i^{\calF, \pi}$ from  $\bfc_i^{\calF, \pi}$ by polynomial interpolation or matrix inversion of $\tilde{\bfG}^\calF_d[i]$ (see (\ref{eq:diagonal_block_tilde_G_F}) in Appendix \ref{sec:proof_circ_perm}).
		Set $\bfm_0^{\calF, \pi} = [0,\cdots, 0]$.
		\STATE {\bfseries end procedure}
		\STATE {\bfseries 3. procedure:} Inverse permute and Block Inverse Fourier Transform  $\bfm^{\calF, \pi}$.
		\STATE Permute $\bfm^{\calF, \pi}$ by $\pi^{-1}$ to obtain $\bfm^\calF = [\bfm_{0}^\calF, \cdots, \bfm_{{k_A-1}}^\calF]$. Apply inverse FFT to each $\bfm_{i}^\calF$ in $\bfm^\calF$ to obtain $\bfm=[\bfm_{0}, \cdots, \bfm_{{k_A-1}}]$.
		\STATE {\bfseries end procedure}
	\end{algorithmic}
\end{algorithm}

\noindent {\bf Complexity Analysis:} Creating an encoded matrix requires a total of $\Delta_A$ scalar multiplications and $\Delta_A-1$ additions of block-columns of size $t \times r/\Delta_A$. Therefore, the total encoding complexity is given by $O(rtn)$. Each worker node computes the product of submatrix of size $r/\Delta_A \times t$ with a vector of size $t$, i.e., the computational cost is $O(rt/\Delta_A)$. Finally, the decoding process involves inverting a $\Delta_A \times \Delta_A$ matrix once and using the inverse to solve $r/\Delta_A$ systems of equations. Thus, the overall decoding complexity is $O(\Delta_A^3 + r \Delta_A)$ where typically, $r\gg \Delta_A^2$.

\subsubsection{Circulant Permutation Embedding}
Let $\tilde{q}$ be a {\it prime} number which is greater than or equal to $n$. We set $\ell=\tilde{q}-1$, so that $\bfA$ is sub-divided into $k_A(\tilde{q}-1)$ block-columns as in (\ref{eq:subdivide_A}).
In this embedding we have an additional step. Specifically, the master node generates the following ``precoded'' matrices.
\begin{align}
	\bfA_{\langle i,\tilde{q}-1 \rangle}=-\sum_{j=0}^{\tilde{q}-2}\bfA_{\langle i,j \rangle}, i \in [k_A]. \label{eq:precoded_mat}
\end{align}
In the subsequent discussion, we work with the set of block-columns $\bfA_{\langle i,j \rangle}$  for $i \in [k_A], j \in [\tilde{q}]$. The coded submatrices $\hat{\bfA}_{\langle i,j \rangle}$ for $i \in [n], j \in [\tilde{q}]$ are generated by means of a $k_A \tilde{q} \times n \tilde{q}$ matrix $\bfG^{circ}$ using Algorithm \ref{Alg:MV_Embedding_Scheme}.
%
The $(i,j)$-th block of $\bfG^{circ}$ can be expressed as
\begin{align}
	\label{eq:circ_perm_generator}
    \bfG^{circ}_{i,j} = \bfP^{ji}, \text{~for~} i \in [k_A], j \in [n],
\end{align}
where the matrix $\bfP$ denotes the $\tilde{q} \times \tilde{q}$ circulant permutation matrix introduced in Definition \ref{defn:circ_perm}. For this scheme the storage fraction $\gamma_A = \tilde{q}/(k_A(\tilde{q}-1))$, i.e., it is slightly higher than $1/k_A$.

\begin{theorem}
	\label{thm:circ_perm_scheme}
	The threshold for the circulant permutation based scheme specified above is $k_A$. Furthermore, the worst-case condition number of the recovery matrices is upper bounded by $O(\tilde{q}^{\tilde{q}-k_A + c_1})$ and the scheme can be decoded by using Algorithm \ref{Alg:MatVecMul}.
\end{theorem}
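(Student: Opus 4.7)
The plan is to follow the same template as the proof of Theorem \ref{thm:rotmat_mv}, but replace the $2 \times 2$ eigendecomposition of $\bfR_\theta$ with the $\tilde{q} \times \tilde{q}$ DFT diagonalization of $\bfP$ stated in Definition \ref{defn:circ_perm}. Suppose workers indexed by $i_0, \dots, i_{k_A-1}$ complete their tasks. Extracting the corresponding block-columns of $\bfG^{circ}$ yields a $k_A \tilde{q} \times k_A \tilde{q}$ recovery matrix $\tilde{\bfG}^{circ}$ whose $(\alpha,\beta)$-th block is $\bfP^{i_\beta \alpha}$. Using $\bfP = \bfW \bfD \bfW^{*}$ where $\bfD = \text{diag}(1, \omega_{\tilde q}, \dots, \omega_{\tilde q}^{\tilde q -1})$, we can pull out a block-diagonal unitary $\text{diag}(\bfW,\dots,\bfW)$ on the left and $\text{diag}(\bfW^*,\dots,\bfW^*)$ on the right, leaving an inner matrix $\tilde{\bfG}^{\calF}$ whose $(\alpha,\beta)$-th block is the diagonal $\bfD^{i_\beta \alpha}$. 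Since the flanking matrices are unitary, $\kappa(\tilde{\bfG}^{circ}) = \kappa(\tilde{\bfG}^{\calF})$.

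Next, I would invoke Claim \ref{claim:diag_block_matrix} to apply a row-and-column permutation $\pi$ that gathers the $j$-th diagonal entry of every block into a single $k_A \times k_A$ Vandermonde block. This produces a block-diagonal matrix $\tilde{\bfG}^{\calF}_d = \text{diag}\bigl(\tilde{\bfG}^{\calF}_d[0], \tilde{\bfG}^{\calF}_d[1], \dots, \tilde{\bfG}^{\calF}_d[\tilde q -1]\bigr)$ where $\tilde{\bfG}^{\calF}_d[j]$ is the $k_A \times k_A$ Vandermonde matrix with parameters $\{\omega_{\tilde q}^{j i_0}, \omega_{\tilde q}^{j i_1}, \dots, \omega_{\tilde q}^{j i_{k_A-1}}\}$. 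For $j=0$, this block is the all-ones matrix and hence singular; for $j \in \{1, \dots, \tilde q -1\}$, primality of $\tilde q$ ensures that the map $i \mapsto j i \bmod \tilde q$ is a bijection on $\{0,1,\dots,\tilde q -1\}$, so the parameters are distinct $\tilde q$-th roots of unity and each $\tilde{\bfG}^{\calF}_d[j]$ is invertible with $\kappa \leq O(\tilde q^{\tilde q - k_A + c_1})$ by Theorem \ref{thm:cond_no_vand}.

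The main obstacle is handling the singular block $\tilde{\bfG}^{\calF}_d[0]$: the system of equations $\bfm \tilde{\bfG}^{circ} = \bfc$ cannot be solved by a straight matrix inversion. I would address this by showing that after the permutation $\pi$ the unknown sub-vector $\bfm_0^{\calF,\pi}$ aligned with $\tilde{\bfG}^{\calF}_d[0]$ must in fact be the all-zero vector, as a direct consequence of the precoding relation (\ref{eq:precoded_mat}). Concretely, (\ref{eq:precoded_mat}) says that for every $i \in [k_A]$ the sum $\sum_{j=0}^{\tilde q -1} \bfA_{\langle i,j\rangle}$ is zero, which on the encoding side forces the zero-frequency coefficient of the block-Fourier transform of $\bfm$ to vanish. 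Equivalently, the corresponding $\bfc$-entries also sum to zero, so the $j=0$ sub-system is $\mathbf{0} \cdot \tilde{\bfG}^{\calF}_d[0] = \mathbf{0}$ and admits the unique valid solution $\bfm_0^{\calF,\pi} = \mathbf{0}$. The remaining $\tilde q - 1$ sub-systems are invertible, so $\bfm$ is uniquely determined; this also certifies that the threshold equals $k_A$.

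Finally, the condition-number bound follows because only the invertible Vandermonde blocks $\tilde{\bfG}^{\calF}_d[j]$ for $j \geq 1$ enter the effective decoding, and each of them has $\kappa \leq O(\tilde q^{\tilde q - k_A + c_1})$ by Theorem \ref{thm:cond_no_vand}. Since block-diagonal matrices have condition number equal to the maximum of the blocks' condition numbers, the stated bound holds. Algorithm \ref{Alg:MatVecMul} is then a literal translation of this argument into computation: step 1 applies the block FFT and the permutation $\pi$ to $\bfc$, step 2 solves each $\tilde q - 1$ invertible Vandermonde sub-system (and pins the $j=0$ piece to zero), and step 3 undoes $\pi$ and the FFT to recover $\bfm$. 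I would defer the detailed verification that the permutation $\pi$ produces the claimed block-diagonal structure to Appendix \ref{sec:proof_circ_perm}, where equation (\ref{eq:diagonal_block_tilde_G_F}) formalizes $\tilde{\bfG}^{\calF}_d[i]$.
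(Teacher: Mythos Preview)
Your proposal is correct and follows essentially the same approach as the paper's proof: both use the DFT diagonalization of $\bfP$ to reduce $\tilde{\bfG}^{circ}$ (via unitary block-diagonal conjugation and the permutation of Claim \ref{claim:diag_block_matrix}) to a block-diagonal matrix whose $j$-th block is a $k_A\times k_A$ Vandermonde matrix with parameters $\omega_{\tilde q}^{j i_0},\dots,\omega_{\tilde q}^{j i_{k_A-1}}$; both handle the singular $j=0$ block by invoking the precoding relation (\ref{eq:precoded_mat}) to force the corresponding Fourier coefficients to zero, and both invoke primality of $\tilde q$ for distinctness of the remaining parameters and Theorem \ref{thm:cond_no_vand} for the condition-number bound. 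Your description of Algorithm \ref{Alg:MatVecMul} as a direct computational transcription of this argument also matches the paper.
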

The proof appears in the Appendix \ref{sec:proof_circ_perm}. It is conceptually similar to the proof of Theorem \ref{thm:rotmat_mv} and relies critically on the fact that all eigenvalues of $\bfP$ lie on the unit circle and that $\bfP$ can be diagonalized by the DFT matrix $\bfW$.

\noindent {\bf Complexity Analysis:} The complexity analysis closely mirrors the analyses for the case of the rotation matrix embedding. However, we note that for the circulant permutation embedding, the $\hat{\bfA}_{\langle i, j \rangle}$'s can simply be generated by additions since $\bfG^{circ}$ is a binary matrix. Furthermore, the fact that $\bfP$ can be diagonalized by the DFT matrix $\bfW$ suggests an efficient decoding algorithm where the fast Fourier Transform (FFT) plays a key role (see Algorithm \ref{Alg:MatVecMul}). In particular, we have the following claim.
\begin{claim}
\label{claim:fft_decoding_circ}
	The decoding complexity of recovering $\bfA^T \bfx$ is $O(r (\log \tilde{q} + \log^2 k_A))$.
\end{claim}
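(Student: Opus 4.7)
The plan is to walk through the three procedures of Algorithm \ref{Alg:MatVecMul} and sum up their costs. The key observation is that each scalar entry $\bfc_{\langle i_j, l\rangle}$ returned by a worker is really a vector of length $r/\Delta_A = r/(k_A(\tilde{q}-1))$, so every linear-algebra operation in the decoder is applied coordinate-wise to that many independent right-hand sides. The complexity bookkeeping therefore amounts to multiplying the cost of one scalar-valued decoding by $r/\Delta_A$.

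First I would handle Step~1, the block FFT. For each of the $k_A$ responding workers we apply a $\tilde{q}$-point DFT along the index $j \in [\tilde{q}]$. Because $\tilde{q}$ is prime, Rader's (or Bluestein's) algorithm computes this in $O(\tilde{q}\log\tilde{q})$ operations per coordinate; aggregating over the $r/\Delta_A$ coordinates and the $k_A$ workers gives
\begin{equation*}
O\!\left(k_A\cdot \tilde{q}\log\tilde{q}\cdot \frac{r}{k_A(\tilde{q}-1)}\right)\;=\;O(r\log\tilde{q}).
\end{equation*}
The permutation by $\pi$ is a pointer relabeling and costs nothing. Step~3 is structurally identical (inverse $\tilde{q}$-point DFTs applied to $k_A$ streams of dimension $r/\Delta_A$), so it contributes another $O(r\log\tilde{q})$.

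Step~2 is where the Vandermonde structure enters. From the proof of Theorem \ref{thm:circ_perm_scheme} (Appendix \ref{sec:proof_circ_perm}), after the DFT and the permutation $\pi$ the recovery matrix becomes block diagonal with $\tilde{q}$ blocks; the precoding constraint (\ref{eq:precoded_mat}) forces the $i=0$ block to be trivial, so $\bfm_0^{\calF,\pi}=\mathbf{0}$ at no cost, and the remaining $\tilde{q}-1$ blocks $\tilde{\bfG}^{\calF}_d[i]$ are $k_A\times k_A$ Vandermonde matrices with parameters on the complex unit circle. Using fast polynomial interpolation via a subproduct tree plus FFT, each such system can be solved in $O(k_A\log^2 k_A)$ arithmetic operations per right-hand side. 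Each frequency index carries $r/\Delta_A = r/(k_A(\tilde{q}-1))$ independent right-hand sides, so solving at one frequency costs $O(k_A\log^2 k_A\cdot r/(k_A(\tilde{q}-1)))$, and summing over the $\tilde{q}-1$ non-trivial frequencies gives
\begin{equation*}
O\!\left((\tilde{q}-1)\cdot \frac{r\log^2 k_A}{\tilde{q}-1}\right)\;=\;O(r\log^2 k_A).
\end{equation*}
Combining all three steps yields the total $O(r(\log\tilde{q}+\log^2 k_A))$ claimed.

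The main obstacle I anticipate is purely expository: one has to invoke the classical $O(k_A\log^2 k_A)$-time algorithm for solving a single $k_A\times k_A$ Vandermonde system (subproduct tree, multipoint evaluation/interpolation \emph{\`a la} von zur Gathen--Gerhard), since the naive inverse-times-vector approach would cost $O(k_A^2)$ per right-hand side and blow up Step~2 to $O(rk_A)$, losing the claimed bound. A secondary but trivial point is confirming that the $i=0$ block really is free; this is exactly the purpose of the precoding (\ref{eq:precoded_mat}), and it contributes no arithmetic to the decoder.
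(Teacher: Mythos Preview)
Your proposal is correct and follows essentially the same route as the paper: count the per-coordinate cost of each of the three procedures in Algorithm~\ref{Alg:MatVecMul} (block FFT, $\tilde{q}-1$ Vandermonde interpolations via fast $O(k_A\log^2 k_A)$ polynomial interpolation, inverse block FFT), then multiply by the $r/(k_A(\tilde{q}-1))$ right-hand sides. Your explicit mention of Rader/Bluestein for the prime-length DFT and of the subproduct-tree interpolation is a welcome clarification that the paper leaves to a citation.
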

\begin{remark}
	Both circulant permutation matrices and rotation matrices allow us to achieve a specified threshold for distributed matrix vector multiplication. The required storage fraction $\gamma_A$ is slightly higher for the circulant permutation case and it requires $\tilde{q}$ to be prime. However, it allows for an efficient FFT based decoding algorithm. On the other hand, the rotation matrix case requires a smaller $\Delta_A$, but the decoding requires solving the corresponding system of equations the complexity of which can be cubic in $\Delta_A$. We note that when the size of $\bfA$ is large, the decoding time will be much lesser than the worker node computation time; we demonstrate this numerically as well in Section \ref{sec:comps}. In Section \ref{sec:comps}, we show results that demonstrate that the normalized mean-square error when circulant permutation matrices are used is lower than the rotation matrix case.
\end{remark}

\subsection{Distributed Matrix-Matrix Multiplication}
\label{sec:mat_mat_mul}

The matrix-matrix case requires the introduction of newer ideas within this overall framework. In this case, a given worker obtains encoded block-columns of both $\bfA$ and $\bfB$ and representing the underlying computations is somewhat more involved. Once again we let $\theta = 2\pi/q$, where $q \geq n$ ($n$ is the number of worker nodes) is an odd integer and set $\ell = 2$. Furthermore, let $k_A k_B < n$. The $(i,j)$-th blocks of the encoding matrices are given by appropriate powers of rotation matrices, i.e.,
\begin{align*}
\bfG^A_{i,j} &= \bfR_\theta^{ji}, \text{~for~} i \in [k_A], j \in [n], \text{~and}\\
\bfG^B_{i,j} &= \bfR_\theta^{(j k_A) i}, \text{~for~} i \in [k_B], j \in [n].
\end{align*}
The master node operates according to the encoding rule discussed previously in the matrix-vector case; the details can be found in Algorithm \ref{Alg:MM_Embedding_Scheme}. Thus, each worker node stores $\gamma_A = 1/k_A$ and $\gamma_B = 1/k_B$ fraction of $\bfA$ and $\bfB$ respectively. The $i$-th worker node computes the pair-wise product of the matrices $\hat{\bfA}_{\langle i,l_1 \rangle}^T \hat{\bfB}_{\langle i,l_2 \rangle}$ for $l_1, l_2 = 0,1$ and returns the result to the master node. Thus, the master node needs to recover all pair-wise products of the form $\bfA_{\langle i,\alpha \rangle}^T \bfB_{\langle j,\beta \rangle}$ for $i \in [k_A], j \in [k_B]$ and $\alpha,\beta = 0,1$. Let $\bfZ$ denote a $1 \times 4k_A k_B$ block matrix that contains all of these pair-wise products. The details of the encoding scheme can be found in Algorithm \ref{Alg:MM_Embedding_Scheme} (an example appears below).
\begin{algorithm}[t]
	\caption{Encoding scheme for distributed matrix-matrix multiplication}
	\label{Alg:MM_Embedding_Scheme}
   \textbf{Input:} Matrices $\bfA$ and $\bfB$. Storage fractions $\gamma_A = 1/k_A,\gamma_B = 1/k_B$, positive integer $\ell$ and encoding matrices $\bfG^A$ and $\bfG^B$ of dimensions $k_A \ell \times n \ell$ and $k_B \ell \times n$ respectively.\\
   \textbf{Output:} Worker task assignment.
   \begin{algorithmic}
       \STATE Partition $\bfA$ and $\bfB$ into $\Delta_A$ and $\Delta_B$ block-columns as in (\ref{eq:subdivide_A}).
        \FOR{$i=0$ {\bfseries to} $n-1$}
        \STATE Worker $i$ is assigned
        \begin{align*}
        \hat{\bfA}_{\langle i, j \rangle} &= \sum_{\alpha \in [k_A], \beta \in [\ell]} \bfG^A(\alpha \ell + \beta,i \ell + j) \bfA_{\langle \alpha,\beta \rangle}, \text{~and}\\
        \hat{\bfB}_{\langle i, j \rangle} &= \sum_{\alpha \in [k_B], \beta \in [\ell]} \bfG^B(\alpha \ell + \beta,i \ell + j) \bfB_{\langle \alpha,\beta \rangle}
        \end{align*}
        for all $j \in [\ell]$.
        \ENDFOR
        \STATE Worker $i$ computes $\hat{\bfA}_{\langle i, l_1 \rangle}^T \hat{\bfB}_{\langle i, l_2 \rangle}$ for all pairs $l_1 \in [\ell], l_2 \in [\ell]$.
   \end{algorithmic}
\end{algorithm}

\begin{example}
	Suppose $k_A = 2$, $k_B=2$. Let $n=q = 5$, $\theta=2\pi/5$. The matrix $\bfA$ and $\bfB$ can be partitioned as follows.
	\begin{align*}
	\bfA &= [\bfA_{\langle 0,0 \rangle} ~ ~\bfA_{\langle 0,1 \rangle} ~ |~\bfA_{\langle 1,0 \rangle} ~ ~\bfA_{\langle 1,1 \rangle} ], \text{~and} \\
	\bfB &= [\bfB_{\langle 0,0 \rangle} ~ ~\bfB_{\langle 0,1 \rangle} ~ |~\bfB_{\langle 1,0 \rangle} ~ ~\bfB_{\langle 1,1 \rangle} ].
	\end{align*}
	The encoding matrices $\bfG^A$ and $\bfG^B$ are given by
	\begin{align*}
	\bfG^A  &= \begin{bmatrix}
	\bfI &\bfI& \bfI & \bfI & \bfI\\
	\bfI &\bfR_\theta& \bfR_\theta^2 & \bfR_\theta^3 & \bfR_\theta^4
	\end{bmatrix}, \text{~and}\\
	\bfG^B &= \begin{bmatrix}
	\bfI &\bfI& \bfI & \bfI & \bfI\\
	\bfI &\bfR_\theta^2& \bfR_\theta^4 & \bfR_\theta^6 & \bfR_\theta^8
	\end{bmatrix}.
	\end{align*}
	
	Thus, for the $i$-th worker node, the encoded matrices are obtained as
	\begin{align*}
	\hat{\bfA}_{\langle i,0 \rangle} &= \bfA_{\langle 0, 0\rangle} + \bfR^i_\theta(0, 0)\bfA_{\langle 1, 0\rangle}+  \bfR^i_\theta(1, 0)\bfA_{\langle 1, 1\rangle},\\
		\hat{\bfA}_{\langle i,1 \rangle} &= \bfA_{\langle 0, 1\rangle} + \bfR^i_\theta(0, 1)\bfA_{\langle 1, 0\rangle}+  \bfR^i_\theta(1, 1)\bfA_{\langle 1, 1\rangle},\\
		\hat{\bfB}_{\langle i,0 \rangle} &= \bfB_{\langle 0, 0\rangle} + \bfR^{2i}_\theta(0, 0)\bfB_{\langle 1, 0\rangle}+  \bfR^{2i}_\theta(1, 0)\bfB_{\langle 1, 1\rangle}, \text{~and}\\
	    \hat{\bfB}_{\langle i,1 \rangle} &= \bfB_{\langle 0, 1\rangle} + \bfR^{2i}_\theta(0, 1)\bfB_{\langle 1, 0\rangle}+  \bfR^{2i}_\theta(1, 1)\bfB_{\langle 1, 1\rangle}.
	\end{align*}
%
%
	The $i$-th worker node computes $\hat{\bfA}_{\langle i,0 \rangle}^T \hat{\bfB}_{\langle i,0 \rangle}$,  $\hat{\bfA}_{\langle i,0 \rangle}^T \hat{\bfB}_{\langle i,1 \rangle}$, $\hat{\bfA}_{\langle i,1 \rangle}^T \hat{\bfB}_{\langle i,0 \rangle}$, $\hat{\bfA}_{\langle i,1 \rangle}^T \hat{\bfB}_{\langle i,1 \rangle}$. We can represent the computations in the $i$-th worker node using Kronecker products. We take $\hat{\bfA}_{\langle i,0 \rangle}^T \hat{\bfB}_{\langle i,1 \rangle}$ as an example. Let $\bfZ$ denote a $1 \times 16$ block matrix that contains all of the pair-wise products $\bfA_{\langle a, k_1 \rangle}^T\bfB_{\langle b, k_2 \rangle}$, $a,b,k_1, k_2 = 0, 1$. Consider the following vector (of length 16).

	\begin{align*}
	 \begin{bmatrix}
	\bfI(0, 0) \\ \bfI(1, 0) \\ \bfR_\theta^i(0, 0) \\ \bfR_\theta^i(1, 0)
	\end{bmatrix} \otimes \begin{bmatrix}
	\bfI(0, 1) \\ \bfI(1, 1) \\ \bfR_\theta^{2i}(0, 1) \\ \bfR_\theta^{2i}(1, 1)
	\end{bmatrix}.
	\end{align*}
	Then the computation of $\hat{\bfA}_{\langle i,0 \rangle}^T \hat{\bfB}_{\langle i,1 \rangle}$ can be denoted as the product of each of the elements of $\bfZ$ with the corresponding component of the above vector followed by their sum. For the sake of convenience we represent this operation by the $\cdot$ operator below.
	Then we can verify that the computations in $i$-th worker node can be denoted as
	\begin{align*}
	\bfZ \cdot \begin{bmatrix}
	\bfI \\
	\bfR_\theta^i
	\end{bmatrix} \otimes
	\begin{bmatrix}
	\bfI \\
	\bfR_\theta^{2i}
	\end{bmatrix}
	\end{align*}
	
	Suppose that four different worker nodes $i_0, i_1, i_2, i_3$ have finished their work, the master node obtain
	\begin{align*}
	\bfZ \cdot\bfG_d = \bfZ \cdot  \bigg( \begin{bmatrix}
	\bfI \\
	\bfR_\theta^{i_0}
	\end{bmatrix} \otimes
	\begin{bmatrix}
	\bfI \\
	\bfR_\theta^{2i_0}
	\end{bmatrix} ~\bigg{|}~
	\begin{bmatrix}
	\bfI \\
	\bfR_\theta^{i_1}
	\end{bmatrix} \otimes
	\begin{bmatrix}
	\bfI \\
	\bfR_\theta^{2i_1}
	\end{bmatrix} ~\bigg{|}~
	\begin{bmatrix}
	\bfI \\
	\bfR_\theta^{i_2}
	\end{bmatrix} \otimes
	\begin{bmatrix}
	\bfI \\
	\bfR_\theta^{2i_2}
	\end{bmatrix} ~\bigg{|}~
	\begin{bmatrix}
	\bfI \\
	\bfR_\theta^{i_3}
	\end{bmatrix} \otimes
	\begin{bmatrix}
	\bfI \\
	\bfR_\theta^{2i_3}
	\end{bmatrix}
	\bigg)
	\end{align*}
	
	We formalize the above construction and prove the $\bfG_d$ has full rank in Theorem \ref{thm:matMatRotEmbed}.
	\end{example}
	
\begin{theorem}
	\label{thm:matMatRotEmbed}
	The threshold for the rotation matrix based matrix-matrix multiplication scheme is $k_A k_B$. The worst-case condition number is bounded by $O(q^{q-k_A k_B+c_1})$.
\end{theorem}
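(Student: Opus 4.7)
The plan is to extend the eigendecomposition-plus-permutation technique of Theorem~\ref{thm:rotmat_mv} to accommodate the Kronecker product structure that arises because each worker now computes four pairwise products rather than a single inner product. Suppose that the workers indexed by $i_0, \dots, i_{k_A k_B - 1}$ have completed their tasks. Extrapolating the example preceding the theorem, the master node's observations can be written as $\bfZ \cdot \bfG_d$, where $\bfG_d$ is a block matrix whose $s$-th block column is
\begin{equation*}
\begin{bmatrix}\bfI \\ \bfR_\theta^{i_s} \\ \vdots \\ \bfR_\theta^{(k_A-1)i_s}\end{bmatrix} \otimes \begin{bmatrix}\bfI \\ \bfR_\theta^{k_A i_s} \\ \vdots \\ \bfR_\theta^{(k_B-1)k_A i_s}\end{bmatrix}.
\end{equation*}
Decodability of $\bfZ$ is equivalent to invertibility of $\bfG_d$, and the condition number $\kappa(\bfG_d)$ governs the numerical stability.

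Using \eqref{eq:rotmat_eig} together with the mixed-product property of the Kronecker product, each block entry of $\bfG_d$ factors as $(\bfQ \otimes \bfQ)(\Lambda^{a i_s} \otimes \Lambda^{b k_A i_s})(\bfQ^* \otimes \bfQ^*)$ for $a \in [k_A]$ and $b \in [k_B]$. Pulling the common unitary factors $\bfI_{k_A k_B} \otimes \bfQ \otimes \bfQ$ and $\bfI_{k_A k_B} \otimes \bfQ^* \otimes \bfQ^*$ out on the left and right of $\bfG_d$ preserves the condition number and reduces the analysis to a matrix $\mathbf{\tilde{\Lambda}}$ whose blocks are $4 \times 4$ diagonal matrices with entries $e^{\rm{i}\,i_s (\epsilon_1 a + \epsilon_2 b k_A)\theta}$ for $(\epsilon_1, \epsilon_2) \in \{+1,-1\}^2$. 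A row-and-column permutation in the spirit of Claim~\ref{claim:diag_block_matrix} groups the entries by sign pattern, turning $\mathbf{\tilde{\Lambda}}$ into a block-diagonal matrix with four $k_A k_B \times k_A k_B$ blocks, one for each $(\epsilon_1, \epsilon_2)$.

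The crux of the argument is then to show that each such block is unitarily equivalent to a Vandermonde matrix with $k_A k_B$ distinct parameters chosen from $\{1, \omega_q, \dots, \omega_q^{q-1}\}$. The $(+1,+1)$ block has exponents $\{a + b k_A : a \in [k_A],\, b \in [k_B]\} = \{0, 1, \dots, k_A k_B - 1\}$, which directly gives a standard Vandermonde matrix in the parameters $\omega_q^{i_s} = e^{\rm{i}\,i_s \theta}$. A direct enumeration shows that the exponent sets for the remaining three sign patterns are also $k_A k_B$ \emph{consecutive} integers, all of which lie in an interval of length $k_A k_B - 1 < q$; factoring the minimum exponent out of each column as a monomial (a unitary diagonal rescaling, since each factor has unit modulus) reduces each of these blocks to a standard Vandermonde matrix with the same parameter set. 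Distinctness of $\omega_q^{i_0}, \dots, \omega_q^{i_{k_A k_B - 1}}$ is immediate from $n \leq q$. The main subtlety I anticipate is cleanly executing the permutation and the column-monomial factorization so the reduction to Theorem~\ref{thm:cond_no_vand} is unambiguous; the remaining steps are essentially bookkeeping.

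Combining these reductions, $\bfG_d$ is unitarily equivalent (modulo permutations and unimodular diagonal scalings) to a block-diagonal matrix whose four diagonal blocks are standard Vandermonde matrices of size $k_A k_B$ with distinct parameters from $\{1, \omega_q, \dots, \omega_q^{q-1}\}$. Invertibility of each block establishes the threshold $k_A k_B$, and applying Theorem~\ref{thm:cond_no_vand} with $m = k_A k_B$ bounds each block's condition number by $O(q^{q - k_A k_B + c_1})$; this bound transfers to $\kappa(\bfG_d)$ since unitary transformations and unimodular diagonal rescalings leave condition numbers unchanged, yielding the claimed worst-case bound.
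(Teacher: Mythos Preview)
Your proposal is correct and follows essentially the same approach as the paper: factor out unitary Kronecker products using the eigendecomposition of $\bfR_\theta$, permute the resulting matrix of diagonal blocks into a four-block block-diagonal form, and recognize each block as (unitarily and unimodularly equivalent to) a Vandermonde matrix with parameters in $\{1,\omega_q,\dots,\omega_q^{q-1}\}$ before invoking Theorem~\ref{thm:cond_no_vand}. The paper packages the permutation-and-Vandermonde step into Claim~\ref{claim:simplified_proof_kron_inv}(ii) rather than arguing it inline, whereas your explicit observation that the exponent sets $\{\epsilon_1 a + \epsilon_2 b k_A\}$ are $k_Ak_B$ consecutive integers (so a unimodular column rescaling reduces each block to standard Vandermonde form) is a slightly more direct rendering of the same fact. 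One small imprecision: the left unitary factor is $(\bfI_{k_A}\otimes\bfQ)\otimes(\bfI_{k_B}\otimes\bfQ)$ rather than $\bfI_{k_Ak_B}\otimes\bfQ\otimes\bfQ$; these differ by a permutation, so the argument is unaffected, but you should state the factor correctly.
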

\begin{proof}
	\vspace{-0.15in}	
	Let $\tau = k_A k_B$ and suppose that the workers indexed by $i_0, \dots, i_{\tau -1}$ complete their tasks. Let $\bfG^A_{l}$ denote the $l$-th block column of $\bfG^A$ (with similar notation for $\bfG^B$).
	Note that for $k_1, k_2 \in \{0,1\}$ the $l$-th worker node computes $\hat{\bfA}_{\langle l,k_1 \rangle}^T \hat{\bfB}_{\langle l,k_2 \rangle}$ which can be written as
			\vspace{-0.15in}	
	\begin{align*}
&\left(\sum_{\alpha \in [k_A], \beta \in \{0,1\}} \bfG^A(2\alpha + \beta,2l + k_1) \bfA^T_{\langle \alpha,\beta \rangle}\right) \times \left(\sum_{\alpha \in [k_B], \beta \in \{0,1\}} \bfG^B(2\alpha + \beta,2l + k_2) \bfB_{\langle \alpha,\beta \rangle}\right)\\
		&\equiv \bfZ \cdot (\bfG^A(:,2l + k_1) \otimes \bfG^B(:,2l + k_2)),
	\end{align*}
%
	using the properties of the Kronecker product. Based on this, it can be observed that the decodability of $\bfZ$ at the master node is equivalent to checking whether the following matrix is full-rank.
	\begin{align*}
		\tilde{\bfG} &= [\bfG^A_{i_0} \otimes \bfG^B_{i_0} |  \bfG^A_{i_1} \otimes \bfG^B_{i_1} | \dots | \bfG^A_{i_{\tau-1}} \otimes \bfG^B_{i_{\tau-1}}].
	\end{align*}
	\noindent To analyze this matrix, consider the following decomposition of $\bfG^A_{l} \otimes \bfG^B_{l}$, for $l \in [n]$.
	\begin{align*}
		&\bfG^A_{l} \otimes \bfG^B_{l} =
		 \begin{bmatrix}
			\bfQ \bfQ^* \\
			\bfQ \Lambda^l \bfQ^*\\
			\vdots\\
			\bfQ \Lambda^{l(k_A -1)} \bfQ^*
		\end{bmatrix}
		\otimes
		\begin{bmatrix}
			\bfQ \bfQ^* \\
			\bfQ \Lambda^{l k_A} \bfQ^*\\
			\vdots\\
			\bfQ \Lambda^{l k_A(k_B -1)} \bfQ^*
		\end{bmatrix} = \\
		& \left( (\bfI_{k_A} \otimes \bfQ)
		\begin{bmatrix}
			\bfI \\
			\Lambda^{l}\\
			\vdots\\
			\Lambda^{l (k_A -1)}
		\end{bmatrix}
		\begin{bmatrix}
			\bfQ^*
		\end{bmatrix}
        \right)
		\otimes
        \left(
		(\bfI_{k_B} \otimes \bfQ)
		\begin{bmatrix}
			\bfI \\
			\Lambda^{l k_A}\\
			\vdots\\
			\Lambda^{l k_A(k_B -1)}
		\end{bmatrix}
		\begin{bmatrix}
			\bfQ^*
		\end{bmatrix} \right),
\end{align*}
where the first equality uses the eigen-decomposition of $\bfR_\theta$.
Applying the properties of Kronecker products, this can be simplified as
\begin{align*}
		\underbrace{
			\left((\bfI_{k_A} \otimes \bfQ) \otimes (\bfI_{k_B} \otimes \bfQ)\right)
		}_{\tilde{\bfQ}_1} \times
		\underbrace{\left( \begin{bmatrix}
				\bfI \\
				\Lambda^{l}\\
				\vdots\\
				\Lambda^{l (k_A -1)}
			\end{bmatrix}
			\otimes
			\begin{bmatrix}
				\bfI \\
				\Lambda^{l k_A}\\
				\vdots\\
				\Lambda^{l k_A(k_B -1)}
			\end{bmatrix}
			\right)}_{\bfX_l}
		\underbrace{
			\left(\begin{bmatrix}
				\bfQ^*
			\end{bmatrix}^{\otimes 2}
			\right)}_{\tilde{\bfQ}_2}.
	\end{align*}
Therefore, we can express
	\begin{align*}
		\tilde{\bfG}&= [\bfG^A_{i_0} \otimes \bfG^B_{i_0} |  \bfG^A_{i_1} \otimes \bfG^B_{i_1} | \dots | \bfG^A_{i_{\tau-1}} \otimes \bfG^B_{i_{\tau-1}}]\\
		&= \tilde{\bfQ}_1 [ \bfX_{i_0} | \bfX_{i_1} | \dots | \bfX_{i_{\tau-1}}]
		\begin{bmatrix}
			\tilde{\bfQ}_2 & 0 & \dots & 0\\
			0 & \tilde{\bfQ}_2 & \dots & 0\\
			\vdots & \vdots & \ddots & \vdots\\
			0 & 0 & \dots & \tilde{\bfQ}_2
		\end{bmatrix}.
	\end{align*}
	Once again, we can conclude that the invertibility and the condition number of $\tilde{\bfG}$ only depends on $[ \bfX_{i_0} | \bfX_{i_1} | \dots | \bfX_{i_{\tau-1}}]$ as the matrices pre- and post- multiplying it are both unitary. The invertibility of $[ \bfX_{i_0} | \bfX_{i_1} | \dots | \bfX_{i_{\tau-1}}]$ follows from an application of Claim \ref{claim:simplified_proof_kron_inv} in the Appendix \ref{sec:aux_claims}. The proof of Claim \ref{claim:simplified_proof_kron_inv} also shows that upon appropriate row-column permutations, the matrix $[ \bfX_{i_0} | \bfX_{i_1} | \dots | \bfX_{i_{\tau-1}}]$ can be expressed as a block-diagonal matrix with four blocks each of size $\tau \times \tau$. Each of these blocks is a Vandermonde matrix with parameters from the set $\{1, \omega_q, \omega_q^2, \dots, \omega_q^{q-1}\}$. Therefore, $[ \bfX_{i_0} | \bfX_{i_1} | \dots | \bfX_{i_{\tau-1}}]$ is non-singular and it follows that the threshold of our scheme is $k_A k_B$. An application of Theorem \ref{thm:cond_no_vand} implies that the worst-case condition number is at most $O(q^{q-\tau + c_1})$.
\end{proof}

\begin{remark}
The proofs of Theorem \ref{thm:rotmat_mv} and \ref{thm:matMatRotEmbed} involve a diagonalization argument with pre- and post-multiplying matrices that are unitary. We emphasize that this is only for the analysis of the scheme and the encoding and decoding schemes do not require multiplication by these matrices.
\end{remark}

\noindent {\bf Complexity Analysis:} Creating the $\hat{\bfA}_{\langle i, l\rangle}$ matrix requires a total of $\Delta_A$ scalar multiplications and $\Delta_A -1$ additions of block-columns of size $t \times r/\Delta_A$; a similar argument applies for creating the $\hat{\bfB}_{\langle i, l \rangle}$ matrix (note that $\Delta_A = 2 k_A, \Delta_B = 2k_B$). Thus, the total encoding complexity is given by $O((r + w)tn)$. Each worker node computes four submatrix products. Thus, the worker node computational cost is $O(4 \times rtw/\Delta_A \Delta_B) = O(rtw/k_A k_B)$. The decoding process involves inverting a matrix of dimension $\Delta_A\Delta_B \times \Delta_A \Delta_B$ followed by solving $rw/\Delta_A \Delta_B$ systems of equations. Thus, the overall decoding complexity is given by $O((\Delta_A \Delta_B)^3 + rw \Delta_A \Delta_B)$. It can be seen that the decoding complexity is independent of $t$. Thus, when the input matrices are large, i.e., $r, w$ and $t$ are large, then the overall cost is dominated by the worker node computation time.

\section{Generalized Distributed Matrix Multiplication}
\label{sec:gDMM}
In the previous section, we consider the case that $\bfA$ and $\bfB$ are partitioned into block-columns. In this section, we consider a more general scenario where $\bfA$ and $\bfB$ are partitioned into block-columns and block-rows. This construction resembles the entangled polynomial codes of \cite{yu2018straggler}.

\subsection{Matrix Splitting Scheme}
We partition the matrices $\bfA$ and $\bfB$ into $2p$ block-rows and $\Delta_A=k_A$ block-columns, and $2p$ block-rows and $\Delta_B=k_B$ block-columns respectively. We use two indices for the block-rows to simplify our presentation. In particular, we denote
\begin{align}
    \bfA &= [\bfA_{(\langle i,l\rangle, j)}], i\in [p], l \in \{0,1\}, j\in [k_A], \text{~and} \nonumber\\
    \bfB &= [\bfB_{(\langle i,l\rangle, j)}], i\in [p], l \in \{0,1\}, j\in [k_B], \label{eq:gen_matrix_subdivide}
\end{align}
where $\bfA_{(\langle i,l\rangle, j)}$ denotes the submatrix indexed by the $\langle i,l\rangle$-th block row and $j$-th block-column of $\bfA$. A similar interpretation holds for $\bfB_{(\langle i,l\rangle, j)}$. 
We let $\theta = 2\pi/q$, where $q \geq n > 2k_Ak_Bp-1$ (recall that $n$ is the number of worker nodes) is an odd integer.

The encoding in this scenario is more complicated to express. We simplify this by leveraging the following simple result which can be easily verified. 
\begin{lemma}
	\label{lemma:encodingKron}
	Suppose that matrices $\bfM_1$ and $\bfM_2$ both have $\zeta$ rows and the same column dimension.
	Consider a $2\times 2$ matrix $\mathbf{\Psi} =[\Psi_{i,j}]$, $i=0,1, j=0,1$. Then
	\begin{align*}
	\begin{bmatrix}
	\Psi_{0,0}\bfM_1+\Psi_{0,1}\bfM_2\\
	\Psi_{1,0}\bfM_1+\Psi_{1,1}\bfM_2
	\end{bmatrix}
	=(\mathbf{\Psi} \otimes \bfI_{\zeta})
	\begin{bmatrix}
	\bfM_1\\ \bfM_2
	\end{bmatrix}.
	\end{align*}
\end{lemma}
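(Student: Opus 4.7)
The plan is to unfold the Kronecker product $\mathbf{\Psi}\otimes \bfI_\zeta$ in its block form and then carry out block matrix multiplication against the stacked matrix $[\bfM_1^T\ \bfM_2^T]^T$. By the definition of the Kronecker product,
\begin{align*}
\mathbf{\Psi}\otimes \bfI_\zeta = \begin{bmatrix} \Psi_{0,0}\bfI_\zeta & \Psi_{0,1}\bfI_\zeta \\ \Psi_{1,0}\bfI_\zeta & \Psi_{1,1}\bfI_\zeta \end{bmatrix},
\end{align*}
which is a $2\zeta \times 2\zeta$ matrix organized into $2\times 2$ blocks of size $\zeta \times \zeta$. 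Since $\bfM_1$ and $\bfM_2$ both have $\zeta$ rows, the column vector $[\bfM_1^T\ \bfM_2^T]^T$ is conformable for block multiplication.

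I would then multiply these two block matrices using standard block matrix arithmetic. The $i$-th block row of the product equals $\Psi_{i,0}\bfI_\zeta \bfM_1 + \Psi_{i,1}\bfI_\zeta \bfM_2 = \Psi_{i,0}\bfM_1 + \Psi_{i,1}\bfM_2$ for $i=0,1$, which is exactly the $i$-th block of the left-hand side. Stacking the two blocks yields the claimed identity.

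There is no real obstacle here; the lemma is essentially a restatement of the mixed-product/block-structure property of the Kronecker product, and the proof is a direct calculation of one line once the block form of $\mathbf{\Psi}\otimes \bfI_\zeta$ is written out.
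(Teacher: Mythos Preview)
Your proposal is correct and is exactly the direct verification the paper has in mind; the paper itself gives no proof, merely stating that the result ``can be easily verified.'' Writing out $\mathbf{\Psi}\otimes \bfI_\zeta$ in its $2\times 2$ block form and performing block multiplication is the intended one-line check.
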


The complete encoding algorithm appears in Algorithm \ref{Alg:Gen_MM_Embedding_Scheme}.

\begin{algorithm}[t]
	\caption{Encoding scheme for generalized distributed matrix-matrix multiplication}
	\label{Alg:Gen_MM_Embedding_Scheme}
   \textbf{Input:} Matrices $\bfA$ and $\bfB$. Storage fractions $\gamma_A = 1/pk_A,\gamma_B = 1/pk_B$. Integer $\zeta = \frac{t}{2p}$.\\
   \textbf{Output:} Worker task assignment.
   \begin{algorithmic}
       \STATE Partition $\bfA$ and $\bfB$ into $2p \times \Delta_A$ and $2p \times \Delta_B$ blocks as in (\ref{eq:gen_matrix_subdivide}).
        \FOR{$k=0$ {\bfseries to} $n-1$}
        \STATE Worker $k$ is assigned
	    \begin{align*}
	    \begin{bmatrix}
	    \hat{\bfA}_{\langle k, 0 \rangle}\\
	    \hat{\bfA}_{\langle k, 1 \rangle}
	    \end{bmatrix}
	    &=\sum_{i=0}^{p-1}\sum_{j=0}^{k_A-1}(\bfR_{-\theta}^{k((j-1)p+i+1)}\otimes \bfI_{\zeta})\begin{bmatrix}
	    \bfA_{(\langle i, 0 \rangle, j)} \\ \bfA_{(\langle i, 1 \rangle, j)}
	    \end{bmatrix}, \text{~and}\\
	    \begin{bmatrix}
	    \hat{\bfB}_{\langle k, 0 \rangle}\\
	    \hat{\bfB}_{\langle k, 1 \rangle}
	    \end{bmatrix}
	    &=\sum_{i=0}^{p-1}\sum_{j=0}^{k_B-1}(\bfR_{\theta}^{k(p-1-i+jpk_A)} \otimes \bfI_{\zeta})\begin{bmatrix}
	    \bfB_{(\langle i, 0 \rangle, j)} \\ \bfB_{(\langle i, 1 \rangle, j)}
	    \end{bmatrix}.
	    \end{align*}
        \ENDFOR
        \STATE Worker $k$ computes
        \begin{align*}
        \begin{bmatrix}
        \hat{\bfA}_{\langle k, 0 \rangle}\\
        \hat{\bfA}_{\langle k, 1 \rangle}
        \end{bmatrix}^T
        \begin{bmatrix}
        \hat{\bfB}_{\langle k, 0 \rangle}\\
        \hat{\bfB}_{\langle k, 1 \rangle}
        \end{bmatrix}.
        \end{align*}
   \end{algorithmic}
\end{algorithm}

The $k$-th worker node stores $\hat{\bfA}_{\langle k,l \rangle}$, $\hat{\bfB}_{\langle k, l \rangle}$, $l = 0, 1$. Thus, each worker node stores  $\gamma_A = \frac{2}{2pk_A}=\frac{1}{pk_A}$ and $\gamma_B = \frac{2}{2pk_B}=\frac{1}{pk_B}$ fraction of $\bfA$ and $\bfB$ respectively. Worker node $k$ computes
\begin{align}
\begin{bmatrix}
\hat{\bfA}_{\langle k, 0 \rangle}\\
\hat{\bfA}_{\langle k, 1 \rangle}
\end{bmatrix}^T
\begin{bmatrix}
\hat{\bfB}_{\langle k, 0 \rangle}\\
\hat{\bfB}_{\langle k, 1 \rangle}
\end{bmatrix}. \label{eq:gen_matmat_worker_comp}
\end{align}

Before presenting our decoding algorithm and the main result of this section, we discuss the following example that helps clarify the underlying ideas.
\begin{example}
\label{eg:gen_matmat}
	Suppose $k_A = 1, k_B = 1, p = 2$. Let $n=4$. The matrix $\bfA$ and $\bfB$ can be partitioned as follows.
	\begin{align*}
	\bfA = \begin{bmatrix}
	\bfA_{(\langle 0,0\rangle, 0)}\\
	\bfA_{(\langle 0,1\rangle, 0)}\\
	\bfA_{(\langle 1,0\rangle, 0)}\\
	\bfA_{(\langle 1,1\rangle, 0)}\\
	\end{bmatrix},\text{~and~}
	\bfB = \begin{bmatrix}
	\bfB_{(\langle 0,0\rangle, 0)}\\
	\bfB_{(\langle 0,1\rangle, 0)}\\
	\bfB_{(\langle 1,0\rangle, 0)}\\
	\bfB_{(\langle 1,1\rangle, 0)}\\
	\end{bmatrix}.
	\end{align*}
	
	In this example, since $k_A=k_B=1$, there is only one block column in $\bfA$ and $\bfB$. Therefore, the index $j$ in $\bfA_{(\langle i, l\rangle, j)}$ and $\bfB_{(\langle i, l\rangle, j)}$ is always $0$. Accordingly, to simplify our presentation, we only use indices $i$ and $l$ to refer to the respective constituent block rows of $\bfA$ and $\bfB$. That is, we simplify $\bfA_{(\langle i, l\rangle, j)}$ and $\bfB_{(\langle i, l\rangle, j)}$ to  $\bfA_{\langle i, l\rangle}$ and $\bfB_{\langle i, l\rangle}$, respectively. Our scheme aims to allow the master node to recover $\bfA^T\bfB=\bfA_{\langle 0,0\rangle}^T\bfB_{\langle 0,0\rangle}+\bfA_{\langle 0,1\rangle}^T\bfB_{(\langle 0,1\rangle}+\bfA_{\langle 1,0\rangle}^T\bfB_{\langle 1,0\rangle}+\bfA_{\langle 1,1\rangle}^T\bfB_{\langle 1,1\rangle}$. Suppose that $\bfA_{\langle i, l \rangle}$ and $\bfB_{\langle i, l \rangle}$ have $\zeta$ rows. The encoding process ({\it cf.} Algorithm \ref{Alg:Gen_MM_Embedding_Scheme}) can be defined as
	\begin{align*}
		\begin{bmatrix}
			\hat{\bfA}_{\langle k, 0 \rangle}\\
			\hat{\bfA}_{\langle k, 1 \rangle}
		\end{bmatrix}
		&=\sum_{i=0}^1 (\bfR_{-\theta}^{k(i-1)}\otimes \bfI_{\zeta})\begin{bmatrix}
		\bfA_{\langle i, 0 \rangle} \\ \bfA_{\langle i, 1 \rangle}
		\end{bmatrix}, \text{~and}\\
		\begin{bmatrix}
		\hat{\bfB}_{\langle k, 0 \rangle}\\
		\hat{\bfB}_{\langle k, 1 \rangle}
		\end{bmatrix}
		&=\sum_{i=0}^1 (\bfR_\theta^{k(1-i)}\otimes \bfI_{\zeta})\begin{bmatrix}
		\bfB_{\langle i, 0 \rangle} \\ \bfB_{\langle i, 1 \rangle}
		\end{bmatrix}.
	\end{align*}
The computation in worker node $k$ ({\it cf.} (\ref{eq:gen_matmat_worker_comp})) can be analyzed as follows. Let
$\begin{bmatrix}
	\bfA^\calF_{\langle i, 0 \rangle}\\
	\bfA^\calF_{\langle i, 1 \rangle}
	\end{bmatrix}=(\bfQ^*\otimes \bfI_{\zeta})\begin{bmatrix}
	\bfA_{\langle i, 0 \rangle}\\
	\bfA_{\langle i, 1 \rangle}
	\end{bmatrix}$
	and
	$\begin{bmatrix}
	\bfB^\calF_{\langle i, 0 \rangle}\\
	\bfB^\calF_{\langle i, 1 \rangle}
	\end{bmatrix}=(\bfQ^*\otimes \bfI_{\zeta})\begin{bmatrix}
	\bfB_{\langle i, 0 \rangle}\\
	\bfB_{\langle i, 1 \rangle}
	\end{bmatrix}$. Then
	
	\begin{align*}
	\begin{bmatrix}
	\hat{\bfA}_{\langle k, 0 \rangle}\\
	\hat{\bfA}_{\langle k, 1 \rangle}
	\end{bmatrix}^T
	\begin{bmatrix}
	\hat{\bfB}_{\langle k, 0 \rangle}\\
	\hat{\bfB}_{\langle k, 1 \rangle}
	\end{bmatrix} \overset{(a)}{=}&  \bigg((\bfQ^*\otimes \bfI_{\zeta})\begin{bmatrix}
	\hat{\bfA}_{\langle k, 0 \rangle}\\
	\hat{\bfA}_{\langle k, 1 \rangle}
	\end{bmatrix}\bigg)^* (\bfQ^* \otimes \bfI_{\zeta})\begin{bmatrix}
	\hat{\bfB}_{\langle k, 0 \rangle}\\
	\hat{\bfB}_{\langle k, 1 \rangle}
	\end{bmatrix}\\
	=&\bigg((\bfQ^*\otimes \bfI_{\zeta})(\bfR_{-\theta}^{-k}\otimes \bfI_{\zeta})\begin{bmatrix}
	\bfA_{\langle 0, 0\rangle}\\
	\bfA_{\langle 0, 1\rangle}
	\end{bmatrix}+(\bfQ^*\otimes \bfI_{\zeta})(\bfI_2\otimes \bfI_{\zeta})\begin{bmatrix}
	\bfA_{\langle 1, 0\rangle}\\
	\bfA_{\langle 1, 1\rangle}
	\end{bmatrix}\bigg)^*\\
	&\bigg((\bfQ^*\otimes \bfI_{\zeta})(\bfR_{\theta}^{k}\otimes \bfI_{\zeta})\begin{bmatrix}
	\bfB_{\langle 0, 0\rangle}\\
    \bfB_{\langle 0, 1\rangle}
	\end{bmatrix}+(\bfQ^* \otimes \bfI_{\zeta})(\bfI_2\otimes \bfI_{\zeta})\begin{bmatrix}
	\bfB_{\langle 1, 0\rangle}\\
	\bfB_{\langle 1, 1\rangle}
	\end{bmatrix}\bigg)\\
	 \overset{(b)}{=}&\bigg((\bfQ^*\bfR_{-\theta}^{-k}\bfQ\otimes \bfI_{\zeta})(\bfQ^*\otimes \bfI_{\zeta})\begin{bmatrix}
	\bfA_{\langle 0, 0\rangle}\\
	\bfA_{\langle 0, 1\rangle}
	\end{bmatrix}+(\bfQ^*\bfI_2\bfQ\otimes \bfI_{\zeta})(\bfQ^*\otimes \bfI_{\zeta})\begin{bmatrix}
    \bfA_{\langle 1, 0\rangle}\\
	\bfA_{\langle 1, 1\rangle}
	\end{bmatrix}\bigg)^*\\
	&\bigg((\bfQ^*\bfR_{\theta}^{k}\bfQ\otimes \bfI_{\zeta})(\bfQ^*\otimes \bfI_{\zeta})\begin{bmatrix}
	\bfB_{\langle 0, 0\rangle}\\
	\bfB_{\langle 0, 1\rangle}
	\end{bmatrix} + (\bfQ^*\bfI_2\bfQ \otimes \bfI_{\zeta})(\bfQ^* \otimes \bfI_{\zeta})\begin{bmatrix}
	\bfB_{\langle 1, 0\rangle}\\
	\bfB_{\langle 1, 1\rangle}
	\end{bmatrix}\bigg)\\
	\overset{(c)}{=}&\bigg(\left(\begin{bmatrix}
	{\omega_q^*}^{-k} & 0\\
	0 & {\omega_q^*}^{k}
	\end{bmatrix}\otimes \bfI_{\zeta}\right)(\bfQ^*\otimes \bfI_{\zeta})\begin{bmatrix}
	\bfA_{\langle 0, 0\rangle}\\
	\bfA_{\langle 0, 1\rangle}
	\end{bmatrix}+\left(\begin{bmatrix}
	1 & 0\\
	0 & 1
	\end{bmatrix}\otimes \bfI_{\zeta}\right)(\bfQ^* \otimes \bfI_{\zeta})\begin{bmatrix}
    \bfA_{\langle 1, 0\rangle}\\
	\bfA_{\langle 1, 1\rangle}
	\end{bmatrix}\bigg)^*\\
	&\bigg( \left(\begin{bmatrix}
	{\omega_q}^{k} & 0\\
	0 & {\omega_q}^{-k}
	\end{bmatrix}\otimes \bfI_{\zeta}\right)(\bfQ^* \otimes \bfI_{\zeta})\begin{bmatrix}
	\bfB_{\langle 0, 0\rangle}\\
	\bfB_{\langle 0, 1\rangle}
	\end{bmatrix}+\left(\begin{bmatrix}
	1 & 0\\
	0 & 1
	\end{bmatrix}\otimes \bfI_{\zeta}\right)(\bfQ^* \otimes \bfI_{\zeta})\begin{bmatrix}
	\bfB_{\langle 1, 0\rangle}\\
	\bfB_{\langle 1, 1\rangle}
	\end{bmatrix}\bigg)\\
	\overset{(d)}{=}&\bigg(\begin{bmatrix}
	{\omega_q^*}^{-k}\bfA_{\langle 0, 0\rangle}^\calF\\
	{\omega_q^*}^{k}\bfA_{\langle 0, 1\rangle}^\calF
	\end{bmatrix}+\begin{bmatrix}
	\bfA_{\langle 1, 0\rangle}^\calF\\
	\bfA_{\langle 1, 1\rangle}^\calF
	\end{bmatrix}\bigg)^*\bigg(\begin{bmatrix}
	{\omega_q}^{k}\bfB_{\langle 0, 0\rangle}^\calF\\
	{\omega_q}^{-k}\bfB_{\langle 0, 1\rangle}^\calF
	\end{bmatrix}+\begin{bmatrix}
	\bfB_{\langle 1, 0\rangle}^\calF\\
	\bfB_{\langle 1, 1\rangle}^\calF
	\end{bmatrix}\bigg)\\
	=&(\bfA^{\calF *}_{\langle 0,0\rangle}\bfB^{\calF}_{\langle 1,0\rangle}+\bfA^{\calF *}_{\langle 1,1\rangle}\bfB^{\calF}_{\langle 0,1\rangle})\omega_q^{-k}+\\
	&(\bfA^{\calF *}_{\langle 0,0\rangle}\bfB^{\calF}_{\langle 0,0\rangle}+\bfA^{\calF *}_{\langle 1,0\rangle}\bfB^{\calF}_{\langle 1,0\rangle}+\bfA^{\calF *}_{\langle 0,1\rangle}\bfB^{\calF}_{\langle 0,1\rangle}+ \bfA^{\calF *}_{\langle 1,1\rangle}\bfB^{\calF}_{\langle 1,1\rangle})+\\
	&(\bfA^{\calF *}_{\langle 1,0\rangle}\bfB^{\calF}_{\langle 0,0\rangle}+\bfA^{\calF *}_{\langle 0,1\rangle}\bfB^{\calF}_{\langle 1,1\rangle})\omega_q^{k}
	\end{align*}
    where
    \begin{itemize}
    	\item $(a)$ holds because $\bfQ^* \otimes \bfI_\zeta$ is unitary,
    	\item $(b)$ holds by the mixed-product property of Kronecker product. For example,
    	\begin{align*}
    	(\bfQ^* \otimes \bfI_\zeta)(\bfR_{-\theta}^{-k}\otimes \bfI_\zeta) &= (\bfQ^*\bfR_{-\theta}^{-k})\otimes \bfI_\zeta\\
    	&=(\bfQ^*\bfR_{-\theta}^{-k}\bfQ \bfQ^*)\otimes \bfI_\zeta\\
    	&=(\bfQ^*\bfR_{-\theta}^{-k}\bfQ \otimes \bfI_\zeta)(\bfQ^*\otimes\bfI_{\zeta}).
    	\end{align*}
    	\item 
    	$(c)$ holds because $\bfQ^*\bfR_{\theta}\bfQ=\begin{bmatrix}
    	\omega_q & 0\\
    	0 &\omega_q^{-1}
    	\end{bmatrix}$, and
    	\item $(d)$ holds by Lemma \ref{lemma:encodingKron}.
    \end{itemize}

    Thus, it is clear that whenever the master node collects the results of any three distinct worker nodes, it can recover $\bfA^{\calF *}_{\langle 0,0\rangle}\bfB^{\calF}_{\langle 0,0\rangle}+\bfA^{\calF *}_{\langle 1,0\rangle}\bfB^{\calF}_{\langle 1,0\rangle}+\bfA^{\calF *}_{\langle 0,1\rangle}\bfB^{\calF}_{\langle 0,1\rangle}+ \bfA^{\calF *}_{\langle 1,1\rangle}\bfB^{\calF}_{\langle 1,1\rangle}$.
    However, we observe that for $i = 0,1$
    \begin{align*}
    \begin{bmatrix}
    \bfA^\calF_{\langle i, 0 \rangle}\\
    \bfA^\calF_{\langle i, 1 \rangle}
    \end{bmatrix}^*
    \begin{bmatrix}
    	\bfB^\calF_{\langle i, 0 \rangle}\\
    	\bfB^\calF_{\langle i, 1 \rangle}
    \end{bmatrix}=\begin{bmatrix}
    \bfA_{\langle i, 0 \rangle}\\
    \bfA_{\langle i, 1 \rangle}
    \end{bmatrix}^T\begin{bmatrix}
    \bfB_{\langle i, 0 \rangle}\\
    \bfB_{\langle i, 1 \rangle}
    \end{bmatrix}.
    \end{align*}
    Thus, we can equivalently recover $\bfA^T \bfB$.
\end{example}
The analysis in the example above can be generalized to show the following result. The proof appears in the Appendix \ref{sec:proof_genMatMat}.
\begin{theorem}
	\label{thm:genMatMat}
	The threshold for scheme in this section is $2pk_Ak_B - 1$. The worst-case condition number of the recovery matrices is upper bounded by $O(q^{q-2pk_Ak_B+1 + c_1})$.
\end{theorem}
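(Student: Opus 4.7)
The plan is to generalize the explicit calculation of Example~\ref{eg:gen_matmat} to arbitrary $p, k_A, k_B$ by reducing the $k$-th worker's computation, through a unitary change of basis, to the evaluation of a Laurent polynomial in $z_k = \omega_q^k$ whose support has exactly $2pk_Ak_B - 1$ monomials and whose coefficients encode the $k_A k_B$ blocks of $\bfA^T \bfB$.

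Define the transformed blocks by applying $(\bfQ^* \otimes \bfI_\zeta)$ to each stacked pair of block-rows:
\begin{align*}
\begin{bmatrix}\bfA^\calF_{(\langle i,0\rangle, j)} \\ \bfA^\calF_{(\langle i,1\rangle, j)}\end{bmatrix} = (\bfQ^* \otimes \bfI_\zeta) \begin{bmatrix}\bfA_{(\langle i,0\rangle, j)} \\ \bfA_{(\langle i,1\rangle, j)}\end{bmatrix},
\end{align*}
and similarly for $\bfB$. Since $\bfQ^* \otimes \bfI_\zeta$ is unitary and $\hat{\bfA}_{\langle k,\cdot \rangle}, \hat{\bfB}_{\langle k,\cdot \rangle}$ are real, $\hat{\bfA}_k^T \hat{\bfB}_k$ equals the Hermitian inner product of the transformed encoded matrices. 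Using $\bfR_\theta = \bfQ \Lambda \bfQ^*$ with $\Lambda = \text{diag}(\omega_q, \omega_q^{-1})$ together with the mixed-product property of the Kronecker product (as in steps $(a)$--$(d)$ of the example), every rotation factor $\bfR_{\pm\theta}^{k\alpha}$ is replaced by $\Lambda^{\pm k\alpha}$ acting on the transformed blocks. Expanding the resulting double sum, each index choice $(i_1, j_1, i_2, j_2)$ contributes a term proportional to $z_k^{\gamma}$ (from the $l=0$ blocks) or $z_k^{-\gamma}$ (from the $l=1$ blocks), where
\begin{align*}
\gamma = (j_1 + j_2 k_A)\,p + (i_1 - i_2).
\end{align*}
As $(j_1, j_2)$ ranges over $[k_A]\times[k_B]$, the multiple $(j_1 + j_2 k_A)p$ covers $\{0, p, \dots, (k_Ak_B - 1)p\}$, and $i_1 - i_2$ covers $\{-(p-1), \dots, p-1\}$; the union of the exponents contributed by the two sign choices is the consecutive set $\{-(pk_Ak_B - 1), \dots, pk_Ak_B - 1\}$, of size $2pk_Ak_B - 1$.

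Next, I would identify the target coefficients. For each $(j_1^*, j_2^*) \in [k_A] \times [k_B]$ set $\mu = (j_1^* + j_2^* k_A)p$. The constraint $|i_1 - i_2| \le p - 1$ forces $i_1 = i_2 = i$ whenever $\gamma$ is a multiple of $p$, and mixed-base uniqueness of $j_1 + j_2 k_A$ with $j_1 \in [k_A]$ then forces $(j_1, j_2) = (j_1^*, j_2^*)$. Consequently, for $(j_1^*, j_2^*) \neq (0,0)$ the $z_k^\mu$-coefficient receives contributions only from the $l = 0$ terms and the $z_k^{-\mu}$-coefficient only from the $l = 1$ terms, while for $(j_1^*, j_2^*) = (0, 0)$ both $l$ values contribute to the $z_k^0$-coefficient. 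Summing the appropriate coefficient pair and applying the unitarity identity
\begin{align*}
\sum_{l \in \{0,1\}} \bfA^{\calF *}_{(\langle i, l\rangle, j_1^*)} \bfB^\calF_{(\langle i, l\rangle, j_2^*)} = \sum_{l \in \{0,1\}} \bfA^T_{(\langle i, l\rangle, j_1^*)} \bfB_{(\langle i, l\rangle, j_2^*)},
\end{align*}
which follows from $(\bfQ^* \otimes \bfI_\zeta)^* (\bfQ^* \otimes \bfI_\zeta) = \bfI$, yields exactly the block $[\bfA^T \bfB]_{j_1^*, j_2^*}$. Thus reconstructing the entire Laurent polynomial recovers $\bfA^T \bfB$.

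Finally, collecting responses from any $\tau = 2pk_Ak_B - 1$ workers indexed by $k_0, \dots, k_{\tau - 1}$ produces an interpolation system in the distinct unit-circle parameters $\omega_q^{k_0}, \dots, \omega_q^{k_{\tau-1}} \in \{1, \omega_q, \dots, \omega_q^{q-1}\}$ (distinctness uses $n \le q$). After multiplying through by $z_k^{pk_Ak_B - 1}$ to clear the negative exponents, this becomes an ordinary $\tau \times \tau$ Vandermonde system of the type covered by Theorem~\ref{thm:cond_no_vand}, whose invertibility establishes the threshold $\tau = 2pk_Ak_B - 1$ and whose worst-case condition number is bounded by $O(q^{q - 2pk_Ak_B + 1 + c_1})$. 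The main obstacle I expect is the coefficient-bookkeeping in the previous paragraph: verifying that contributions destined for distinct target blocks $(j_1^*, j_2^*)$ land in disjoint Laurent coefficients with no cross-contamination. This is the analog of the uniqueness-of-representation argument underlying entangled polynomial codes \cite{yu2018straggler}, adapted here to the Laurent-polynomial setting in which the target coefficients appear in conjugate pairs rather than as singletons.
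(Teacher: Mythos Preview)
Your proposal is correct and follows essentially the same approach as the paper: both arguments apply the unitary change of basis $\bfQ^*\otimes\bfI_\zeta$, expand the worker's output as a Laurent polynomial in $\omega_q^k$ with exponents $\pm\big((j_1+j_2k_A)p+(i_1-i_2)\big)$, separate the useful coefficients (exponents divisible by $p$) from the interference terms, verify that the support is the consecutive set $\{-(pk_Ak_B-1),\dots,pk_Ak_B-1\}$, and then invoke Theorem~\ref{thm:cond_no_vand} on the resulting $(2pk_Ak_B-1)$-point Vandermonde system with parameters on the unit circle. Your explicit verification that the $l=0$ and $l=1$ contributions to a nonzero multiple of $p$ cannot collide (because the $l=1$ exponents lie in $[-(pk_Ak_B-1),\,p-1]$ and the $l=0$ exponents in $[-(p-1),\,pk_Ak_B-1]$) is a cleaner way of stating what the paper handles via its ``useful vs.\ interference'' case split.
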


\begin{remark}
When $k_A = k_B = 1$, the threshold of this scheme matches the Entangled Polynomial code \cite{yu2018straggler} and the MatDot codes \cite{dutta2019optimal}, with the added advantage of excellent numerical stability.
\end{remark}
The decoding algorithm in this case requires more steps. It is specified in Algorithm \ref{Alg:Gen_MM_decoding_Embedding_Scheme}. In particular, it requires us to work with the inverse of a complex matrix (see (\ref{eq:g_vand_I})) which is essentially (upto a unitary scaling) a Vandermonde matrix with parameters on the unit circle. The underlying reason can be found by examining the proof of Theorem \ref{thm:genMatMat}. Thus, the decoding in this case is more expensive than prior methods that work exclusively with real valued decoding. Nevertheless, we emphasize that the worker node computation is still real-valued.

Suppose that the $k$-th worker node computes $\hat{\bfA}_k^T\hat{\bfB}_k$ and that the master node receives the computation results from any $\tau = 2pk_Ak_B-1$ worker nodes, which are denoted by ${i_0}, \cdots, {i_{\tau-1}}$. By (\ref{eq:generalize_sum}), the useful and interference terms can be decoded by computing the inverse of
\begin{align}
\bfG_{\calI}^{vand}=
\begin{bmatrix}
\omega_q^{-i_0(pk_Ak_B-1)} & \omega_q^{-i_1(pk_Ak_B-1)} &
\cdots &
\omega_q^{-i_{\tau-1}(pk_Ak_B-1)}\\
\omega_q^{-i_0(pk_Ak_B-2)} & \omega_q^{-i_1(pk_Ak_B-2)} &
\cdots &
\omega_q^{-i_{\tau-1}(pk_Ak_B-2)}\\
\vdots & \vdots & \ddots & \vdots\\
1 & 1 & \cdots & 1\\
\vdots & \vdots & \ddots & \vdots\\
\omega_q^{i_0(pk_Ak_B-1)} & \omega_q^{i_1(pk_Ak_B-1)} &
\cdots &
\omega_q^{i_{\tau-1}(pk_Ak_B-1)}
\end{bmatrix}. \label{eq:g_vand_I}
\end{align}
and using it to solve $\frac{r}{k_A} \times \frac{w}{k_B}$ systems of equations.
We point out that by multiplying $\bfG_{\calI}^{vand}$ from the right by the unitary matrix $[\text{diag}(\omega_q^{i_0}, \dots, \omega_q^{i_{\tau-1}})]^{pk_Ak_B-1}$, it can be seen that $\kappa(\bfG_{\calI}^{vand})$ is the same as the condition number of a Vandermonde matrix of size $(2pk_Ak_B-1) \times (2pk_Ak_B-1)$ with parameters $\omega_q^{i_0}, \dots, \omega_q^{i_{\tau-1}}$.

Finally, the result $\bfC = [\bfC_{i,j}]$, $i\in [k_A]$, $j\in [k_B]$ can be recovered since $\bfC_{i,j} = \sum_{u=0}^{p-1} (\bfA_{(\langle u, 0 \rangle, i)}^T \bfB_{(\langle u, 0 \rangle, j)} + \bfA_{(\langle u, 1 \rangle, i)}^T \bfB_{(\langle u, 1 \rangle, j)})=\left( \sum_{u=0}^{p-1} (\bfA_{(\langle u, 0 \rangle, i)}^{\calF *} \bfB_{(\langle u, 0 \rangle, j)}^{\calF}\right) + \left(\sum_{u=0}^{p-1} \bfA_{(\langle u, 1 \rangle, i)}^{\calF *} \bfB_{(\langle u, 1 \rangle, j)}^{\calF})\right)$.
The precise decoding algorithm is summarized in Algorithm \ref{Alg:Gen_MM_decoding_Embedding_Scheme}.

\begin{algorithm}[t]
	\caption{Decoding scheme for generalized distributed matrix-matrix  multiplication}
	\label{Alg:Gen_MM_decoding_Embedding_Scheme}
	\textbf{Input:} $\bfG_{\calI}^{vand}$  ({\it cf.} (\ref{eq:g_vand_I})) where $|\calI|=2pk_Ak_B-1$ (columns of $\bfG^{vand}$ corresponding to columns in $\calI$). Row vectors $\bfc$ corresponding to the observed values in each of the $\frac{r}{k_A} \times \frac{w}{k_B}$  system of equations.\\ 
	%
	%
\textbf{Output:} Decoded estimate $\tilde{\bfC}$ of $\bfA^T \bfB$.
\begin{algorithmic}
\STATE  {\bfseries 1. procedure:} Decode $\hat{\bfm}$ from $\bfc$
\STATE $\hat{\bfm} = [\hat{\bfm}_{-pk_Ak_B}, \cdots, \hat{\bfm}_0, \cdots, \hat{\bfm}_{pk_Ak_B} ]$ by $\hat{\bfm}=\bfc (\bfG_{\calI}^{vand})^{-1}$.
		\STATE {\bfseries end procedure}
\STATE {\bfseries 2. procedure:} Repeat above procedure for each of the $\frac{r}{k_A} \times \frac{w}{k_B}$ systems of equations. Upon appropriate indexing, we can form a matrix $\hat{\bfM}_{i,j}, -(k_A -1) \leq i\leq k_A-1, -(k_B -1) \leq j \leq k_B-1$ using the decoded components $\hat{\bfm}_{ip + jpk_A}$.
        \STATE {\bfseries end procedure}
\STATE  {\bfseries 3. procedure:} Recover $\tilde{\bfC}_{i_1,j_1}$ for $i_1 \in [k_A], j_1 \in [k_B]$.
 \IF{$i_1=0,j_1=0$}
 \STATE $\tilde{\bfC}_{0,0} = \hat{\bfM}_{0,0}$. 
 \ELSE
\STATE  $\tilde{\bfC}_{i_1,j_1} = \hat{\bfM}_{i_1,j_1}+\hat{\bfM}_{-i_1,-j_1}$.
\ENDIF

		\STATE {\bfseries end procedure}
\end{algorithmic}
\end{algorithm}
\noindent {\bf Complexity Analysis:} We note here that the decoding algorithm involving inverting a $(2pk_Ak_B-1)\times (2pk_Ak_B-1)$ complex Vandermonde matrix once and using the inverse to solve $\frac{r}{k_A}\times \frac{w}{k_B}$ systems of equations in Steps 1 and 2. Step 3 involves the sum of matrices of size $\frac{r}{k_A} \times \frac{w}{k_B}$ so its complexity is $O(rw)$. Thus, the overall decoding complexity is $O((2pk_Ak_B-1)^3 + rw + \frac{rw}{k_Ak_B}(2pk_Ak_B-1)^2)\approx O(p^3k_A^3k_B^3+ rwp^2k_Ak_B)$, where typically, $rw \gg pk_A^2k_B^2$.


\section{Comparisons and Numerical Experiments}
\label{sec:comps}
We now present a comparison of our techniques with other approaches in the literature. Towards this end we will compare the worst-case and the average condition numbers of the recovery matrices of the different schemes. Furthermore, we will also present corresponding normalized mean-squared-error (MSE) vs. SNR curves. For matrix-vector multiplication, let $\bfA^T \bfx$ denote the true value of the computation and $\widehat{\bfA^T \bfx}$ denote the result of using one of the discussed methods. The normalized MSE is defined as $\frac{||\bfA^T \bfx - \widehat{\bfA^T \bfx}||_{F}}{||\bfA^T \bfx||_F}$ (the notation $||\cdot||_F$ denotes the Frobenius norm of the matrix). Similarly, for the matrix-matrix multiplication, the normalized MSE is given by $\frac{||\bfA^T \bfB - \widehat{\bfA^T \bfB}||_{F}}{||\bfA^T \bfB||_F}$ where $\bfA^T \bfB$ is the true product and $\widehat{\bfA^T \bfB}$ is the decoded product using one of the methods. We will also report the computation threshold, worker computation times and decoding times for all the methods under consideration.

Suppose that the number of workers $n$ is odd, so that we can pick $q=n$ for the rotation matrix embedding. From a theoretical perspective our schemes have a worst-case condition number (over the different recovery submatrices) that is upper bounded by $O(q^{q -\tau +c_1})$ where $\tau$ is the recovery threshold. Equivalently, the worst-case condition number is upper bounded by $O(n^{s+c_1})$ (recall that $c_1 = 5.5$). We note here that this upper bound is definitely loose and our numerical experiments which will be presented shortly indicate that  the actual condition number values are much smaller. The work of \cite{FahimC19} shows a condition number upper bound which is $O(n^{2s})$. While this is larger than our upper bound for values of $s \geq 6$ we emphasize that our actual condition number values are much lower than \cite{FahimC19} even for $s \leq 6$.

As discussed previously, the scheme of \cite{yu2017polynomial} has condition numbers that are exponential in the recovery threshold $\tau$. This is corroborated by our numerical experiments as well. In Section VII of \cite{yu2018straggler}, the authors propose a finite field embedding approach  as a potential solution to the numerical issues encountered when operating over the reals. For this purpose the real entries will need to multiplied by large enough integers and then quantized so that each entry lies with $0$ and $p-1$ for a large enough prime $p$. All computations will be performed within the finite field of order $p$, i.e., by reducing the computations modulo-$p$. This technique requires that each $\bfA_i^T \bfB_j$ needs to have all its entries within $0$ to $p-1$, otherwise there will be errors in the computation. Let $\alpha$ be an upper bound on the absolute value of matrix entries in $\bfA$ and $\bfB$. Then, this means that the following dynamic range constraint (DRC),
\begin{align*}
\alpha^2 t \leq p-1
\end{align*}
needs to be satisfied. Otherwise, the modulo-$p$ operation will cause arbitrarily large errors.

We note here that the publicly available code for \cite{yu2017polynomial} uses $p=65537$. Now consider a system with $k_A = 3$, $k_B = 2$. Even for small matrices with $\bfA$ of size $400\times 200$, $\bfB$ of size $400\times 300$ and entries chosen as random integers between $0$ to $30$, the DRC is violated for $p=65537$ since $30^2 \times 400 > 65537$. In this scenario, the normalized MSE of the \cite{yu2017polynomial} approach is $0.7746$. In contrast, our method has a normalized MSE $\approx 2\times 10^{-28}$ for the same system with $k_A=3, k_B=2$.

When working over $64$-bit integers, the largest integer is $\approx 10^{19}$. Thus, even if $t \approx 10^5$, the finite-field embedding method can only support $\alpha \leq 10^7$. Thus, the range is rather limited. Furthermore, considering matrices of limited dynamic range is not a valid assumption. In machine learning scenarios such as deep neural networks, matrix multiplications are applied repeatedly, and the output of one stage serves as the input for the other. Thus, over several iterations the dynamic range of the matrix entries will grow. Consequently, applying this technique will necessarily incur quantization error.

The most serious limitation of the method comes from the fact the error in the computation (owing to quantization) is strongly dependent on the actual entries of the $\bfA$ and $\bfB$ matrices. In fact, we can generate structured integer matrices $\bfA$ and $\bfB$ such that the normalized MSE of their approach is exactly $1.0$. Towards this end we first pick the prime  $p=2147483647$ (which is much larger than their publicly available code) so that their method can support higher dynamic range. 
Next let $r=w=t=2000$. 
This implies that $\alpha$ has to be $\leq 1000$ by the dynamic range constraint. For $k_A = k_B =2$, the matrices have the following block decomposition.
\begin{align*}
\bfA
&= \begin{bmatrix}
\bfA_{0,0} & \bfA_{0,1}\\
\bfA_{1,0} & \bfA_{1,1}
\end{bmatrix}, \text{~and~}\\
\bfB &= \begin{bmatrix}
\bfB_{0,0} & \bfB_{0,1}\\
\bfB_{1,0} & \bfB_{1,1}
\end{bmatrix}.
\end{align*}
Each $\bfA_{i,j}$ and $\bfB_{i,j}$ is a matrix of size $1000 \times 1000$, with entries chosen from the following distributions. $\bfA_{0,0}$, $\bfA_{0,1}$ are distributed $\text{Unif}(0, \dots,9999)$ and $\bfA_{1,0}$, $\bfA_{1,1}$ distributed $\text{Unif}(0, \dots,9)$. Next, $\bfB_{0,0}$, $\bfB_{0,1}$ are distributed $\text{Unif}(0, \dots,9)$ and $\bfB_{1,0}, \bfB_{1,1}$ distributed $\text{Unif}(0, \dots,9999)$. In this scenario, the DRC requires us to multiply each matrix by $0.1$ and quantize each entry between $0$ and $999$. Note that this implies that $\bfA_{1,0}, \bfA_{1,1}, \bfB_{0,0}, \bfB_{0,1}$ are all quantized into zero submatrices since the entry in these four submatrices is less than $10$. We label the quantized matrices by the superscript $\tilde{\cdot}$. We emphasize that the finite field embedding technique {\it only} recovers the product of these quantized matrices. However, this product is
\begin{align*}
\tilde{\bfA}^T\tilde{\bfB} = \begin{bmatrix}
\tilde{\bfA}_{0,0} & \tilde{\bfA}_{0,1}\\
\mathbf{0} & \mathbf{0}
\end{bmatrix}^T \begin{bmatrix}
\mathbf{0} & \mathbf{0}\\
\tilde{\bfB}_{1,0} & \tilde{\bfA}_{1,1}
\end{bmatrix}=\mathbf{0}.
\end{align*}
Thus, the final estimate of the original product $\bfA^T\bfB$, denoted as $\widehat{\bfA^T\bfB}$ is the all-zeros matrix. This implies that the normalized MSE of their scheme is exactly $1.0$. Thus, the performance of the finite field embedding technique has a strong dependence on the matrix entries. We note here that even if we consider other quantization schemes or larger 64-bit primes, one can arrive at adversarial examples such as the ones shown above. Once again for these examples, our methods have a normalized MSE of at most $10^{-27}$.

\begin{table*}[t]
	\centering
	\caption{Performance of matrix inversion over a large prime order field in Python 3.7. The table shows the computation time for inverting a $\ell\times \ell$ matrix $\bfG$ over a finite field of order $p$. Let $ \widehat{\bfG^{-1}}$ denote the inverse obtained by applying the sympy function {\tt Matrix}($\bfG$) {\tt .inverse\_mod(p)}. 
The MSE is defined as  $\frac{1}{\ell}||\bfG \widehat{\bfG^{-1}} - \bfI||_{F}$. }
	\label{table:comp_mse_inverse}
	\begin{tabular}{|c|c|c|c|}
		\hline
		$\ell$ &$p$&{\small Computation Time (s)} & {\small MSE} \\
		\hline
		$9$ & $65537$ & $1.39$ & $0$\\
		\hline
		$12$ & $65537$ & $4.38$ & $0$\\
		\hline
		$15$ & $65537$ & $12.64$ & $0$\\
		\hline
		$9$ & $2147483647$ & $1.39$ & $0$\\
		\hline
		$12$ & $2147483647$ & $4.68$ & $1.8\times 10^9$\\
		\hline
		$15$ & $2147483647$ & $14.45$ & $4.2\times 10^9$\\
		\hline
	\end{tabular}
\end{table*}

In our experience, the finite field embedding technique also suffers from significant computational issues in implementation. Note that the technique requires the computation of the inverse matrix at the master node that is required for decoding the final result. We implemented this within the Python 3.7, sympy library (see \cite{inverseGit} Git hub repository). We performed experiments with $p=65537$ and $p=2147483647$. As shown in Table \ref{table:comp_mse_inverse}, for the smaller prime $p=65537$, the inverse computation is accurate up to $15 \times 15$ matrices; however, the computation time of the inverse is rather high and can dominate the overall execution time. On the other hand for the larger prime $p=2147483647$, the  error in in the computed inverse is very high for $12 \times 12$ and $15 \times 15$ matrices; the corresponding time taken is even higher. It is possible that very careful implementations can perhaps avoid these issues. However, we are unaware of any such publicly available code. To summarize, the finite field embedding technique suffers from major dynamic range limitations and associated computational issues and cannot be used to support real computations.

The work most closely related to ours is by \cite{FahimC19}, which demonstrates an upper bound of $O(q^{2(q -\tau)})$ on the worst-case condition number. It can be noted that this grows much faster than our upper bound in the parameter $q-\tau$. In numerical experiments, our worst-case condition numbers are much smaller than the work of \cite{FahimC19}; we discuss this in the upcoming Section \ref{sec:num_exp}. We note that the results in \cite{FahimC19} are given in terms of the condition number calculated using the Frobenius norm \footnote{For measuring the error in decoding a system of equations corresponding to $\bfM$ it is more natural to consider an induced norm, like the one we use.}, i.e., for matrix $\bfM$, they define $\kappa(\bfM) = ||\bfM||_{F} ||\bfM^{-1}||_F$. However, there are well-known relations between different matrix norms. In particular when $\bfM$ is of size $\ell \times \ell$, then $|| \bfM ||_2 \leq || \bfM ||_F \leq  \sqrt{\ell} || \bfM ||_2$. This allows us to compare the corresponding Frobenius-norm induced condition number as well.

Both our scheme and \cite{FahimC19} have the optimal threshold when $\bfA$ and $\bfB$ are only divided into block-columns ({\it cf.} Section \ref{sec:struc_matrices})). However, when the matrices are split across both rows and columns ({\it cf.} Section \ref{sec:gDMM}) the polynomial code approach of \cite{yu2018straggler} has a lower threshold of $pk_A k_B + p-1$, while our threshold is $2pk_Ak_B -1$; the thresholds match when $k_A = k_B = 1$.
The work of \cite{FahimC19} in this scenario, i.e., when $p > 1$ has a threshold denoted $\tau_{F-C}$ given by
\begin{align*}
\tau_{F-C} = 4k_Ak_Bp-2(k_Ak_B+pk_A+pk_B)+k_A+k_B+2p-1.
\end{align*}
It can be seen that if $k_A = 1 $ or $k_B = 1$, then $\tau_{F-C} \leq 2pk_Ak_B - 1$. However, when $k_A > 1$ and $k_B > 1$, simple analysis shows that our threshold $\leq \tau_{F-C}$ (see Claim \ref{claim:threshold_gen_new} in the Appendix).



Certain approaches \cite{mallick2019, DasTR18, DasR19, RamamoorthyTV19} only apply for matrix-vector multiplication and furthermore do not provide any explicit guarantees on the worst-case condition number.
Other approaches include the work of \cite{subramaniam2019random} which uses random linear encoding of the $\bfA$ and $\bfB$ matrices and the work of \cite{DasRV19} that uses a convolutional coding approach to this problem. Both these approaches require random sampling and do not have a theoretical upper bound on the worst-case condition number. However, for a given set of random choices, it is possible to numerically compute an upper bound on the worst-case condition number of \cite{DasRV19}. 

\begin{table*}[t]
	\centering
	\caption{Comparison for matrix-vector case with $n=31$, $\bfA$ has size $28000\times 19720$ and $\bfx$ has length $28000$ for the first four methods. For the All Ones Conv. and Random Conv. (from \cite{DasRV19}), $\bfA$ has 21924 columns.}
	\label{table:mat_vac_cond}
		\begin{tabular}{|c|c|c|c|c|c|c|}
			\hline
			{\small Scheme} &$\gamma_A$&$\tau$& {\small Avg. Cond. Num.} & {\small Max. Cond. Num.} & {\small Avg. Worker Comp. Time (s)} & {\small Dec. Time (s)} \\
			\hline
			Real Vand.&$1/29$& $29$ & $1.1\times 10^{13}$ & $2.9\times 10^{13}$ & $1.2\times 10^{-3}$& $9\times 10^{-5}$\\
			\hline
			Complex Vand.&$1/29$& $29$ & $12$ & $55$ & $2.9\times 10^{-3}$ & $2.8\times 10^{-4}$\\
			\hline
			Circ. Perm. Embed.&$1/28$& $29$ & $12$ & $55$ & $1.2\times 10^{-3}$ & $3.7\times 10^{-4}$\\
			\hline
			Rot. Mat. Embed.&$1/29$& $29$ & $12$ & $55$ & $1.3\times 10^{-3}$ & $10^{-4}$ \\
			\hline
			All Ones Conv. \cite{DasRV19} &$1/27$& $29$ & $1386$ & $5093$ & $1.4 \times 10^{-3}$ & $9 \times 10^{-4}$\\
			\hline
            Random Conv. \cite{DasRV19} &$1/27$& $29$ & $259$ & $4903$ & $1.4 \times 10^{-3}$ & $5 \times 10^{-4}$\\
            \hline
	\end{tabular}
\end{table*}

\begin{figure}[t]
	\centering
	\includegraphics[width=100mm]{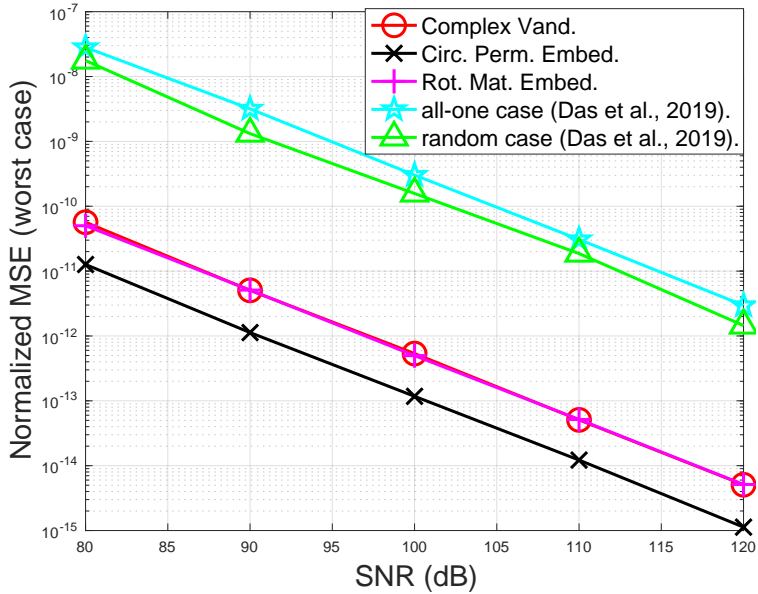}
	\caption{ {\small Consider matrix-vector $\bfA^T\bfx$ multiplication system with $n=31$, $\tau=29$. $\bfA$ has size $28000\times 19720$ and $\bfx$ has length $28000$.}}
	\label{fig:mat_vec_mul_error_rate}
\end{figure}

\subsection{Numerical Experiments}
\label{sec:num_exp}
The central point of our paper is that we can leverage the well-conditioned behavior of Vandermonde matrices with parameters on the unit circle while continuing to work with computation over the reals. We compare our results with the work of \cite{yu2017polynomial} (called ``Real Vandermonde''), a ``Complex Vandermonde'' scheme where the evaluation points are chosen from the complex unit circle, the work of \cite{DasRV19},\cite{FahimC19} and \cite{subramaniam2019random}. For the normalized MSE simulations below, we always pick the set of worker nodes that correspond to the worst-case condition number of the corresponding method. Additive Gaussian noise is added to the encoded matrix and vector in the matrix-vector case and both encoded matrices in the matrix-matrix case (details in \cite{stable2020Github}).

All experiments were run on the AWS EC2 system with a t2.2xlarge instance (for master node) and t2.micro instances (for slave nodes). The source code can be found in \cite{stable2020Github}.

\subsubsection{Matrix-vector case}
In Table \ref{table:mat_vac_cond}, we compare the {\it average} and {\it worst-case} condition number of the different schemes for matrix-vector multiplication. The system under consideration has $n=31$ worker nodes and a threshold specified by the third column (labeled as $\tau$). The evaluation points for \cite{yu2017polynomial} were uniformly sampled from the interval $[-1, 1]$ \cite{berrut2004}. The Complex Vandermonde scheme has evaluation points which are the 31-st root of unity. The \cite{FahimC19} and \cite{subramaniam2019random} schemes are not applicable for the matrix-vector case. It can be observed from Table \ref{table:mat_vac_cond} that both the worst-case and the average condition numbers of our scheme are over eleven orders of magnitude better than the Real Vandermonde scheme. Furthermore, there is an exact match of the condition number values for all the other schemes. This can be understood by following the discussion in Section \ref{sec:mat_vec_mul}. Specifically, our schemes have the property that the condition number only depends on the eigenvalues of corresponding circulation permutation matrix and rotation matrix respectively. These eigenvalues lie precisely within 31-th roots of unity. The methods of \cite{DasRV19} have some divisibility constraints on the number of columns in $\bfA$. Accordingly, we considered a matrix with 21924 columns for it. We performed 200 random trials for picking the best Random Conv. code \cite{DasRV19}. The worst-case condition number of these methods are still around one to two orders of magnitude higher than ours.

\begin{table*}[t]
	\centering
	\caption{Comparison for $\bfA^T\bfB$ matrix-matrix multiplication case with $n=31$,  $k_A=4$, $k_B=7$. $\bfA$ has size $8000\times 14000$, $\bfB$ has size $8400\times 14000$.}
	\label{table:mat_mat_cond}
	\begin{tabular}{|c|c|c|c|c|c|c|c|}
		\hline
		{\small Scheme} &$\gamma_A$&$\gamma_B$&$\tau$& {\small Avg. Cond. Num.} & {\small Max. Cond. Num.} & {\small Avg. Worker Comp. Time (s)} & {\small Dec. Time (s)} \\
		\hline
		Real Vand.&$1/4$& $1/7$&$28$ & $4.9\times 10^{12}$ & $2.3\times 10^{13}$ &$2.132$ &$0.407$ \\
		\hline
		Complex Vand.&$1/4$& $1/7$& $28$ & $27$ & $404$ & $8.421$& $1.321$\\
		\hline
		Rot. Mat. Embed.&$1/4$& $1/7$& $28$ & $27$ & $404$ & $2.121$ & $0.408$\\
		\hline
		Ortho-Poly \cite{FahimC19}&$1/4$& $1/7$& $28$ & $1449$ &  $8.3\times 10^4$ & $2.263$& $0.412$ \\
		\hline
		RKRP \cite{subramaniam2019random}&$1/4$& $1/7$ & $28$ & $255$ & $5.6\times 10^4$ & $2.198$ & $0.406$\\
        \hline
		Random Conv. \cite{DasRV19} &$1/3$& $1/6$ & $28$ & - & $\leq 3.4\times 10^4$ & - & -\\

		\hline
	\end{tabular}
\end{table*}

\begin{figure}[t]
	\centering
	\includegraphics[width=70mm]{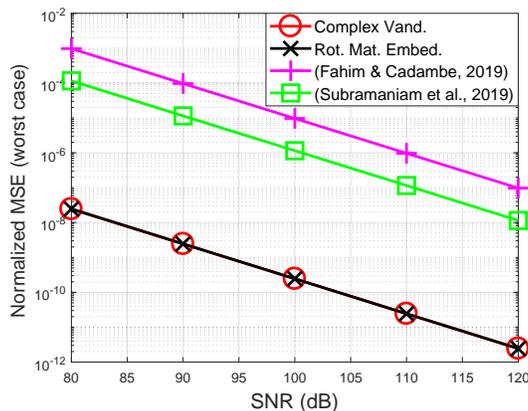}
	\vspace{-0.17in}
	\caption{{\small Consider matrix-matrix $\bfA^T\bfB$ multiplication system with $n=31$,  $k_A=4$, $k_B=7$, $\bfA$ is of size $8000\times 14000$, $\bfB$ is of $8400\times 14000$.}}
	\label{fig:mat_mat_mul_error_rate}
	\vspace{-0.2in}
\end{figure}

It can be observed that the decoding flop count for both matrix-vector and matrix-matrix multiplication is independent of $t$, i.e., in the regime where $t$ is very large the decoding time may be neglected with respect to the worker node computation time. Nevertheless, from a practical perspective it is useful to understand the decoding times as well.

When the matrix $\bfA$ is of dimension $28000 \times 19720$ and $\bfx$ is of length $28000$, the last two columns in Table \ref{table:mat_vac_cond} indicate the average worker node computation time and the master node decoding time for the different schemes. These numbers were obtained by averaging over several runs of the algorithm. It can be observed that the Complex Vandermonde scheme requires about twice the worker computation time as our schemes. Thus, it is wasteful of worker node computation resources. On the other hand, our schemes leverage the same condition number with computation over the reals. The decoding times of almost all the schemes are quite small. However, the Circulant Permutation Matrix scheme requires decoding time which is somewhat higher than the rotation matrix embedding even though we can use FFT based approaches for it. We expect that for much larger scale problems, the FFT based approach may be faster.

Our next set of results compare the mean-squared error (MSE) in the decoded result for the different schemes. To simulate numerical precision problems,
we added i.i.d Gaussian noise (of different SNRs) to the encoded submatrices of $\bfA$ and the vector $\bfx$ (the encoded submatrices of $\bfB$) in each worker node. The master node then performs decoding on the noisy vectors. The plots in Figure \ref{fig:mat_vec_mul_error_rate} correspond to the worst-case choice of worker nodes for each of the schemes. It can be observed that the Circulant Permutation Matrix Embedding has the best performance. This is because many of the matrices on the block-diagonal in (\ref{eq:diagonal_block_tilde_G_F}) (see Appendix \ref{sec:proof_circ_perm}) have well-behaved condition numbers and only a few correspond to the worst-case. We have not shown the results for the Real Vandermonde case here because the normalized MSE was very large.

\subsubsection{Matrix-Matrix case}
 In the matrix-matrix scenario we again consider a system with $n=31$ worker nodes and $k_A=4$ and $k_B=7$ so that the threshold $\tau = k_A k_B = 28$. Once again we observe ({\it cf.} Table \ref{table:mat_mat_cond}) that the worst-case condition number of the Rotation Matrix Embedding is about eleven orders of magnitude lower than the Real Vandermonde case.  Furthermore, the schemes of \cite{FahimC19} and \cite{subramaniam2019random} have a worst-case condition numbers that are two orders of magnitude higher than our scheme. For both \cite{subramaniam2019random} and \cite{DasRV19} schemes we performed $200$ random trials and picked the scheme with the lowest worst-case condition number. For \cite{DasRV19}, we only report the  upper bound on the worst-case condition number. Finding the actual worst-case recovery set takes a long time.

When the matrix $\bfA$ is of dimension $8000 \times 14000$ and $\bfB$ is of dimension $8000 \times 14000$, the worker node computation times and decoding times are listed in Table \ref{table:mat_mat_cond}. As expected the Complex Vandermonde scheme takes much longer for the worker node computations, whereas the Rotation Matrix Embedding, \cite{FahimC19} and \cite{subramaniam2019random} take about the same time. The decoding times are also very similar. As shown in Figure \ref{fig:mat_mat_mul_error_rate}, the normalized MSE of our Rotation Matrix Embedding scheme is much about five orders of magnitude lower than the scheme of \cite{FahimC19}. The normalized MSE of the Real Vandermonde case is very large so we do not plot it. Since we did not determine the worst-case recovery set for \cite{DasRV19}, we have not included the data and corresponding curves for it.

\begin{table*}[t]
	\centering
	\caption{Comparison for matrix-matrix $\bfA^T\bfB$ multiplication case with $n=17$,  $u_A=2$, $u_B=2$, $p=2$, $\bfA$ is of size $4000\times 16000$, $\bfB$ is of $4000\times 16000$.}
	\label{table:mat_mat_gen}
	\begin{tabular}{|c|c|c|c|c|c|c|c|}
		\hline
		{\small Scheme} &$\gamma_A$&$\gamma_B$&$\tau$& {\small Avg. Cond. Num.} & {\small Max. Cond. Num.} & {\small Avg. Worker Comp. Time (s)} & {\small Dec. Time (s)} \\
		\hline
		Rot. Mat. Embed.&$1/4$&$1/4$&$15$& $7$ & $22$&$2.23$ & $0.69$\\
		\hline
		Ortho-Poly \cite{FahimC19} &$1/4$&$1/4$& $15$&$10^4$ & $2.7\times 10^5$ & $2.23$ & $0.18$\\
		\hline
	\end{tabular}
\end{table*}

\begin{figure}[t]
	\centering
	\includegraphics[width=90mm]{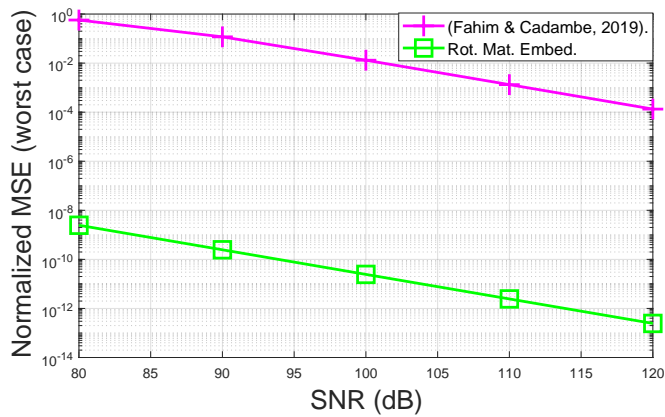}
	\caption{{\small Consider matrix-matrix $\bfA^T\bfB$ multiplication system with $n=18$, $u_A=2$, $u_B=2$, $p=2$, $\bfA$ is of size $4000\times 16000$, $\bfB$ is of $4000\times 16000$.}}
	\label{fig:mat_mat_mul_gen_error_rate}
\end{figure}

In the matrix-matrix multiplication scenario with $p\ge 2$, we consider a system with $n=17$ worker nodes and $u_A=2, u_B=2, p=2$. Note that in this case the threshold of \cite{yu2018straggler} is lower than our threshold and \cite{FahimC19}. Accordingly, we picked a setting where the our and \cite{FahimC19}'s threshold match and only compare these results.

We observe that the condition number of the Rotation Matrix Embedding scheme is about four orders of magnitude lower than \cite{FahimC19}. 
Figure \ref{fig:mat_mat_mul_gen_error_rate} shows that the normalized MSE of our Rotation Matrix Embedding scheme is much lower than \cite{FahimC19}. The Rotation Matrix Embedding scheme has higher decoding time since its decoding algorithm operates over the complex field.

%

\section{Conclusions and Future Work}
\label{sec:conclusions}
In this work we demonstrated that polynomial based schemes for coded computation suffer from serious numerical stability issues in practice. This stems from the provably bad conditioning of real Vandermonde matrices. We demonstrated a technique that exploits the properties of circulant and rotation matrices for coded computation. In essence, our method allows us to leverage the superior conditioning of complex Vandermonde matrices with parameters on the unit circle while still working with real computations at the worker nodes. The worst-case condition number of our recovery matrices is upper bounded by $O(n^{s+5.5})$ (where $n$- number of workers, $s$- number of stragglers) and our schemes have excellent performance in numerical experiments.

It is to be noted that our upper bound grows with the number of stragglers. In fact, it can be shown that if $s$ is a large fraction of $n$, then the condition number of the corresponding recovery matrices can be quite large even in the complex Vandermonde on unit circle case. It would be interesting to investigate coded computation schemes that continue to be numerically stable in the large $s$ regime.


\begin{thebibliography}{10}
\providecommand{\url}[1]{#1}
\csname url@samestyle\endcsname
\providecommand{\newblock}{\relax}
\providecommand{\bibinfo}[2]{#2}
\providecommand{\BIBentrySTDinterwordspacing}{\spaceskip=0pt\relax}
\providecommand{\BIBentryALTinterwordstretchfactor}{4}
\providecommand{\BIBentryALTinterwordspacing}{\spaceskip=\fontdimen2\font plus
\BIBentryALTinterwordstretchfactor\fontdimen3\font minus
  \fontdimen4\font\relax}
\providecommand{\BIBforeignlanguage}[2]{{%
\expandafter\ifx\csname l@#1\endcsname\relax
\typeout{** WARNING: IEEEtran.bst: No hyphenation pattern has been}%
\typeout{** loaded for the language `#1'. Using the pattern for}%
\typeout{** the default language instead.}%
\else
\language=\csname l@#1\endcsname
\fi
#2}}
\providecommand{\BIBdecl}{\relax}
\BIBdecl

\bibitem{lee2018speeding}
K.~Lee, M.~Lam, R.~Pedarsani, D.~Papailiopoulos, and K.~Ramchandran, ``Speeding
  up distributed machine learning using codes,'' \emph{IEEE Trans. on Inf.
  Theory}, vol.~64, no.~3, pp. 1514--1529, 2018.

\bibitem{ramamoorthyDTMag20}
A.~Ramamoorthy, A.~B. Das, and L.~Tang, ``Straggler-resistant distributed
  matrix computation via coding theory: Removing a bottleneck in large-scale
  data processing,'' \emph{{IEEE} Sig. Pro. Mag.}, vol.~37, no.~3, pp.
  136--145, 2020.

\bibitem{yu2018straggler}
Q.~Yu, M.~A. Maddah-Ali, and A.~S. Avestimehr, ``Straggler mitigation in
  distributed matrix multiplication: Fundamental limits and optimal coding,''
  \emph{IEEE Trans. on Inf. Theory}, vol.~66, no.~3, pp. 1920--1933, 2020.

\bibitem{dutta2019optimal}
S.~Dutta, M.~Fahim, F.~Haddadpour, H.~Jeong, V.~Cadambe, and P.~Grover, ``On
  the optimal recovery threshold of coded matrix multiplication,'' \emph{IEEE
  Trans. on Inf. Theory}, vol.~66, no.~1, pp. 278--301, 2019.

\bibitem{Higham}
N.~J. Higham, \emph{Accuracy and Stability of Numerical Algorithms}.\hskip 1em
  plus 0.5em minus 0.4em\relax SIAM: Society for Industrial and Applied
  Mathematics, 2002.

\bibitem{yu2017polynomial}
Q.~Yu, M.~A. Maddah-Ali, and A.~S. Avestimehr, ``Polynomial codes: an optimal
  design for high-dimensional coded matrix multiplication,'' in \emph{Proc. of
  Adv. in Neural Inf. Proc. Sys. (NIPS)}, 2017, pp. 4403--4413.

\bibitem{duttaCG16}
S.~Dutta, V.~Cadambe, and P.~Grover, ``Short-dot: Computing large linear
  transforms distributedly using coded short dot products,'' in \emph{Proc. of
  Adv. in Neural Inf. Proc. Sys. (NIPS)}, 2016, pp. 2100--2108.

\bibitem{yagle1995fast}
A.~E. Yagle, ``Fast algorithms for matrix multiplication using
  pseudo-number-theoretic transforms,'' \emph{IEEE Trans. on Sig. Proc.},
  vol.~43, no.~1, pp. 71--76, 1995.

\bibitem{Pan16}
V.~Pan, ``{How Bad Are Vandermonde Matrices?}'' \emph{SIAM Journal on Matrix
  Analysis and Applications}, vol.~37, no.~2, pp. 676--694, 2016.

\bibitem{tang2018bound}
L.~Tang, K.~Konstantinidis, and A.~Ramamoorthy, ``Erasure coding for
  distributed matrix multiplication for matrices with bounded entries,''
  \emph{IEEE Comm. Lett.}, vol.~23, no.~1, pp. 8--11, 2019.

\bibitem{RamamoorthyTV19}
A.~Ramamoorthy, L.~Tang, and P.~O. Vontobel, ``Universally decodable matrices
  for distributed matrix-vector multiplication,'' in \emph{IEEE Int. Symp. on
  Inf. Theory}, July 2019, pp. 1777--1781.

\bibitem{DasTR18}
A.~B. Das, L.~Tang, and A.~Ramamoorthy, ``{C}$^3${L}{E}{S} : Codes for coded
  computation that leverage stragglers,'' in \emph{IEEE Inf. Th. Workshop},
  2018.

\bibitem{DasR19}
A.~B. Das and A.~Ramamoorthy, ``Distributed matrix-vector multiplication: A
  convolutional coding approach,'' in \emph{IEEE Int. Symp. on Inf. Theory},
  July 2019, pp. 3022--3026.

\bibitem{DasRV19}
A.~B. Das, A.~Ramamoorthy, and N.~Vaswani, ``{Efficient and Robust Distributed
  Matrix Computations via Convolutional Coding},'' \emph{[Online] Available at:
  https://arxiv.org/abs/1907.08064}, 2019.

\bibitem{FahimC19}
M.~Fahim and V.~R. Cadambe, ``Numerically stable polynomially coded
  computing,'' \emph{[Online] Available at: https://arxiv.org/abs/1903.08326},
  2019.

\bibitem{subramaniam2019random}
A.~M. Subramaniam, A.~Heidarzadeh, and K.~R. Narayanan, ``{Random
  Khatri-Rao-Product Codes for Numerically-Stable Distributed Matrix
  Multiplication},'' in \emph{57th Annual Allerton Conference on Communication,
  Control, and Computing}, 2019, pp. 253--259.

\bibitem{das2020coded}
A.~B. Das and A.~Ramamoorthy, ``Coded sparse matrix computation schemes that
  leverage partial stragglers,'' \emph{[Online] Available at:
  https://arxiv.org/abs/2012.06065}, 2020.

\bibitem{GrayTCM06}
R.~M. Gray, ``Toeplitz and circulant matrices: A review,'' \emph{Foundations
  and Trends{\textregistered} in Communications and Information Theory},
  vol.~2, no.~3, pp. 155--239, 2006.

\bibitem{hornJ91}
R.~A. Horn and C.~R. Johnson, \emph{Topics in matrix analysis}.\hskip 1em plus
  0.5em minus 0.4em\relax {Cambridge University Press}, 1991.

\bibitem{inverseGit}
``Github repository for computing matrix inverse over prime order finite
  field,'' [Online] Available: \url{
  https://github.com/litangsky/inverseoverfield}.

\bibitem{mallick2019}
A.~Mallick, M.~Chaudhari, U.~Sheth, G.~Palanikumar, and G.~Joshi, ``Rateless
  codes for near-perfect load balancing in distributed matrix-vector
  multiplication,'' \emph{Proceedings of the ACM on Measurement and Analysis of
  Computing Systems}, vol.~3, no.~3, pp. 1--40, 2019.

\bibitem{stable2020Github}
``{Repository of numerically stable coded matrix computations via circulant and
  rotation matrix embeddings},'' [Online] Available: \url{
  https://github.com/litangsky/stableCodedComputing}.

\bibitem{berrut2004}
J.-P. Berrut and L.~N. Trefethen, ``{Barycentric Lagrange Interpolation},''
  \emph{SIAM Review}, vol.~46, no.~3, pp. 501--517, 2004.

\bibitem{Pan2013TR}
V.~Y. Pan, ``Polynomial evaluation and interpolation: Fast and stable
  approximate solution,'' 2013.

\end{thebibliography}



\newpage

\onecolumn


\appendix
\section{Appendix}

\subsection{Proof of Claim \ref{claim:fft_decoding_circ}}
\label{sec:proof_fft_dec}
\begin{proof}
	Note that Algorithm \ref{Alg:MatVecMul} is applied for recovering the corresponding entries of $\bfA_{i,j}^T \bfx$ for $i \in [k_A], j \in [\tilde{q}]$ separately. There are $r/(k_A(q-1))$ such entries.
	The complexity of computing a $N$-point FFT is $O(N \log N)$ in terms of the required floating point operations (flops). Computing the permutation does not cost any flops and its complexity is negligible as compared to the other steps. Step 1 of Algorithm \ref{Alg:MatVecMul} therefore has complexity $O(k_A \tilde{q} \log \tilde{q})$. In Step 2, we solve the degree $k_A-1$ polynomial interpolation, $(\tilde{q}-1)$ times. This takes $O((\tilde{q}-1) k_A \log^2 k_A)$ time \cite{Pan2013TR}. Finally, Step 3, requires applying the inverse permutation and the inverse FFT; this requires $O(k_A \tilde{q} \log \tilde{q})$ operations. Therefore, the overall complexity is given by
	\begin{align*}
		&\frac{r}{k_A(\tilde{q}-1)} \left( O(k_A \tilde{q} \log \tilde{q})  + O((\tilde{q}-1) k_A \log k_A^2) \right)\\
		&\approx O(r (\log \tilde{q} + \log^2 k_A)).
	\end{align*}
\end{proof}

\subsection{Proof of Theorem \ref{thm:circ_perm_scheme}}
\label{sec:proof_circ_perm}
\begin{proof}
	The arguments are conceptually similar to the proof of Theorem \ref{thm:rotmat_mv}. Suppose that the workers indexed by $i_0, \dots, i_{k_A -1}$ complete their tasks. The corresponding block-columns of $\bfG^{circ}$ can be extracted to form
	\begin{align*}
		\tilde{\bfG} =
		\begin{bmatrix}
			\bfI & \bfI & \cdots & \bfI\\
			\bfP^{i_0} &  \bfP^{i_1} &\cdots &\bfP^{i_{k_A-1}}\\
			\vdots & \vdots &\ddots &\vdots\\
			\bfP^{i_0(k_A-1)} & \bfP^{i_1(k_A-1)} & \cdots & \bfP^{i_{k_A-1}(k_A-1)}
		\end{bmatrix}.
	\end{align*}

As in the proof of Theorem \ref{thm:rotmat_mv} we can equivalently analyze the decoding by considering the system of equations 	
\begin{align*}
		\bfm \tilde{\bfG} &= \bfc,
\end{align*}
where $\bfm, \bfc \in \mathbb{R}^{1 \times k_A \tilde{q}}$  are row-vectors such that
	\begin{align*}
		\bfm&=[\bfm_{0},\cdots, \bfm_{k_A-1}]\\
		 &= [\bfm_{\langle 0,0 \rangle}, \cdots, \bfm_{\langle 0,\tilde{q}-1 \rangle}, \cdots, \cdots \bfm_{\langle k_A-1, 0 \rangle}, \cdots, \bfm_{\langle k_A-1, \tilde{q}-1 \rangle}], \text{~and} \\
		\bfc&=[\bfc_{i_0}, \cdots, \bfc_{i_{k_A-1}}] \\&= [\bfc_{\langle i_0,0 \rangle},\cdots, \bfc_{\langle i_0,\tilde{q}-1 \rangle} , \cdots, \cdots, \bfc_{\langle i_{k_A-1},0 \rangle}, \cdots, \bfc_{\langle i_{k_A-1}, \tilde{q}-1 \rangle}].
	\end{align*}
Note that not all variables in $\bfm$ are independent owing to (\ref{eq:precoded_mat}). Let $\bfm^\calF$ and $\bfc^\calF$ denote the $\tilde{q}$-point ``block-Fourier'' transforms of these vectors, i.e,
	\begin{align*}
		\bfm^\calF &= \bfm \begin{bmatrix}
			\bfW&&\\
			&\ddots&\\
			&&\bfW
		\end{bmatrix}\text{~and}\\
		\bfc^\calF &= \bfc \begin{bmatrix}
			\bfW&&\\
			&\ddots&\\
			&&\bfW
		\end{bmatrix},
	\end{align*}
where $\bfW$ is the $\tilde{q}$-point DFT matrix. Let $\tilde{\bfG}_{k,l} = \bfP^{i_l k}$ denote the $(k,l)$-th block of $\tilde{\bfG}$. Using the fact that $\bfP$ can be diagonalized by the DFT matrix $\bfW$, we have
	\begin{align*}
		\tilde{\bfG}_{k,l}  &= \bfW \text{diag}(1, \omega_{\tilde{q}}^{i_lk}, \omega_{\tilde{q}}^{2i_l k}, \dots, \omega_{\tilde{q}}^{(\tilde{q}-1)i_l k}) \bfW^{*}.
	\end{align*}
	Let $\tilde{\bfG}_{k,l}^\calF = \text{diag}(1, \omega_{\tilde{q}}^{i_l k}, \omega_{\tilde{q}}^{2i_l k}, \dots, \omega_{\tilde{q}}^{(\tilde{q}-1)i_l k})$, and $\tilde{\bfG}^\calF$ represent the $k_A \times k_A$ block matrix with $\tilde{\bfG}_{k,l}^\calF$ for $k,l = 0, \dots, k_A -1$ as its blocks. Therefore, the system of equations
	\begin{align*}
		\bfm \tilde{\bfG} &= \bfc,
	\end{align*}
can be further expressed as
	\begin{align*}
		\bfm \begin{bmatrix}
			\bfW&&\\
			&\ddots&\\
			&&\bfW
		\end{bmatrix}
        \begin{bmatrix}
			\bfW^*&&\\
			&\ddots&\\
			&&\bfW^*
		\end{bmatrix}
		\tilde{\bfG} \begin{bmatrix}
			\bfW&&\\
			&\ddots&\\
			&&\bfW
		\end{bmatrix}
		 &= \bfc \begin{bmatrix}
			\bfW&&\\
			&\ddots&\\
			&&\bfW
		\end{bmatrix},\\
		\implies [\bfm_{0}^\calF,\cdots, \bfm_{k_A-1}^\calF] \tilde{\bfG}^\calF &= [\bfc_{i_0}^\calF, \cdots, \bfc_{i_{k_A-1}}^\calF]
	\end{align*}
upon right multiplication by the matrix $\begin{bmatrix}
			\bfW&&\\
			&\ddots&\\
			&&\bfW
		\end{bmatrix}$. Next, we note that as each block within $\tilde{\bfG}^\calF$ has a diagonal structure, we can rewrite the system of equations as a block diagonal matrix upon applying an appropriate permutation ({\it cf.} Claim \ref{claim:diag_block_matrix} in Appendix \ref{sec:aux_claims}).  Thus, we can rewrite it as
	\begin{align}
		[\bfm_{0}^{\calF,\pi},\cdots, \bfm_{\tilde{q}-1}^{\calF,\pi}] \tilde{\bfG}^\calF_d &= [\bfc_{0}^{\calF,\pi}, \cdots, \bfc_{\tilde{q}-1}^{\calF,\pi}], \label{eq:permuted_eq_mat_vec}
	\end{align}
	where the permutation $\pi$ is such that $\bfm_{j}^{\calF,\pi} = [\bfm_{0,j}^\calF ~ \bfm_{1,j}^\calF ~ \dots ~ \bfm_{k_A-1,j}^\calF]$ and likewise $\bfc_{j}^{\calF,\pi} = [\bfc_{i_0,j}^\calF ~ \bfc_{i_1,j}^\calF ~ \dots ~ \bfc_{i_{k_A-1},j}^\calF]$. Furthermore, $\tilde{\bfG}^\calF_d$ is a block-diagonal matrix where each block is of size $k_A \times k_A$. Now, according to (\ref{eq:precoded_mat}), we have $\bfm_{i,0}^\calF = \sum_{j = 0}^{\tilde{q}-1} \bfm_{i,j}=0$ for $i = 0, \dots, k_A-1$, which implies that $\bfm_{0}^{\calF,\pi}$ is a $1 \times k_A$ zero row-vector and thus $\bfc_{0}^{\calF,\pi}$ is too.
	
	In what follows, we show that each of the other diagonal blocks of $\tilde{\bfG}^\calF_d$ is non-singular. This means that $[\bfm_{0}^\calF,\cdots, \bfm_{k_A-1}^\calF]$ and consequently $\bfm$ can be determined by solving the system of equations in (\ref{eq:permuted_eq_mat_vec}). Towards this end, we note that the $k$-th diagonal block $(1 \leq k \leq \tilde{q}-1)$ of $\tilde{\bfG}^\calF_d$, denoted by $\tilde{\bfG}^\calF_d[k]$ can be expressed as follows.
	
	\begin{align}
		\label{eq:diagonal_block_tilde_G_F}
		\tilde{\bfG}^\calF_d[k] = \begin{bmatrix}
			1 & 1 &  \cdots & 1\\
			\omega_{\tilde{q}}^{i_0 k} & \omega_{\tilde{q}}^{i_1 k} & \cdots & \omega_{\tilde{q}}^{i_{k_A-1}k}\\
			\vdots & \vdots & \ddots & \vdots\\
			\omega_{\tilde{q}}^{(k_A-1)i_0k} & \omega_{\tilde{q}}^{(k_A-1)i_1k}  & \cdots & \omega_{\tilde{q}}^{(k_A-1)i_{k_A-1}k}
		\end{bmatrix}.
	\end{align}
	The above matrix is a complex Vandermonde matrix with parameters $\omega_{\tilde{q}}^{i_0 k}, \dots, \omega_{\tilde{q}}^{i_{k_A-1} k}$. Thus, as long these parameters are distinct, $\tilde{\bfG}^\calF_d[k]$ will be non-singular. Note that we need the property to hold for $k = 1, \dots, \tilde{q}-1$. This condition can be expressed as
	\begin{align*}
		(i_\alpha - i_\beta) k \not\equiv 0 \pmod{\tilde{q}},
	\end{align*}
	for $i_\alpha, i_\beta \in \{0, \dots, n-1\}$ and $1 \leq k \leq \tilde{q}-1$. A  necessary and sufficient condition for this to hold is that $\tilde{q}$ is prime.
	An application of Theorem \ref{thm:cond_no_vand} shows that $\kappa(\tilde{\bfG}^\calF_d[k]) \leq O(\tilde{q}^{\tilde{q}-k_A + c_1})$ for all $k$. As decoding $\bfm$ is equivalent to solving systems of equations specified by $\tilde{\bfG}^\calF_d[k]$ for $1 \leq k \leq \tilde{q}-1$, the worst-case condition number is at most $O(\tilde{q}^{\tilde{q}-k_A+c_1})$.
\end{proof}

\subsection{Vandermonde Matrix condition number analysis}
\label{sec:proof_vand_cond_no}
Let $\bfV$ be a $m \times m$ Vandermonde matrix with parameters $s_0, s_1, \dots s_{m-1}$. We are interested in upper bounding $\kappa(\bfV)$.  Let $s_{+} = \max_{i=0}^{m-1} |s_i|$. Then, it is known that $||\bfV|| \leq m \max (1, s_{+}^{m-1})$ \cite{Pan16}. Finding an upper bound on $||\bfV^{-1}||$ is more complicated and we discuss this in detail below. Towards this end we need the definition of a Cauchy matrix.
\begin{definition}
	A $m \times m$ Cauchy matrix is specified by parameters $\bfs = [s_0 ~s_1~\dots~s_{m-1}]$ and $\bft = [t_0 ~t_1~\dots~t_{m-1}]$, such that its $(i,j)$-th entry
	\begin{align*}
	\bfC_{\bfs,\bft}(i,j) &= \left( \frac{1}{s_i - t_j}\right) \text{~for $i \in [m], j \in [m]$.}
	\end{align*}

\end{definition}

In what follows, we establish an upper bound on the condition number of Vandermonde matrices with parameters on the unit circle.

\noindent {\it Proof of Theorem \ref{thm:cond_no_vand}}
\begin{proof}
	Recall that $\omega_q = e^{\rm{i}\frac{2\pi}{q}}$ and $\omega_m = e^{\rm{i}\frac{2\pi}{m}}$ and define $t_j = f \omega_m^j, j = 0, \dots, m-1$ where $f$ is a complex number with $|f| = 1$.  We let $\bfC_{\bfs,f}$ denote the Cauchy matrix with parameters $\{s_0, \dots, s_{m-1}\}$ and $\{t_0, \dots, t_{m-1}\}$. Let $\bfW$ be the $m$-point DFT matrix. The work of \cite{Pan16} shows that 
	\begin{align*}
	\bfV^{-1} &= \text{diag}(f^{m-1-j})_{j=0}^{m-1} \sqrt{m}\bfW^* \text{diag}(\omega_m^{-j})_{j=0}^{m-1} \bfC_{\bfs,f}^{-1} \text{diag} \left(\frac{1}{s_j^m - f^m}\right)_{j=0}^{m-1}.
	\end{align*}
	It can be seen that the matrix $\text{diag}(f^{m-1-j})_{j=0}^{m-1} \bfW^* \text{diag}(\omega_m^{-j})_{j=0}^{m-1}$ is unitary. Therefore,
	\begin{align}
	||\bfV^{-1}|| &=  \sqrt{m} ||\bfC_{\bfs,f}^{-1} \text{diag} \left(\frac{1}{s_j^m - f^m}\right)_{j=0}^{m-1}|| \nonumber\\
	&\leq \sqrt{m} ||\bfC_{\bfs,f}^{-1}|| \times \left( \frac{1}{\min_{i=0}^{m-1} |s_i^m - f^m|} \right) \nonumber\\
    &\leq m^{1.5} \times (\max_{i',j'} |\bfC_{\bfs,f}^{-1}(i',j')|) \times \left( \frac{1}{\min_{i=0}^{m-1} |s_i^m - f^m|} \right), \label{eq:upper_bound_cauchy}
	\end{align}
	where the first inequality holds as the norm of a product of matrices is upper bounded by the products of the individual norms and second inequality holds since for any $\bfM$, we have $||\bfM|| \leq ||\bfM||_F$ .
	
	In what follows, we upper bound the RHS of (\ref{eq:upper_bound_cauchy}). Let $s(x)$ denote a function of $x$ so that $s(x) = \Pi_{i=0}^{m-1} (x - s_i)$. The $(i',j')$-the entry of $\bfC_{\bfs,f}^{-1}$ can be expressed as \cite{Pan16}
	\begin{align*}
    \bfC_{\bfs,f}^{-1}(i',j') &= (-1)^m s(t_{j'})(s_{i'}^m - f^m)/(s_{i'} - t_{j'}), \text{~so that}\\
    |\bfC_{\bfs,f}^{-1}(i',j')| &=  |s(t_{j'})| |s_{i'}^m - f^m|/|s_{i'} - t_{j'}| \\
	&\leq |s(t_{j'})| (|s_{i'}^m| + |f^m|)/|s_{i'} - t_{j'}| \\
	&= 2 |s(t_{j'})|/|s_{i'} - t_{j'}| \text{~~(since $|s_{i'}| = |f| = 1$)}.
	\end{align*}
	Let $\calM = \{1, \omega_q, \omega_q^2, \dots, \omega_q^{q-1}\} \setminus \{s_0, s_1, \dots, s_{m-1}\}$ denote the $q$-th roots of unity that are {\it not} parameters of $\bfV$. Note that
	\begin{align*}
	s(t_{j'}) & = \Pi_{i=0}^{m-1} (t_{j'} - s_i)\\
	&= \frac{x^q-1}{\Pi_{\alpha_j \in \calM} (x - \alpha_j)}\bigg{|}_{x = t_{j'}}, \text{~so that}\\
	|s(t_{j'})| &= \frac{|t^q_{j'} - 1|}{\Pi_{\alpha_j \in \calM} |t_{j'} - \alpha_j|}\\
	&\leq \frac{2}{\Pi_{\alpha_j \in \calM} |t_{j'} - \alpha_j|} \text{~(since $|t_{j'}| = 1$ and by the triangle inequality).}
	\end{align*}
	Thus, we can conclude that
	\begin{align}
    \max_{i',j'} |\bfC_{\bfs,f}^{-1}(i',j')| &\leq 4 \max_{i',j'} \frac{1}{\Pi_{\alpha_j \in \calM } |(t_{j'} - \alpha_j)|} \frac{1}{|s_{i'} - t_{j'}|}\\
	&= 4 \left(\frac{1}{\min_{i',j'} \Pi_{\alpha_j \in \calM } |(t_{j'} - \alpha_j)|} \frac{1}{|s_{i'} - t_{j'}|} \right). \label{eq:tmp_bd}
	\end{align}
	Note that in the expression above the $\alpha_j$'s and $s_{i'}$ are all points within $\Omega_q = \{1, \omega_q, \omega_q^2, \dots, \omega_q^{q-1}\}$.
	We choose $f=e^{{\rm{i}} \frac{\pi}{m}}$ so that $t_{j'} = f \omega_m^{j'} = e^{{\rm{i}} \frac{\pi}{m}} \omega_m^{j'}$. Now for any $i'$ and $j'$ we need to lower bound $\Pi_{\alpha_j \in \calM } |(t_{j'} - \alpha_j)| |s_{i'} - t_{j'}|$. Towards this end, we note that the distance between two points on the unit circle can be expressed as $2 \sin (\theta/2)$ if $\theta$ is the induced angle between them. Furthermore, we have $2 \sin(\theta/2) \geq 2 \theta/\pi$ as long as $\theta \leq \pi$.
	
	\noindent Let $d = q-m$. Then, for any choice of $t_{j'}$ we can consider lower bounds on the distances of $d+1$ points that lie on $\Omega_q$. It can be seen that the closest point to $t_{j'}$ that lies within $\Omega_q$ has an induced angle 
	\begin{align*}
	\bigg{|}\frac{2\pi \ell}{q} - \frac{2\pi (j' + \frac{1}{2})}{m}\bigg{|} \geq \frac{2\pi}{q m} \frac{1}{2} \geq \frac{\pi}{q^2} \text{~(since $q$ is odd and $q > m$).}
	\end{align*}
	Therefore, the corresponding distance is lower bounded by $2/q^2$.
	Similarly, the next closest distance is lower bounded by $2/q$, followed by $2(2/q), 3(2/q), \dots, d(2/q)$. Then, 
	\begin{align*}
	& \min_{i',j'} \left(\Pi_{\alpha_j \in \calM } |(t_{j'} - \alpha_j)| \right)  |s_{i'} - t_{j'}| \\
	&\geq 2/q^2 \times 2/q \times 4/q \times \dots \times 2d/q\\
	&=2^{d+1}d! \frac{1}{q^{d+2}}.
	\end{align*}
	Therefore,
	\begin{align*}
	\max_{i',j'} |\bfC_{\bfs,f}^{-1}(i',j')| &\leq \frac{q^{d+2}}{C_d}
	\end{align*}
	where $C_d = 2^{d-1}d!$ is a constant.
	Let the $i$-th parameter $s_i= e^{\rm{i} 2\pi \ell/q}$. Then,
	\begin{align*}
	|s_i^m - f^m| &= |e^{\rm{i} 2\pi \ell m/q} + 1|\\
	&= 2 |\cos (\pi \ell m/q)|.
	\end{align*}
	The term $\ell m$ can be expressed as $\ell m = \beta q + \eta$ for integers $\beta$ and $\eta$ such that $0 \leq \eta \leq q-1$. Now note that $\eta \neq q/2$ since by assumption $q$ is odd. Thus, $|\cos (\pi \ell m/q)|$ takes its smallest value when $\eta = (q+1)/2$ or $(q-1)/2$. In this case
	\begin{align*}
	|\cos (\pi \ell m/q)| &= \bigg{|}\cos\left(\beta \pi + \pi\frac{q+1}{2q}\right)\bigg{|}\\
	&\geq \bigg{|}\sin \left(\frac{\pi}{2q}\right)\bigg{|}\\
	&\geq \frac{1}{q}.
	\end{align*}
	
	Thus, we can upper bound the RHS of (\ref{eq:upper_bound_cauchy}) and obtain
	\begin{align*}
	||\bfV^{-1}|| &\leq m^{1.5} \frac{q^{d+2}}{C_d} q\\
	&\leq \frac{q^{d+4.5}}{C_d} \text{~(since $m < q$).}
	\end{align*}
	Finally, using the fact that $||V|| \leq m < q$. we obtain
	\begin{align*}
	\kappa(\bfV) &\leq \frac{q^{d+5.5}}{C_d}.
	\end{align*}
\end{proof}


\subsection{Proof of Theorem \ref{thm:genMatMat}}
\label{sec:proof_genMatMat}
\begin{proof}
	We proceed in a similar manner as in Example \ref{eg:gen_matmat}.
     Following the encoding rules ({\it cf.} Algorithm \ref{Alg:Gen_MM_Embedding_Scheme}) and worker computation rules ({\it cf.} (\ref{eq:gen_matmat_worker_comp})), we can analyze the computation in worker $k$ as follows. Let $(\bfQ^*\otimes \bfI_{\zeta})\begin{bmatrix}
	\bfA_{(\langle i, 0 \rangle, j)}\\
	\bfA_{(\langle i, 1 \rangle, j)}
	\end{bmatrix}=\begin{bmatrix}
	\bfA^\calF_{(\langle i, 0 \rangle, j)}\\
	\bfA^\calF_{(\langle i, 1 \rangle, j)}
	\end{bmatrix}$
	and
	$(\bfQ^* \otimes \bfI_{\zeta})\begin{bmatrix}
	\bfB_{(\langle i, 0 \rangle, j)}\\
	\bfB_{(\langle i, 1 \rangle, j)}
	\end{bmatrix}=\begin{bmatrix}
	\bfB^\calF_{(\langle i, 0 \rangle, j)}\\
	\bfB^\calF_{(\langle i, 1 \rangle, j)}
	\end{bmatrix}$.
	Let $\hat{\bfA}_k = \begin{bmatrix}
	\hat{\bfA}_{\langle k, 0 \rangle}\\
	\hat{\bfA}_{\langle k, 1 \rangle}
	\end{bmatrix}$ and  $\hat{\bfB}_k = \begin{bmatrix}
	\hat{\bfB}_{\langle k, 0 \rangle}\\
	\hat{\bfB}_{\langle k, 1 \rangle}
	\end{bmatrix}$.
     Then, we have
	\begin{align*}
	\hat{\bfA}_k^\calF &=  (\bfQ^*\otimes \bfI_{\zeta}) \hat{\bfA}_k = \sum_{i=0}^{p-1}\sum_{j=0}^{k_A-1} (\bfQ^* \bfR_{-\theta}^{k((j-1)p+i+1)}\bfQ \bfQ^* \otimes \bfI_{\zeta})\begin{bmatrix}
	\bfA_{(\langle i, 0 \rangle, j)} \\ \bfA_{(\langle i, 1 \rangle, j)}
	\end{bmatrix}\\
	& = \sum_{i=0}^{p-1}\sum_{j=0}^{k_A-1} ({\Lambda^*}^{k((j-1)p+i+1)}\otimes \bfI_{\zeta}) (\bfQ^* \otimes \bfI_{\zeta})\begin{bmatrix}
	\bfA_{(\langle i, 0 \rangle, j)} \\ \bfA_{(\langle i, 1 \rangle, j)}
	\end{bmatrix}\\
	&=\begin{bmatrix}
	\sum_{i=0}^{p-1}\sum_{j=0}^{k_A-1}{\omega_q^*}^{k((j-1)p+i+1)}\bfA_{(\langle i, 0 \rangle, j)}^\calF\\
	\sum_{i=0}^{p-1}\sum_{j=0}^{k_A-1}{\omega_q^*}^{-k((j-1)p+i+1)}\bfA_{(\langle i, 1 \rangle, j)}^\calF
	\end{bmatrix}, \text{~and}\\
	\hat{\bfB}_k^\calF &=  (\bfQ^* \otimes \bfI_{\zeta}) \hat{\bfB}_k = \sum_{i=0}^{p-1}\sum_{j=0}^{k_B-1} (\bfQ^* \bfR_{\theta}^{k(p-1-i+jpk_A)}\bfQ \bfQ^* \otimes \bfI_{\zeta}) \begin{bmatrix}
	\bfB_{(\langle i, 0 \rangle, j)} \\ \bfB_{(\langle i, 1 \rangle, j)}
	\end{bmatrix}\\
	&=\sum_{i=0}^{p-1}\sum_{j=0}^{k_B-1}(\Lambda^{k(p-1-i+jpk_A)}\otimes\bfI_{\zeta})(\bfQ^* \otimes \bfI_{\zeta})\begin{bmatrix}
	\bfB_{(\langle i, 0 \rangle, j)}^\calF \\ \bfB_{(\langle i, 1 \rangle, j)}^\calF
	\end{bmatrix}\\
	&=\begin{bmatrix}
	\sum_{i=0}^{p-1}\sum_{j=0}^{k_B-1}\omega_q^{k(p-1-i+jpk_A)}\bfB_{(\langle i, 0 \rangle, j)}^\calF\\
	\sum_{i=0}^{p-1}\sum_{j=0}^{k_A-1}\omega_q^{-k(p-1-i+jpk_A)}\bfB_{(\langle i, 1 \rangle, j)}^\calF
	\end{bmatrix}.
	\end{align*}
    This implies that
	\begin{align}
	\label{eq:generalize_sum}
	\begin{split}
	\hat{\bfA}_{k}^T 	\hat{\bfB}_{k} =&((\bfQ^*\otimes \bfI_{\zeta}) \hat{\bfA}_k)^* (\bfQ^*\otimes \bfI_{\zeta}) \hat{\bfB}_k\\
	=& \hat{\bfA}_k^{\calF *}  \hat{\bfB}_k^{\calF}\\
	=& \bigg(\sum_{i=0}^{p-1}\sum_{j=0}^{k_A-1}\omega_q^{k((j-1)p+i+1)}\bfA_{(\langle i, 0 \rangle, j)}^{\calF *}\bigg) \bigg(\sum_{i=0}^{p-1}\sum_{j=0}^{k_B-1}\omega_q^{k(p-1-i+jpk_A)}\bfB_{(\langle i, 0 \rangle, j)}^\calF\bigg)+\\
	&\bigg(\sum_{i=0}^{p-1}\sum_{j=0}^{k_A-1}\omega_q^{-k((j-1)p+i+1)}\bfA_{(\langle i, 1 \rangle, j)}^{\calF *}\bigg) \bigg(\sum_{i=0}^{p-1}\sum_{j=0}^{k_B-1}\omega_q^{-k(p-1-i+jpk_A)}\bfB_{(\langle i, 1 \rangle, j)}^\calF\bigg).
	\end{split}	
	\end{align}
	
	To better understand the behavior of the sum in (\ref{eq:generalize_sum}), we divide it into the following two cases.
	\begin{itemize}
		\item {\it Case 1: Useful terms.} The master node wants to recover $\bfC=\bfA^T\bfB=[\bfC_{i,j}]$, $i\in [k_A]$, $j\in [k_B]$, where each $\bfC_{i,j}$ is a block matrix of size $r/k_A \times w/k_B$. 
Note that $\bfC_{i,j}=\sum_{u=0}^{p-1}(\bfA_{(\langle u, 0 \rangle, i)}^T \bfB_{(\langle u, 0 \rangle, j)} +  \bfA_{(\langle u, 1 \rangle, i)}^T \bfB_{(\langle u, 1 \rangle, j)})$.

Moreover, note that
		\begin{align*}
		&\bfA_{(\langle u, 0 \rangle, i)}^{\calF *} \bfB_{(\langle u, 0 \rangle, j)}^{\calF} + \bfA_{(\langle u, 1 \rangle, i)}^{\calF *} \bfB_{(\langle u, 1 \rangle, j)}^{\calF}\\
		=&\begin{bmatrix}
		\bfA^\calF_{(\langle u, 0 \rangle, i)}\\
		\bfA^\calF_{(\langle u, 1 \rangle, i)}
		\end{bmatrix}^*
		\begin{bmatrix}
		\bfB^\calF_{(\langle u, 0 \rangle, j)}\\
		\bfB^\calF_{(\langle u, 1 \rangle, j)}
		\end{bmatrix}\\
		=&\bigg((\bfQ^* \otimes \bfI_{\zeta})\begin{bmatrix}
		\bfA_{(\langle u, 0 \rangle, i)}\\
		\bfA_{(\langle u, 1 \rangle, i)}
		\end{bmatrix}\bigg)^*
		\bigg((\bfQ^* \otimes \bfI_{\zeta}) \begin{bmatrix}
		\bfB_{(\langle u, 0 \rangle, j)}\\
		\bfB_{(\langle u, 1 \rangle, j)}
		\end{bmatrix}\bigg)\\
		=&\begin{bmatrix}
		\bfA_{(\langle u, 0 \rangle, i)}\\
		\bfA_{(\langle u, 1 \rangle, i)}
		\end{bmatrix}^*
		\begin{bmatrix}
		\bfB_{(\langle u, 0 \rangle, j)}\\
		\bfB_{(\langle u, 1 \rangle, j)}
		\end{bmatrix}\\
		=&\bfA_{(\langle u, 0 \rangle, i)}^T \bfB_{(\langle u, 0 \rangle, j)} + \bfA_{(\langle u, 1 \rangle, i)}^T \bfB_{(\langle u, 1 \rangle, j)}.
		\end{align*}
		It is easy to check that $\sum_{u=0}^{p-1} \bfA_{(\langle u, 0 \rangle, i)}^{\calF *} \bfB_{(\langle u, 0 \rangle, j)}^{\calF}$ is the coefficient of $\omega_q^{k(ip+jpk_A)}$ and $\sum_{u=0}^{p-1} \bfA_{(\langle u, 1 \rangle, i)}^{\calF *} \bfB_{(\langle u, 1 \rangle, j)}^{\calF}$ is the coefficient of $\omega_q^{-k(ip+jpk_A)}$. Thus, decoding and summing the corresponding coefficients, allows us to recover $\bfC_{i,j}$. Note further that the exponent of $\omega_q$ is a multiple of $p$.
		\item {\it Case 2: Interference terms.} The terms in (\ref{eq:generalize_sum}) with coefficient $\bfA_{(\langle u, l \rangle, i)}^{\calF *} \bfB_{(\langle v, l \rangle, j)}^{\calF}$ with $u\neq v$ are the \emph{interference terms} and they are the coefficients of $\omega_q^{\pm k(ip+u-v+jpk_A)}$. We conclude that the useful terms have no intersection with interference terms since $1 \leq |u-v|<p$. 
	\end{itemize}
	
	Next we determine the threshold of the proposed scheme. Towards this end, we find the maximum and minimum degree of $\hat{\bfA}_k^{\calF *}  \hat{\bfB}_k^{\calF}$ and then argue that (\ref{eq:generalize_sum}) has powers of $\omega_q$ that lie at consecutive multiples of $k$. The threshold can then be obtained by adding 1 to the difference of the maximum and minimum degrees divided by $k$. The maximum degree of $\hat{\bfA}_k^{\calF *}  \hat{\bfB}_k^{\calF}$ is the degree of the term
	\begin{align*}
	\omega_q^{k(pk_Ak_B-1)}\bfA_{(\langle p-1,0\rangle, k_A-1)}^{\calF *}\bfB_{(\langle 0,0\rangle, k_B-1)}^{\calF},
	\end{align*}
	and the minimum degree is the degree of the term
	\begin{align*}
	\omega_q^{-k(pk_Ak_B-1)}\bfA_{(\langle p-1,1\rangle, k_A-1)}^{\calF *} \hat{\bfB}_{(\langle 0,1\rangle, k_B-1)}^{\calF}.
	\end{align*}
	
	Next we argue that (\ref{eq:generalize_sum}) has powers of $\omega_q$ that are consecutive multiples of $k$ between the  maximum and minimum degree. Towards this end, we show that there always exist some terms in (\ref{eq:generalize_sum}) with degree $dk$, where $-pk_Ak_B+1\le d \le pk_Ak_B-1$.   We observe that the positive powers of ${\omega_q}^k$ in (\ref{eq:generalize_sum}) can be written as
	$\pm ((j_1-1)p+i_1+1+p-1-i_2+j_2pk_A)=\pm (j_2pk_A+j_1p+i_1-i_2)$, where $j_1\in [k_A], j_2\in [k_B], i_1, i_2\in [p]$. Consider a positive power $d\le pk_Ak_B-1$. We can always find a solution such that $j_2=\lfloor \frac{d}{pk_A}\rfloor$, $j_1= \lfloor \frac{d \text{~mod~}pk_A}{p} \rfloor$, $i_1-i_2=(d \text{~mod~}pk_A) \text{~mod~} p$. A similar result holds when $d$ is negative. We conclude that the threshold of the scheme is $2pk_Ak_B-1$. 
	
	Now suppose that $2pk_Ak_B-1$ workers return their results. Equation (\ref{eq:generalize_sum}) shows that the condition number of the corresponding decoding matrix is equivalent to (up to multiplication by an appropriately defined unitary matrix) a Vandermonde matrix whose parameters are a $(2pk_Ak_B-1)$- sized subset of $\{1, \omega_q, \omega_q^2, \dots, \omega_q^{q-1}\}$.
	Therefore, an application of Theorem \ref{thm:cond_no_vand} implies that the worst-case condition number is upper bounded by $O(q^{q-2pk_Ak_B+1+c_1})$.
	
\end{proof}

\subsection{Auxiliary Claims}
\label{sec:aux_claims}

\begin{definition}{\it Permutation Equivalence.}
	We say that a matrix $\bfM$ is \emph{permutation equivalent} to $\bfM^\pi$ if $\bfM^\pi$ can be obtained by permuting the rows and columns of $\bfM$. We denote this by $\bfM \asymp \bfM^\pi$.
\end{definition}

\begin{claim}
	\label{claim:diag_block_matrix}
	Let $\bfM$  be a $l_1 q \times l_2 q$ matrix consisting of blocks of size $q \times q$ denoted by $\bfM_{i,j}$ for $i \in [l_1], j \in [l_2]$ where each $\bfM_{i,j}$ is a diagonal matrix.
	Then, the rows and columns of $\bfM$ can be permuted to obtain $\bfM^{\pi}$ which is a block diagonal matrix where each block matrix is of size $l_1 \times l_2$ and there are $q$ of them.
\end{claim}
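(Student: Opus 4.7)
The plan is to construct the row and column permutations explicitly by ``interleaving'' the block index with the intra-block diagonal index, and then to verify that the resulting sparsity pattern is block-diagonal.

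First I would index rows of $\bfM$ by pairs $(i,k)$ with $i\in[l_1]$, $k\in[q]$, where $(i,k)$ refers to the $k$-th row of the $i$-th block row, and similarly index columns by pairs $(j,k')$ with $j\in[l_2]$, $k'\in[q]$. Under this indexing, the entry of $\bfM$ at row $(i,k)$ and column $(j,k')$ equals $\bfM_{i,j}(k,k')$, which by the diagonal assumption on $\bfM_{i,j}$ vanishes whenever $k\neq k'$.

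Next I would define the row permutation $\pi_r$ that sends the row labeled $(i,k)$ to position $k l_1 + i$, and the column permutation $\pi_c$ that sends the column labeled $(j,k')$ to position $k' l_2 + j$. In other words, the new ordering first groups by the intra-block index $k$ (respectively $k'$), and within each such group lists the blocks $i=0,\dots,l_1-1$ (respectively $j=0,\dots,l_2-1$). Let $\bfM^{\pi}$ denote the matrix obtained from $\bfM$ by applying $\pi_r$ to rows and $\pi_c$ to columns.

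Finally I would verify the block-diagonal structure: by construction, $\bfM^\pi(kl_1+i,\,k'l_2+j)=\bfM_{i,j}(k,k')$, which is zero whenever $k\neq k'$. Hence all nonzero entries of $\bfM^\pi$ lie in the $q$ diagonal blocks indexed by $k=k'\in[q]$, where the $k$-th such block is the $l_1\times l_2$ matrix whose $(i,j)$ entry is $\bfM_{i,j}(k,k)$. This is exactly the claimed block-diagonal form. No step here is a real obstacle; the only subtle point is keeping the indexing consistent between ``block index first'' and ``intra-block index first,'' which is precisely what $\pi_r$ and $\pi_c$ swap.
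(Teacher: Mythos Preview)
Your proof is correct and is essentially identical to the paper's own argument: the paper defines the row permutation $\pi_{l_1}(r)=l_1(r \bmod q)+\lfloor r/q\rfloor$ and the analogous column permutation, which in your pair notation is exactly $(i,k)\mapsto kl_1+i$ and $(j,k')\mapsto k'l_2+j$. The verification that off-diagonal entries vanish because $\bfM_{i,j}(k,k')=0$ for $k\neq k'$ is the same in both proofs.
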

\begin{proof}
	For an integer $a$, let $(a)_{q}$ denote $a \bmod q$. In what follows, we establish two permutations
	\begin{align*}
	\pi_{l_1}(i)&=l_1(i)_{q}+\lfloor i/q\rfloor, 0\le i<l_1q, \text{~and}\\
	\pi_{l_2}(j)&=l_2(j)_{q}+\lfloor j/q\rfloor, 0\le j<l_2q
	\end{align*}
	and show that applying row-permutation $\pi_{l_1}$ and column-permutation $\pi_{l_2}$ to $\bfM$ will result in a block diagonal matrix $\bfM^{\pi}$.
	
	We observe that $(i, j)$-th entry in $\bfM$ is the $((i)_q, (j)_q)$-th entry in the block $\bfM_{\lfloor i/q\rfloor, \lfloor j/q\rfloor}$. Under the applied permutations the $(i,j)$-th entry in $\bfM$ is mapped to $(l_1(i)_q+\lfloor i/q\rfloor, l_2(j)_q+\lfloor j/q\rfloor)$-entry in $\bfM^\pi$. Recall that $\bfM_{\lfloor i/q\rfloor, \lfloor j/q\rfloor}$ is a diagonal matrix which implies that for $(i)_q\neq (j)_q$, the $(l_1(i)_q+\lfloor i/q\rfloor, l_2(j)_q+\lfloor j/q\rfloor)$ entry in $\bfM^{\pi}$ is $0$. Therefore $\bfM^\pi$ is a block diagonal matrix with $q$ blocks of size $l_1 \times l_2$.
\end{proof}

\begin{example}
	Let $l_1=2, l_2=3, q=2$. Consider a $4\times 6$ matrix $\bfM$ which consists of diagonal matrices $\bfM_{i,j}$ of size $2\times 2$. For $0 \leq i \leq 1, 0 \leq j \leq 2$
	\begin{align*}
	\bfM &= \begin{bmatrix}
	\bfM_{0,0} & \bfM_{0,1} & \bfM_{0,2}\\
	\bfM_{1,0} & \bfM_{1,1} & \bfM_{1,2}
	\end{bmatrix}\\
	&=\begin{bmatrix}
	1 & 0 & 1 & 0 & 1 & 0\\
	0 & 1 & 0 & 1 & 0 & 1\\
	1 & 0 & \omega_q & 0 & \omega_q^2 & 0\\
	0 & 1 & 0 & \omega_q^{-1} & 0 & \omega_q^{-2}
	\end{bmatrix}.
	\end{align*}
	
	We use row permutation $\pi_{\text{row}} = (0,2,1,3)$, which means $0, 1, 2, 3$-th row of $\bfM$ permutes to $0, 2, 1, 3$-th row. Similarly, the column permutation is $\pi_{\text{col}} = (0,3,1,4,2,5)$. Thus, $\bfM^\pi$ becomes 
	\begin{align*}
	\bfM^{\pi} = \begin{bmatrix}
	1 & 1 & 1 & & &\\
	1 & \omega_q & \omega_q^2 & & & \\
	& & & 1 & 1 & 1\\
	& & & 1 & \omega_q^{-1} & \omega_q^{-2}
	\end{bmatrix}.
	\end{align*}
\end{example}

\begin{claim}
	\label{claim:simplified_proof_kron_inv}
\begin{itemize}
\item[(i)] Let $a_0(z) = \sum_{j=0}^{\ell_a - 1} a_{j0} z^j$, $a_1(z) = \sum_{j=0}^{\ell_a - 1} a_{j1} z^{-j}$ and $b_{0}(z) = \sum_{j=0}^{\ell_b - 1} b_{j0} z^{j\ell_a}$, $b_{1}(z) = \sum_{j=0}^{\ell_b - 1} b_{j1} z^{-j\ell_a}$. Then, $a_{k_1}(z) b_{k_2}(z)$ for $k_1, k_2 = 0,1$ are polynomials that can be recovered from $\ell_a \ell_b$ distinct evaluation points in $\mathbb{C}$.
	
	Let $\bfD(z^j) = \text{diag}([z^j~ z^{-j}])$ and let
	\begin{align*}
	\bfX(z) &=
	\begin{bmatrix}
	\bfI_2\\
	\bfD(z)\\
	\vdots\\
	\bfD(z^{\ell_a -1})\\
	\end{bmatrix}
	\otimes
	\begin{bmatrix}
	\bfI_2\\
	\bfD(z^{\ell_a})\\
	\vdots\\
	\bfD(z^{\ell_a(\ell_b -1)})\\
	\end{bmatrix}.
	\end{align*}
	Then, if $z_i$'s are distinct points in $\mathbb{C}$, the matrix
	\begin{align*}
	[\bfX(z_1) | \bfX(z_2) | \dots | \bfX(z_{\ell_a \ell_b})],
	\end{align*}
	is nonsingular.

\item[(ii)] The matrix $[ \bfX_{i_0} | \bfX_{i_1} | \dots | \bfX_{i_{\tau-1}}]$ (defined in the proof of Theorem \ref{thm:matMatRotEmbed}) is permutation equivalent to a block-diagonal matrix with four blocks each of size $\tau \times \tau$. Each of these blocks is a Vandermonde matrix with parameters from the set $\{1, \omega_q, \omega_q^2, \dots, \omega_q^{q-1}\}$.

\end{itemize}
\end{claim}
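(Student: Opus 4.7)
\textbf{Proof plan for Claim \ref{claim:simplified_proof_kron_inv}.} My approach is to exploit the diagonal structure of $\bfD(z^j)=\text{diag}(z^j,z^{-j})$ to reveal a sparsity pattern that, after suitable row and column permutations, turns the concatenated matrix into a block-diagonal matrix with four independent Vandermonde-type blocks. Label the $4\ell_a\ell_b$ rows of $\bfX(z)$ by pairs $((i_1,r_1),(i_2,r_2))$ with $i_1\in[\ell_a]$, $i_2\in[\ell_b]$, $r_1,r_2\in\{0,1\}$, inherited from the two Kronecker factors, and the four columns by $(k_1,k_2)\in\{0,1\}^2$. Because $\bfD(z^j)$ is diagonal, a direct expansion of the Kronecker product yields
\[
\bfX(z)_{((i_1,r_1),(i_2,r_2)),(k_1,k_2)}=\delta_{r_1,k_1}\,\delta_{r_2,k_2}\,z^{(-1)^{k_1}i_1+(-1)^{k_2}i_2\ell_a},
\]
so column $(k_1,k_2)$ is supported only on rows with $(r_1,r_2)=(k_1,k_2)$.

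Next I would apply a row permutation to $M:=[\bfX(z_1)\mid\cdots\mid\bfX(z_{\ell_a\ell_b})]$ grouping together all rows with a common $(r_1,r_2)$, together with a column permutation grouping across the $\bfX(z_i)$'s the copies of $(k_1,k_2)$. The sparsity observation then forces $M$ into block-diagonal form with four $\ell_a\ell_b\times\ell_a\ell_b$ blocks $\bfV_{(k_1,k_2)}$, whose $((i_1,i_2),i)$-th entry is $z_i^{(-1)^{k_1}i_1+(-1)^{k_2}i_2\ell_a}$. For each $(k_1,k_2)$ I will check that the exponents $(-1)^{k_1}i_1+(-1)^{k_2}i_2\ell_a$ as $(i_1,i_2)$ ranges over $[\ell_a]\times[\ell_b]$ are $\ell_a\ell_b$ consecutive integers: distinctness is a one-line argument reducing modulo $\ell_a$, and consecutiveness follows by counting. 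Sorting the rows in increasing order of exponent and right-multiplying by $\text{diag}(z_i^{-e_{\min}})$ then turns each block into an honest $\ell_a\ell_b\times\ell_a\ell_b$ Vandermonde matrix in $z_1,\ldots,z_{\ell_a\ell_b}$, which is invertible because the $z_i$ are distinct. This establishes nonsingularity in part~(i); the Laurent-polynomial recovery statement is equivalent, since inverting this system is precisely what recovers the coefficients $a_{j_1k_1}b_{j_2k_2}$ from the four products evaluated at the $z_i$.

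Part~(ii) is then an immediate specialization of part~(i). Taking $\ell_a=k_A$, $\ell_b=k_B$, $\tau=k_Ak_B$, and $z_j=\omega_q^{i_j}$ (the eigenvalues appearing in $\Lambda^{i_j}$ from the eigendecomposition of $\bfR_\theta^{i_j}$ with $\theta=2\pi/q$), the $z_j$'s are distinct points in $\{1,\omega_q,\ldots,\omega_q^{q-1}\}$, because the worker indices $i_0,\ldots,i_{\tau-1}$ are distinct elements of $[n]\subseteq[q]$. Each of the four blocks is therefore permutation equivalent to a $\tau\times\tau$ Vandermonde matrix whose parameters lie among the $q$-th roots of unity, which matches the hypothesis of Theorem~\ref{thm:cond_no_vand} exactly.

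The main obstacle I anticipate is purely bookkeeping: explicitly writing down the row and column permutation maps so the block-diagonal structure emerges cleanly, since the Kronecker product mixes the two factor indices in a way that must be reconciled with the natural $((r_1,r_2),(k_1,k_2))$ labeling. A subtler point is that the column scaling $\text{diag}(z_i^{-e_{\min}})$ used to convert the shifted-exponent block into a standard Vandermonde matrix has unit-modulus entries once the $z_i$'s sit on the unit circle (as in part~(ii)), so it preserves condition number---a fact tacitly relied upon when Theorem~\ref{thm:cond_no_vand} is applied to these blocks inside the proof of Theorem~\ref{thm:matMatRotEmbed}.
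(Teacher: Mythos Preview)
Your proposal is correct and rests on the same structural insight as the paper---the diagonal form of $\bfD(z^j)$ forces a sparsity pattern that, after permutation, yields four independent Vandermonde-type blocks---but your organization is different from and arguably cleaner than the paper's. The paper handles part~(i) on the polynomial side first: it expands each of the four products $a_{k_1}(z)b_{k_2}(z)$ explicitly, observes that each has $\ell_a\ell_b$ consecutive exponents, and then deduces invertibility of $[\bfX(z_1)\mid\cdots\mid\bfX(z_{\ell_a\ell_b})]$ from the fact that the coefficient vector is recoverable. For part~(ii) it starts over, applying Claim~\ref{claim:diag_block_matrix} to each Kronecker factor $\bfX_{l,A}$, $\bfX_{l,B}$ separately to get $2\times 2$ block-diagonal forms $\bfX_{l,A}^P$, $\bfX_{l,B}^P$, and then uses $\bfU\otimes\bfV\asymp\bfV\otimes\bfU$ repeatedly to untangle the Kronecker product into the four Vandermonde blocks.

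You instead prove the block-diagonal structure once, directly from the entry formula for $\bfX(z)$, and obtain both parts as immediate consequences. This buys you a shorter argument and avoids the repeated invocations of Claim~\ref{claim:diag_block_matrix} and the $\bfU\otimes\bfV\asymp\bfV\otimes\bfU$ shuffles. Your explicit acknowledgment that the column scaling $\mathrm{diag}(z_i^{-e_{\min}})$ is unitary on the unit circle is also a point the paper leaves implicit: strictly speaking, two of the four blocks are not Vandermonde under row/column permutation alone but require this unit-modulus diagonal scaling, so your remark closes a small gap in how the statement of~(ii) connects to Theorem~\ref{thm:cond_no_vand}.
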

\begin{proof}
	First we show that $a_{k_1}(z) b_{k_2}(z)$ for $k_1, k_2 = 0,1$ are polynomials that can be recovered from $\ell_a \ell_b$ distinct evaluation points in $\mathbb{C}$. Towards this end, these four polynomials can be written as
	\begin{align*}
	a_{0}(z)b_{0}(z)&=\sum_{i=0}^{\ell_a-1}\sum_{j=0}^{\ell_b-1} a_{i0}b_{j0}z^{i+j\ell_a},\\
	a_{0}(z)b_{1}(z)&=\sum_{i=0}^{\ell_a-1}\sum_{j=0}^{\ell_b-1} a_{i0}b_{j1}z^{i-j\ell_a},\\
	a_{1}(z)b_{0}(z)&=\sum_{i=0}^{\ell_a-1}\sum_{j=0}^{\ell_b-1} a_{i1}b_{j0}z^{-i+j\ell_a}, \text{~and}\\
	a_{1}(z)b_{1}(z)&=\sum_{i=0}^{\ell_a-1}\sum_{j=0}^{\ell_b-1} a_{i1}b_{j1}z^{-i-j\ell_a}.
	\end{align*}
	Upon inspection, it can be seen that each of the polynomials above has $\ell_a\ell_b$ consecutive powers of $z$. Therefore, each of these can be interpolated from $\ell_a\ell_b$ non-zero distinct evaluation points in $\mathbb{C}$.
	
	The second part of the claim follows from the above discussion. To see this we note that
	\begin{align*}
	[a_0(z) ~a_1(z)] &= [a_{00} ~a_{01}~a_{10}~a_{11}~ \dots a_{(\ell_a-1)0}~ a_{(\ell_a-1)1}] \begin{bmatrix}
	\bfI_2\\
	\bfD(z)\\
	\vdots\\
	\bfD(z^{\ell_a -1})\\
	\end{bmatrix} \text{~and}\\
	[b_0(z) ~b_1(z)] &= [b_{00} ~b_{01}~b_{10}~b_{11}~ \dots b_{(\ell_b-1)0} ~b_{(\ell_b-1)1}] \begin{bmatrix}
	\bfI_2\\
	\bfD(z^{\ell_a})\\
	\vdots\\
	\bfD(z^{\ell_a(\ell_b -1)})\\
	\end{bmatrix}.
	\end{align*}
	Furthermore, the four product polynomials under consideration can be expressed as
	\begin{align*}
	&[a_0(z) ~a_1(z)] \otimes [b_0(z) ~b_1(z)]\\
	&= \left( [a_{00} ~a_{01}~a_{10}~a_{11}~ \dots a_{(\ell_a-1)0}~ a_{(\ell_a-1)1}] \otimes [b_{00} ~b_{01}~b_{10}~b_{11}~ \dots b_{(\ell_b-1)0} ~b_{(\ell_b-1)1}] \right) \bfX(z).
	\end{align*}
	We have previously shown that all polynomials in $[a_0(z) ~a_1(z)] \otimes [b_0(z) ~b_1(z)]$ can be interpolated by obtaining their values on $\ell_a\ell_b$ non-zero distinct evaluation points. This implies that we can equivalently obtain $$\left( [a_{00} ~a_{01}~a_{10}~a_{11}~ \dots a_{(\ell_a-1)0}~ a_{(\ell_a-1)1}] \otimes [b_{00} ~b_{01}~b_{10}~b_{11}~ \dots b_{(\ell_b-1)0} ~b_{(\ell_b-1)1}] \right)$$ which means that $[\bfX(z_1) | \bfX(z_2) | \dots | \bfX(z_{\ell_a \ell_b})]$ is non-singular. This proves the statement in part (i).

The proof of the statement in (ii) is essentially an exercise in showing the permutation equivalence of several matrices by using Claim \ref{claim:diag_block_matrix} and the permutation equivalence properties of Kronecker products. For convenience, we define
\begin{align*}
\bfX_{l,A} &= \begin{bmatrix}
	\bfI \\
	\Lambda^{l}\\
	\vdots\\
	\Lambda^{l (k_A -1)}
	\end{bmatrix}, \text{~and}\\
\bfX_{l,B} &= 	\begin{bmatrix}
	\bfI \\
	\Lambda^{l k_A}\\
	\vdots\\
	\Lambda^{l k_A(k_B -1)}
	\end{bmatrix}
\end{align*}
so that $\bfX_l = \bfX_{l,A}\otimes \bfX_{l,B}$. Recall that we are analyzing the matrix $\bfX = [\bfX_{i_0} | \bfX_{i_1} | \dots | \bfX_{i_{\tau-1}}]$. An  application of Claim \ref{claim:diag_block_matrix} shows that (blank entries in the matrices below indicate zero blocks)
	\begin{align*}
\bfX_{l,A}\asymp \bfX_{l,A}^P = \begin{bmatrix}
\bfV_{l,A, 1} & \\
 &\bfV_{l,A, 2}
\end{bmatrix},~\text{and}~
\bfX_{l,B}\asymp  \bfX_{l,B}^P=\begin{bmatrix}
\bfV_{l,B, 1} &\\
		&\bfV_{l,B, 2}
	\end{bmatrix},
	\end{align*}
	where $\bfV_{l,A, 1} = [1, \omega_q^l,\cdots, \omega_q^{l(k_A-1)}]^T$, $\bfV_{l,A, 2} = [1, \omega_q^{-l},\cdots, \omega_q^{-l(k_A-1)}]^T$, $\bfV_{l,B, 1} = [1, \omega_q^{lk_A},\cdots, \omega_q^{lk_A(k_B-1)}]^T$, $\bfV_{l,B, 2} = [1, \omega_q^{-lk_A},\cdots, \omega_q^{-lk_A(k_B-1)}]^T$. Then we conclude that
	$\bfX \asymp \bfX^{P} = [\bfX_{i_0}^P|\bfX_{i_1}^P|\cdots | \bfX_{i_{\tau-1}}^P]$, where $\bfX_{l}^P=\bfX_{l,A}^P\otimes \bfX_{l,B}^P$. Next we show that
	\begin{align*}
	\bfX_l^P=\bfX_{l,A}^P\otimes \bfX_{l,B}^P\asymp \bfX_l^{P,\pi} = \begin{bmatrix}
	\bfV_{l,A,1}\otimes \bfV_{l,B,1} & & &\\
	& \bfV_{l,A,2}\otimes \bfV_{l,B,1} & &\\
	& & \bfV_{l, A,1}\otimes \bfV_{l,B,2} &\\
	& & & \bfV_{l, A,2}\otimes \bfV_{l,B,2}
	\end{bmatrix}.
	\end{align*}	
	By the definition of Kronecker product, we have
	\begin{align*}
	\bfX_{l,A}^P\otimes \bfX_{l,B}^P = \begin{bmatrix}
	\bfV_{l,A, 1} \otimes \bfX_{l,B}^P & \\
	& \bfV_{l,A, 2} \otimes \bfX_{l,B}^P
	\end{bmatrix}.
	\end{align*}
	Note that $\bfV_{l,A, i} \otimes \bfV_{l, B,j} \asymp \bfV_{l, B,j} \otimes \bfV_{l,A, i}$, then
	\begin{align*}
	& \bfV_{l, A,i}\otimes \bfX_{l,B}^P \\
	=& \bfV_{l, A,i}\otimes \begin{bmatrix}
	\bfV_{l,B,1} &\\
	& \bfV_{l, B,2}
	\end{bmatrix}\\
	\asymp & \begin{bmatrix}
	\bfV_{l,B,1} &\\
	& \bfV_{l, B,2}
	\end{bmatrix} \otimes \bfV_{l, A,i}\\
	= & \begin{bmatrix}
	\bfV_{l,B,1}\otimes \bfV_{l, A,i} & \\
	& \bfV_{l, B,2}\otimes \bfV_{l, A,i}
	\end{bmatrix}\\
	\asymp &\begin{bmatrix}
	\bfV_{l, A,i}\otimes \bfV_{l,B,1}  & \\
	& \bfV_{l, A,i}\otimes \bfV_{l, B,2}
	\end{bmatrix}.
	\end{align*}
	Thus, we can conclude that $\bfX_{l}^{P}\asymp \bfX_{l}^{P,\pi}$. In addition, we have
	\begin{align*}
	\bfV_{l,A,1}\otimes \bfV_{l,B,1} &= [1, \omega_q^{l},\cdots,\omega_q^{l(k_Ak_B-2)}, \omega_q^{l(k_Ak_B-1)}]^T,\\
	\bfV_{l,A,2}\otimes \bfV_{l,B,1} &= [\omega_q^{-l(k_A -1)}, \omega_q^{-l(k_A -2)}, \cdots, \omega_q^{-l}, 1, \omega_q^{l},\cdots, \omega_q^{l(k_A(k_B-1)-1)}, \omega_q^{lk_A(k_B-1)}]^T,\\
	\bfV_{l,A,1}\otimes \bfV_{l,B,2} &= [\omega_q^{-lk_A(k_B-1)}, \omega_q^{-l(k_A(k_B-1)-1)}, \cdots, \omega_q^{-l}, 1, \omega_q^{l}, \cdots,\omega_q^{l(k_A-2)},  \omega_q^{l(k_A-1)}]^T, \text{~and}\\
	\bfV_{l,A,2}\otimes \bfV_{l,B,2} &= [\omega_q^{-l(k_Ak_B-1)}, \omega_q^{-l(k_Ak_B-2)}, \cdots, \omega_q^{-l}, 1]^T.
	\end{align*}
	Finally applying Claim \ref{claim:diag_block_matrix} again we obtain the required result.
\end{proof}

\begin{claim}
\label{claim:threshold_gen_new}
Let $\tau_{\text{diff}} = 2k_Ak_Bp-2(k_Ak_B+pk_A+pk_B)+k_A+k_B+2p$ where $k_A, k_B$ and $p$ are positive integers with $p>1$. Then, $\tau_{\text{diff}}<0$ only if $k_A=1$ or $k_B=1$.
\end{claim}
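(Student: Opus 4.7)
\textbf{Proof plan for Claim \ref{claim:threshold_gen_new}.}
The plan is to prove the contrapositive: if $k_A \geq 2$, $k_B \geq 2$, and $p \geq 2$, then $\tau_{\text{diff}} \geq 0$. This will imply that $\tau_{\text{diff}} < 0$ forces $k_A = 1$ or $k_B = 1$.

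The key algebraic maneuver is to collect the cubic term $2 k_A k_B p$ together with the $-2 k_A k_B$ term and then use the identity $k_A k_B = (k_A - 1)(k_B - 1) + k_A + k_B - 1$. Specifically, I would first write $\tau_{\text{diff}} = 2 k_A k_B (p-1) - (2p-1)(k_A + k_B) + 2p$, and then substitute the above identity into the $(p-1)$ term. Routine bookkeeping collapses the $(k_A + k_B)$ coefficients (since $2(p-1) - (2p-1) = -1$) and the constant terms (since $-2(p-1) + 2p = 2$), giving the clean factored form
\begin{equation*}
\tau_{\text{diff}} = 2(p-1)(k_A-1)(k_B-1) - (k_A + k_B) + 2.
\end{equation*}

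With this identity in hand, the second step is a short estimate. Under the assumptions $k_A, k_B \geq 2$ and $p \geq 2$, each factor $(p-1), (k_A-1), (k_B-1)$ is at least $1$, so $2(p-1)(k_A-1)(k_B-1) \geq 2(k_A-1)(k_B-1)$. Writing $k_A + k_B - 2 = (k_A - 1) + (k_B - 1)$ and applying the bound $2(k_A-1)(k_B-1) \geq (k_A-1) + (k_B-1)$ (which holds since each of $k_A-1, k_B-1 \geq 1$, so $2xy \geq x+y$ when $x, y \geq 1$) yields $\tau_{\text{diff}} \geq 0$. As a sanity check, the extremal case $k_A = k_B = p = 2$ gives $\tau_{\text{diff}} = 2 \cdot 1 \cdot 1 \cdot 1 - 4 + 2 = 0$, so the inequality is tight and the argument is sharp.

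I do not anticipate any real obstacle: once the correct regrouping to $2(p-1)(k_A-1)(k_B-1) - (k_A+k_B) + 2$ is identified, the rest is an elementary inequality $2xy \geq x + y$ for $x, y \geq 1$. The only mildly delicate point is recognizing that the natural change of variables is $k_A - 1, k_B - 1, p - 1$ (so that the boundary cases $k_A = 1$ and $k_B = 1$ make the leading positive term vanish), which is suggested by the statement of the claim itself.
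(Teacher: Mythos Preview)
Your proof is correct, and in fact cleaner than the paper's. The key identity
\[
\tau_{\text{diff}} = 2(p-1)(k_A-1)(k_B-1) - (k_A+k_B) + 2
\]
is easily checked by expansion, and once this factorization is in hand your estimate $2xy \geq x+y$ for $x,y \geq 1$ finishes the argument in one line, with equality exactly at $k_A=k_B=p=2$.

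By contrast, the paper argues by contradiction: assuming $k_A,k_B > 1$ and $\tau_{\text{diff}} < 0$, it divides through by $k_Ak_Bp$ to obtain
\[
2 + \frac{1}{k_Bp} + \frac{1}{k_Ap} + \frac{2}{k_Ak_B} < 2\left(\frac{1}{p} + \frac{1}{k_B} + \frac{1}{k_A}\right),
\]
observes that this fails whenever all three parameters are $\geq 3$ (the right side is then $\leq 2$ while the left exceeds $2$), and then disposes of the remaining cases where some parameter equals $2$ by inspection. Your factorization avoids this residual case analysis entirely; the paper's homogenization trick is perhaps more discoverable without the right change of variables, but your route is both shorter and sharper, since it makes the equality case $k_A=k_B=p=2$ visible directly rather than buried in a case check.
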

\begin{proof}
If $k_A = 1$, then $\tau_{\text{diff}}= 1 - k_B < 0$ when $k_B > 1$; a similar argument holds when $k_B = 1, k_A > 1$. On the other hand when $k_A> 1$ and $k_B > 1$, suppose that
\begin{align}
2k_Ak_Bp+k_A+k_B+2p &< 2(k_Ak_B+pk_A+pk_B),\nonumber \\
\implies 2 + \frac{1}{k_B p} + \frac{1}{k_A p} + \frac{2}{k_A k_B} &< 2 \left( \frac{1}{p} + \frac{1}{k_B} + \frac{1}{k_A}\right) \text{~upon dividing by $k_A k_B p$}). \label{eq:homogenized}
\end{align}
We note that if $k_A, k_B$ and $p$ are all $\geq 3$, then we have a contradiction since the RHS is $\leq 2$, whereas the LHS is $> 2$. Thus, we only need to consider a limited number of cases where some of the values equal 2. These can be verified on a case by case basis.
\end{proof}

\end{document}